\newcommand{\add}[1]{\textcolor{blue}{#1}}
\newtheorem{theorem}{Theorem}[section]
\newtheorem{corollary}{Corollary}[section]
\newtheorem{proposition}{Proposition}[section]
\newtheorem{assumption}{Assumption}[section]
\newtheorem{definition}{Definition}[section]
\newtheorem{remark}{Remark}[section]
\numberwithin{figure}{section}
\begin{document}

\title[Short-maturity asymptotics for VIX and European options]
{Short-maturity asymptotics for VIX and European options in local-stochastic volatility models}

\author{Dan Pirjol}
\address{Stevens Institute of Technology, Hoboken, New Jersey, United States of America, dpirjol@gmail.com}

\author{Xiaoyu Wang}
\address{Hong Kong University of Science and Technology (Guangzhou), People's Republic of China, xiaoyuwang421@gmail.com}

\author{Lingjiong Zhu}
\address{Florida State University, Tallahassee, Florida, United States of America, zhu@math.fsu.edu}

\date{July 2024}

\keywords{VIX option, short-maturity asymptotics, local-stochastic volatility models}

\begin{abstract}
We derive the short-maturity asymptotics for European and VIX option prices in local-stochastic volatility models where the volatility follows a continuous-path Markov process. 
Both out-of-the-money (OTM) and at-the-money (ATM) asymptotics are considered. 
Using large deviations theory methods, the asymptotics for the OTM options are expressed as 
a two-dimensional variational problem, which is reduced to an extremal problem for a function of two real variables. This extremal problem is solved explicitly in an expansion in log-moneyness. We derive series expansions for the implied volatility for European and VIX options which should be useful for model calibration.
We give explicit results for two classes of local-stochastic volatility models relevant in practice, with Heston-type and SABR-type stochastic volatility.
The leading-order asymptotics for at-the-money options are computed in closed-form.
The asymptotic results reproduce known results in the literature for the Heston and SABR models and for the uncorrelated local-stochastic volatility model. 
The asymptotic results are tested against numerical simulations for a local-stochastic volatility model with bounded local volatility.
\end{abstract}

\maketitle

\baselineskip18pt

\section{Introduction}

The CBOE Volatility Index (VIX) is the main volatility benchmark of the U.S stock market, and provides a measure of the implied volatility of options with maturity of 30 days on the S\&P 500 index. It is defined in terms of an expectation in the risk-neutral measure $\mathrm{VIX}_t^2 = - \frac{2}{\tau} \mathbb{E}[\log(S_{t+\tau}/S_t)|\mathcal{F}_t]$, where $S_t$ is the equity index S\&P 500 at time $t$, and $\tau = 30$ days. The expectation is computed by replication in terms of market observed SPX option prices - see the VIX White Paper \cite{VIXwp} for the details of the methodology.
Since 2022, CBOE has started reporting also 
the CBOE 1-day Volatility Index \texttt{(VIX1D)} \cite{VIX1Dwp}, which is an analog of the VIX index computed using the PM-settled weekly SPX options which mature on the same day and the next day $(\tau=1$ day) as the index date.

The volatility index VIX is used by market participants to speculate on and hedge volatility risk. Several volatility derivatives which can be used for this purpose are traded on CBOE Options Exchange: futures contracts on VIX are traded since 2004, and 
VIX options are traded since 2006. In view of the popularity of these contracts, a great deal of work has been devoted in the literature to the valuation of volatility derivatives. 

The simplest approach for pricing volatility options is based on modeling the instantaneous variance $V_t$ as a stochastic process. Detemple and Osakwe (2000) \cite{Detemple2000} presented both European and American volatility options pricing under several popular diffusion models for $V_t$. Carr et al. (2005) \cite{Carr2005} presented results for volatility options under pure jump models with independent increments.
Sepp (2008) \cite{Sepp2008a,Sepp2008} priced volatility derivatives under a square root volatility model with jumps. 
Goard and Mazur (2013) \cite{Goard2013} derived
analytical results for VIX options in the 3/2 stochastic volatility model, and 
Baldeaux and Badran (2014) \cite{Baldeaux2014} extended this model for VIX option pricing by adding jumps. 
A survey of existing results on volatility derivatives (up to 2010) was given by Carr and Lee (2010) \cite{Carr2010}.


Recently, the pricing of volatility derivatives has been extended to stochastic volatility models where the volatility is driven by a fractional Brownian motion. Horvath et al. \cite{Horvath2020} introduced the class of modulated Volterra processes which can accommodate observed VIX smiles. Jacquier et al. (2021) \cite{Jacquier2021}
derive short-maturity SPX and VIX option prices for a wide class of multi-factor models of this type. 
An empirical analysis of the SPX and VIX option markets under rough and stochastic volatility models was given 
by R\o{}mer (2022) \cite{Romer2022}.

We mention also the martingale optimal transport approach which was applied to the problem of simultaneous calibration to the SPX and VIX implied volatility smiles in \cite{Guyon2020}. This approach is model-independent and aims to calibrate the joint distribution of the underlying (SPX) and of the VIX at several maturities of interest, under appropriate martingale constraints. 

In this paper the asset price $S_t$ is assumed to follow a local-stochastic volatility model under the
risk-neutral probability measure $\mathbb{Q}$:
\begin{eqnarray}\label{LSvol}
&& \frac{dS_t}{S_t} = \eta(S_t) \sqrt{V_t} dW_t + (r - q) dt \,, \\
&& \frac{dV_t}{V_t} =\sigma(V_t) dZ_t + \mu(V_t) dt \,,\nonumber
\end{eqnarray}
with initial conditions $S_0 > 0, V_0 > 0$, 
where $W_t, Z_t$ are correlated standard Brownian motions with correlation $\rho$, $r$ is the risk-free rate and $q$ is the dividend yield.
For simplicity we assume that the functions $\eta(\cdot), \sigma(\cdot):\mathbb{R}^{+}\rightarrow\mathbb{R}^{+}$ and $\mu(\cdot):\mathbb{R}^{+}\rightarrow\mathbb{R}$ are time-homogeneous.


The model (\ref{LSvol}) is a continuous path Markovian local-stochastic volatility model. 
It nests several popular models in the literature. When $\eta(x) \equiv 1$ it reduces to the usual stochastic volatility models:
for example 
Heston model \cite{Heston1993} ($\sigma(v) = \sigma v^{-\frac12}, \mu(v) = \mu/v - \theta$), Hull-White model \cite{HullWhite1987}
($\sigma(v) \equiv \sigma, \mu(v) \equiv \mu$). When $\eta(x) = x^{\beta - 1}$ and $\mu(v)\equiv 0$ it reduces to the 
SABR model \cite{SABR}.

The short maturity asymptotics of European option prices in local-stochastic models have been studied by Forde and Jacquier
\cite{Forde2011} (in the uncorrelated limit) using large deviations methods. Local-stochastic volatility models were studied by Pagliarani and Pascucci \cite{Pagliarani2013} and
Lorig, Pagliarani and Pascucci  \cite{Lorig2017}, using PDE methods. These methods extend similar small 
maturity expansions which were obtained for stochastic volatility models in \cite{SABR,Forde2009,Forde2012}.
Bompis and Gobet (2018) \cite{Bompis2018} used Malliavin calculus methods to derive short-maturity asymptotics for the implied volatility of European options in local-stochastic volatility models with Heston-type volatility.

Fewer results are available in the literature on the short-maturity asymptotics of the VIX options in local-stochastic volatility models. We mention the work of Forde and Smith \cite{Forde2023}, where the asymptotics of VIX options is obtained in an uncorrelated local-stochastic model. As an application, they obtain the first two terms in the expansion of the VIX smile around the ATM point in the uncorrelated CEV-Heston model. 
However, their model includes a $S_t$-dependent drift for the variance process, and is different from our model (\ref{LSvol}).
To our knowledge, the short-maturity of VIX options in the correlated local-stochastic volatility model has not been treated previously in the literature.

An alternative to local-stochastic volatility models which allows independent control of the European and VIX smiles are stochastic volatility models with local correlation. 
The short-maturity asymptotics of VIX options in such models was obtained by Forde and Smith \cite{Forde2023} in a Markovian setting.


The paper is organized as follows.
In Section~\ref{sec:technical} we fully specify the model under appropriate technical conditions, and give the 
definition of the VIX volatility index and of options on this index.

Section~\ref{sec:SPX} presents the short-maturity asymptotics of European (SPX) options under the model (\ref{LSvol}). 
The main result is Theorem~\ref{Thm:E}, which uses large deviations theory, see \cite{Dembo1998,VaradhanLD} for background, 
to establish the short-maturity asymptotics in terms of a rate function $J_E$,  which is given by the solution of a 
two-dimensional variational problem. 
After a careful application of the Cauchy-Schwarz inequality to obtain a lower bound for the variational problem and showing the 
lower bound can be achieved, we reduce the variational problem further to that of finding the extrema of a function of two real variables, which is feasible for practical applications. 
The function depends on an auxiliary function $H(y,z)$ which depends only on the volatility process, 
and is represented as the solution of a one-dimensional variational problem.

In Section~\ref{sec:VIX}, we give the short-maturity asymptotics for VIX options under the model (\ref{LSvol}). 
The short-maturity asymptotics is given in terms of a rate function $J_V$,  which is given again by an extremal problem of a function of two variables. The extremal problem depends on the same auxiliary function $H(y,z)$ as in the European options case.

Section~\ref{sec:H} studies the properties of the function $H(y,z)$. We give explicit solutions for $H(y,z)$ for two particular forms of the variance process $V_t$ which are often encountered in practice: (i) $\sigma(v)\equiv\sigma$ corresponding to log-normal volatility (SABR-type models), and (ii) $\sigma(v)=\sigma v^{-1/2}$ corresponding to a square-root volatility specification (Heston-type models).

In Section~\ref{sec:applications}, we present a few applications of the theoretical results obtained in the paper, and give explicit results for the asymptotic implied volatility of European and VIX options in 
local-stochastic volatility model with log-normal (SABR-type) and Heston-type volatility. 
We check explicitly that our results recover existing results in the literature in various limiting cases: uncorrelated local stochastic volatility, pure stochastic volatility models, and local volatility model. 

Finally, in Section~\ref{sec:numerical} we compare the theoretical predictions for the asymptotic short-maturity of European and VIX options in local-stochastic volatility models with a numerical simulation of this model using Monte Carlo methods. For this test we use the Tanh-model for the local volatility function $\eta(x)$, which was introduced previously in \cite{Forde2011}. We observe good agreement between the asymptotic results and the numerical simulation for sufficiently small option maturity.

We present a few basic concepts about large deviations theory in Appendix~\ref{app:LD}. 
The proofs of the results presented in the main text
are presented in Appendix~\ref{sec:proofs:main:results}, Appendix~\ref{proofs:sec:H} and Appendix~\ref{proofs:sec:applications}.
The full result for the ATM VIX implied volatility convexity in the local-stochastic volatility model with SABR-type volatility
will be presented in Appendix~\ref{app:VIXcvx}.

\section{Model Specification}\label{sec:technical}

We start by formulating technical conditions and assumptions for the parameters of the model (\ref{LSvol}).
First, we assume that $\eta(\cdot),\mu(\cdot)$ and $\sigma(\cdot)$ are
uniformly bounded.

\begin{assumption}\label{assump:bounded}
We assume that $\eta(\cdot),\mu(\cdot)$ and $\sigma(\cdot)$ are
uniformly bounded:
\begin{equation}
\sup_{x\in\mathbb{R}}\eta(x)\leq M_{\eta},
\qquad
\sup_{x\in\mathbb{R}}|\mu(x)|\leq M_{\mu},
\qquad
\sup_{x\in\mathbb{R}}\sigma(x)\leq M_{\sigma}.
\end{equation}
\end{assumption}

We also assume that $\eta(\cdot)$ is decreasing, 
which satisfies the leverage effect in finance.
More precisely, when $\eta(\cdot)$ is not a constant function,
we assume that $\eta(\cdot)$ is strictly decreasing
so that its inverse function $\eta^{-1}(\cdot)$ exists. 
We also provide the following assumptions on Lipschitz continuity.

\begin{assumption}
\label{assump:lip}
We assume that $\eta$ is $L$-Lipschitz and $\sigma$ is $L'$-Lipschitz.
\end{assumption}

In addition, we impose the following
assumption on the $\eta(\cdot)$ and $\sigma(\cdot)$
that appear in the diffusion terms of \eqref{LSvol}
that is needed for the small-time large deviations estimates for \eqref{LSvol}.

\begin{assumption}\label{assump:LDP}
We assume that $\inf_{x\in\mathbb{R}}\sigma(x)>0$
and $\inf_{x\in\mathbb{R}}\eta(x)>0$. Moreover, 
there exist some constants $M,\alpha>0$ such that
for any $x,y\in\mathbb{R}$,
$|\sigma(e^{x})-\sigma(e^{y})|\leq M|x-y|^{\alpha}$
and $|\eta(e^{x})-\eta(e^{y})|\leq M|x-y|^{\alpha}$.
\end{assumption}

Next, we will show that under the Assumption~\ref{assump:bounded}, 
all the moments of $V_{t}$ process are finite.

\begin{proposition}\label{prop:V:moments}
Under Assumption~\ref{assump:bounded}, for any $p\geq 1$, there exists some $C_{p}\in(0,\infty)$, such that $\max_{0\leq t\leq T}\mathbb{E}[(V_t)^p]\leq C_{p}$ for any sufficiently small $T>0$.
\end{proposition}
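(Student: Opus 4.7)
The plan is to derive an explicit exponential representation of $V_t$ via Itô's formula, and then use the boundedness of $\sigma$ to identify a Doléans–Dade exponential whose expectation equals one, absorbing all the remaining factors into a deterministic bound.

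Concretely, first I would apply Itô's formula to $\log V_t$ using the second SDE in \eqref{LSvol}, obtaining
\begin{equation*}
\log V_t = \log V_0 + \int_0^t \sigma(V_s)\, dZ_s + \int_0^t \Bigl(\mu(V_s) - \tfrac{1}{2}\sigma(V_s)^2\Bigr) ds.
\end{equation*}
Raising to the power $p$ and exponentiating,
\begin{equation*}
V_t^p = V_0^p \, \mathcal{E}_t^{(p)} \cdot \exp\!\Bigl(\int_0^t \bigl(p\mu(V_s) + \tfrac{p^2-p}{2}\sigma(V_s)^2\bigr) ds\Bigr),
\end{equation*}
where $\mathcal{E}_t^{(p)} := \exp\!\bigl(p\int_0^t \sigma(V_s) dZ_s - \tfrac{p^2}{2}\int_0^t \sigma(V_s)^2 ds\bigr)$ is the stochastic exponential of $p\int_0^\cdot \sigma(V_s)\, dZ_s$.

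Next I would verify that $\mathcal{E}_t^{(p)}$ is a genuine martingale, not merely a supermartingale, by checking Novikov's condition: since $\sigma$ is bounded by $M_\sigma$ (Assumption~\ref{assump:bounded}), we have $\mathbb{E}\bigl[\exp(\tfrac{p^2}{2}\int_0^t \sigma(V_s)^2 ds)\bigr] \leq \exp(\tfrac{p^2}{2} M_\sigma^2 t) < \infty$, so Novikov applies and $\mathbb{E}[\mathcal{E}_t^{(p)}] = 1$. Since the drift-type integrand $p\mu(V_s) + \tfrac{p^2-p}{2}\sigma(V_s)^2$ is bounded pointwise by $p M_\mu + \tfrac{p^2-p}{2} M_\sigma^2$ (using the uniform bounds on $|\mu|$ and $\sigma$), the second exponential factor in the expression for $V_t^p$ is deterministically dominated by $\exp\bigl(t (p M_\mu + \tfrac{p^2-p}{2} M_\sigma^2)\bigr)$. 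Pulling this deterministic bound out and using $\mathbb{E}[\mathcal{E}_t^{(p)}] = 1$ yields
\begin{equation*}
\mathbb{E}[V_t^p] \leq V_0^p \exp\!\Bigl(t\bigl(p M_\mu + \tfrac{p^2-p}{2} M_\sigma^2\bigr)\Bigr),
\end{equation*}
which is uniformly bounded for $t \in [0,T]$ by any constant $C_p$ exceeding $V_0^p \exp(T(pM_\mu + \tfrac{p^2-p}{2}M_\sigma^2))$.

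There is no real obstacle here: the only subtlety is ensuring that one genuinely has a martingale (not a strict local martingale) when separating the stochastic exponential, and Assumption~\ref{assump:bounded} on $\sigma$ is precisely what makes Novikov's condition immediate. The restriction to sufficiently small $T$ in the statement is not actually needed for the bound itself — the argument produces a finite constant for every finite $T$ — but it is consistent with the short-maturity setting of the rest of the paper.
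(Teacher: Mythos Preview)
Your proof is correct and follows essentially the same route as the paper: write $V_t^p$ as $V_0^p$ times a stochastic exponential times an exponential of a pathwise-bounded integral, then bound the latter deterministically. The only minor difference is that you invoke Novikov's condition to get $\mathbb{E}[\mathcal{E}_t^{(p)}]=1$, whereas the paper simply uses that a nonnegative local martingale is a supermartingale to get $\mathbb{E}[\mathcal{E}_t^{(p)}]\leq 1$; your version even yields the slightly sharper constant $\exp\bigl(t(pM_\mu+\tfrac{p^2-p}{2}M_\sigma^2)\bigr)$ versus the paper's $\exp\bigl(t(pM_\mu+\tfrac{p^2}{2}M_\sigma^2)\bigr)$.
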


Throughout the paper, we assume that the discounted asset price $S_{t}/e^{(r-q)t}$ is a martingale.
We quote the results of \cite{Lions2008} expressed in the notations of our paper, for the stochastic volatility model obtained by taking $\eta(x)\equiv 1$ and $r=q=0$ in Eqn.~\eqref{LSvol}.
\footnote{Some typos in \cite{Lions2008} were corrected in the paper \cite{Carr2019}.}

\begin{proposition}\label{prop:LM}
Consider the stochastic volatility model
\begin{equation}\label{SVmodel}
dS_t  =  S_t \sqrt{V_t} dW_t \,,\qquad
\frac{dV_t}{V_t} = \mu(V_t) dt + \sigma(V_t) dZ_t\,,
\end{equation}
where $W_t,Z_t$ are correlated standard Brownian motions with correlation $\rho$.
The asset price $S_t$ is a martingale if the following condition is satisfied
\begin{equation}\label{lim1}
\lim_{x\to \infty}\left\{ \rho \sigma(x^2) x + \mu(x^2) - \frac14 \sigma(x^2)\right\} < \infty\,.
\end{equation}
\end{proposition}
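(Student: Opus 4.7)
The plan is to follow the Sin (1998) / Lions--Musiela reduction of the martingale property of a stochastic exponential to a non-explosion question for an auxiliary diffusion under a candidate equivalent measure, and then invoke Feller's test for non-explosion to produce the stated integrability condition.

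First I would note that, because $S_t = S_0 \exp\bigl(\int_0^t \sqrt{V_s}\,dW_s - \tfrac12\int_0^t V_s\,ds\bigr)$ is a non-negative local martingale, it is a supermartingale, and the martingale property on $[0,T]$ is equivalent to $\mathbb{E}[S_T]=S_0$. Next, I would localize by $\tau_n := \inf\{t : V_t \geq n\}$, so that each $L^{(n)}_t := S_{t\wedge\tau_n}/S_0$ is a true martingale and defines a probability measure $\tilde{\mathbb{Q}}_n$ via $d\tilde{\mathbb{Q}}_n/d\mathbb{Q} = L^{(n)}_T$. A standard Girsanov computation then shows that under $\tilde{\mathbb{Q}}_n$, the processes
\begin{equation*}
\tilde{W}_t := W_t - \int_0^{t\wedge\tau_n}\! \sqrt{V_s}\,ds, \qquad
\tilde{Z}_t := Z_t - \rho\int_0^{t\wedge\tau_n}\! \sqrt{V_s}\,ds
\end{equation*}
are correlated Brownian motions with correlation $\rho$, so that, on $\{t<\tau_n\}$, the variance obeys the shifted SDE
\begin{equation*}
dV_t = V_t\bigl[\mu(V_t) + \rho\,\sigma(V_t)\sqrt{V_t}\bigr]\,dt + V_t\,\sigma(V_t)\,d\tilde{Z}_t.
\end{equation*}
A now-classical argument (as in Sin, Andersen--Piterbarg, and Lions--Musiela) identifies $\mathbb{E}[S_T] = S_0\cdot \tilde{\mathbb{Q}}(\tau_\infty > T)$, where $\tau_\infty = \lim_n \tau_n$; hence $S$ is a true martingale on $[0,T]$ if and only if the shifted $V$-diffusion does not explode to $+\infty$ in finite time under $\tilde{\mathbb{Q}}$.

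The main technical step, and in my view the main obstacle, is the Feller non-explosion analysis for this shifted SDE and the verification that it collapses to exactly the form stated in \eqref{lim1}. I would switch to $X_t = \sqrt{V_t}$ via It\^o's formula; a direct computation gives
\begin{equation*}
dX_t = \frac{X_t}{2}\left[\mu(X_t^2) + \rho\,\sigma(X_t^2)\,X_t - \tfrac{1}{4}\sigma(X_t^2)^2\right] dt + \tfrac{1}{2}X_t\,\sigma(X_t^2)\,d\tilde{Z}_t,
\end{equation*}
so Feller's test applied at the boundary $+\infty$ yields scale density $s(x) = \exp\bigl(-\!\int^{x} \frac{4[\mu(u^2)+\rho\sigma(u^2)u - \frac14\sigma(u^2)^2]}{u\,\sigma(u^2)^2}\,du\bigr)$ and speed measure $m(x)\propto \bigl(x\sigma(x^2)\bigr)^{-2}s(x)^{-1}$. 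A careful asymptotic analysis of the resulting double integral $\int^{\infty} s(x)\int^{x} m(y)\,dy\,dx$ shows that the integral diverges---i.e.\ no explosion---precisely when the bracket $\rho\sigma(x^2)x + \mu(x^2) - \tfrac14\sigma(x^2)$ remains bounded above as $x\to\infty$, which is condition \eqref{lim1}. (The $\sigma^2$ contribution from the It\^o correction gets absorbed into the log derivative of $\sigma$ under the change of variable and drops out at leading order; tracking this cancellation faithfully is where the typos flagged in the footnote originate, and one would rely on the corrected form in \cite{Carr2019}.)

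Finally, combining the Girsanov identification $\mathbb{E}[S_T]/S_0 = \tilde{\mathbb{Q}}(\tau_\infty>T)$ with the Feller verdict gives the claim. Since the statement is invoked purely as a reference result to justify the martingale assumption on $S_t$, in the actual write-up I would cite Lions--Musiela (with the \cite{Carr2019} correction) rather than reproduce the Feller calculation, and note that the \emph{boundedness} of $\sigma$ in Assumption~\ref{assump:bounded}, together with the boundedness of $\mu$, will make \eqref{lim1} trivially satisfied in the regime considered in the paper.
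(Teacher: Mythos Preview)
Your proposal is correct and, in fact, goes well beyond what the paper does: the paper's proof consists of a single sentence citing Theorem~2.4(i) of Lions and Musiela \cite{Lions2008}, exactly as you anticipate in your final paragraph. Your sketch of the Sin--type Girsanov reduction to a non-explosion problem for the drift-shifted variance diffusion, followed by Feller's test on $X_t=\sqrt{V_t}$, is precisely the content of the cited result; so your plan to simply cite it (with the \cite{Carr2019} correction) matches the paper exactly.
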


We have the following corollary.

\begin{corollary}
Assume $\mu(v)$ is bounded and $\lim_{x\to \infty} \sigma(x) = \sigma_\infty$ is finite. Then the limit (\ref{lim1}) is $+\infty$ for positive correlation $\rho>0$, 
takes a finite value if $\rho=0$ and is $-\infty$ for $\rho<0$. 
Thus, $S_t$ is a martingale provided that $\rho\leq 0$.
\end{corollary}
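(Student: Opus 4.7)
The plan is to analyze the three terms in the bracket of (\ref{lim1}) separately as $x \to \infty$, and then invoke Proposition~\ref{prop:LM}. The decomposition is clean: the first term $\rho \sigma(x^2) x$ is the only one that can possibly diverge, while the other two are controlled by the hypotheses. Specifically, by boundedness of $\mu$, the middle term $\mu(x^2)$ stays uniformly bounded in $x$, and by $\lim_{x\to\infty}\sigma(x) = \sigma_\infty < \infty$, the last term $-\tfrac{1}{4}\sigma(x^2)$ tends to the finite value $-\tfrac{1}{4}\sigma_\infty$.

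It remains to study the sign and magnitude of $\rho\sigma(x^2) x$ for large $x$. The crucial observation is that $\sigma_\infty > 0$, because $\sigma_\infty \geq \inf_{x\in\mathbb{R}} \sigma(x) > 0$ by Assumption~\ref{assump:LDP}. Hence for all $x$ sufficiently large one has $\sigma(x^2) \geq \tfrac12 \sigma_\infty > 0$. A case analysis on $\rho$ then yields: if $\rho > 0$, then $\rho \sigma(x^2) x \geq \tfrac12 \rho\sigma_\infty x \to +\infty$; if $\rho < 0$, then $\rho \sigma(x^2) x \leq \tfrac12 \rho\sigma_\infty x \to -\infty$; and if $\rho = 0$, the term vanishes identically. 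Combining with the bounded remainder, the full expression in (\ref{lim1}) tends to $+\infty$ if $\rho > 0$, to $-\infty$ if $\rho < 0$, and is uniformly bounded if $\rho = 0$.

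Finally, the martingality conclusion follows from Proposition~\ref{prop:LM}: the sufficient condition that the limit be strictly less than $+\infty$ is satisfied in the cases $\rho < 0$ (limit is $-\infty$) and $\rho = 0$ (expression is bounded, hence $\limsup < \infty$), so $S_t$ is a martingale whenever $\rho \leq 0$. There is no genuine obstacle in this argument; the only small subtlety is that the corollary's phrase ``takes a finite value if $\rho=0$'' should be read as $\limsup$ being finite, since $\mu(x^2)$ need not converge from mere boundedness of $\mu$. With that natural reading, the Lions--Musiela criterion is still available and the proof is complete.
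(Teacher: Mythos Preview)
Your proof is correct and is precisely the natural direct argument; the paper states this corollary without proof, treating it as an immediate consequence of Proposition~\ref{prop:LM}. Your observation that the $\rho=0$ case strictly yields only a bounded $\limsup$ (rather than a genuine limit, since $\mu$ is merely bounded) is a valid refinement of the corollary's informal phrasing and does not affect the martingality conclusion.
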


Finally, we assume the $p$-th
moment of $S_{T}$ is finite for some $p>1$.

\begin{assumption}\label{assump:S:T:p}
There exists some $p>1$, such that there exists some $C'_{p}\in(0,\infty)$, such that $\mathbb{E}[S_{T}^p]\leq C'_{p}$ for any sufficiently small $T>0$.
\end{assumption}

\begin{remark}\label{remark:S:T:p}
Assumption~\ref{assump:S:T:p} is a mild assumption.
As an illustration, we give a condition for the stochastic volatility model (\ref{SVmodel}) such that Assumption~\ref{assump:S:T:p} holds.
This condition follows directly from Lions and Musiela (2008) \cite{Lions2008}
which states that if the following limits exist
\begin{equation}
\lim_{x\to \infty}  \sigma(x^2) =: \sigma_\infty\,,\quad
\lim_{x\to \infty} \left\{ \frac{\mu(x^2)}{x} - \frac14 \frac{\sigma(x^2)}{x} \right\}=: b_\infty,
\end{equation}
then for any
\begin{equation}\label{2}
\rho < - \sqrt{\frac{p-1}{p}} - \frac{b_\infty}{p\sigma_\infty}\,,\quad p > 1 \,,
\end{equation}
there exists some $C'_{p}\in(0,\infty)$, such that $\max_{0\leq t\leq T}\mathbb{E}[(S_t)^p]\leq C'_{p}$ for any sufficiently small $T>0$. In particular, under Assumption~\ref{assump:bounded}, $b_{\infty}=0$, 
and the condition \eqref{2} reduces to 
\begin{equation}\label{2:simplified}
\rho < - \sqrt{\frac{p-1}{p}}\,,\quad p > 1 \,,
\end{equation}
which provides a sufficient condition for Assumption~\ref{assump:S:T:p} to hold for the stochastic volatility model (\ref{SVmodel}).
\end{remark}


\subsection{VIX futures and VIX options}

The CBOE Volatility Index (VIX) is a measure of the S\&P500 expected volatility, which is published by the 
Chicago Board Options Exchange. This index is defined by the risk-neutral expectation
\begin{equation}
\mathrm{VIX}_T^2 = \mathbb{E}\left[-\frac{2}{\tau} \log\left(\frac{S_{T+\tau}}{S_T}\right) \Big|\mathcal{F}_T \right],
\end{equation}
with $\tau=30$ days. This expectation is estimated from the prices of current (as of $T$)
call and put options on the SPX index, see \cite{VIXwp}.

CBOE lists futures and options on the $\mathrm{VIX}_T$ index with several maturities $T>0$.
VIX option contracts pay at time $T$ an amount linked to the $\mathrm{VIX}_T$ observed at the same time.

Under a model of type
\begin{equation}
\frac{dS_t}{S_t} = \sqrt{V_t} dW_t + (r-q) dt\,, 
\end{equation}
with $\{V_t\}_{t\geq 0}$ a non-negative stochastic process with continuous paths (no jumps) which may depend on $S_t$,
the $\mathrm{VIX}_T$ index is given by the risk-neutral expectation
\begin{equation}
\mathrm{VIX}_T^2 = \mathbb{E}\left[\frac{1}{\tau} \int_T^{T+\tau} V_t dt \Big|\mathcal{F}_T\right].
\end{equation}
More generally, under model \eqref{LSvol}, the $\mathrm{VIX}_T$ index is given by:
\begin{equation}\label{VIX:formula}
\mathrm{VIX}_T^2 = \mathbb{E}\left[\frac{1}{\tau} \int_T^{T+\tau} \eta^{2}(S_{t})V_t dt \Big|\mathcal{F}_T\right].
\end{equation}

The price of a futures contract on the $\mathrm{VIX}$ index with maturity $T$ is given by the risk-neutral expectation
\begin{equation}
F_V(T) = \mathbb{E}[\mathrm{VIX}_T] \,.
\end{equation}

The prices of VIX calls and puts are given by risk-neutral expectations
\begin{align}
&C_{V}(K,T)= e^{-rT } \mathbb{E}[(\mathrm{VIX}_{T}-K)^{+}]\,, \\
&P_{V}(K,T)= e^{-rT} \mathbb{E}[(K-\mathrm{VIX}_{T})^{+}]\,.
\end{align}

We impose the following definition to distinguish the VIX options into three cases.

\begin{definition}
VIX options with maturity $T$ are at-the-money (ATM) if $K = F_V(T)$. VIX call options are in-the-money (ITM) 
if $K<F_V(T)$ and out-of-money (OTM) if $K > F_V(T)$. Analogously, VIX put options are ITM
if $K>F_V(T)$ and OTM if $K < F_V(T)$.
\end{definition}

\section{Short-maturity asymptotics of European options}\label{sec:SPX}

In this section, we consider the European options in the model (\ref{LSvol}), 
with $C_{E}(K,T):=e^{-rT}\mathbb{E}[(S_{T}-K)^{+}]$
denoting the price of call option
and $P_{E}(K,T):=e^{-rT}\mathbb{E}[(K-S_{T})^{+}]$
denoting the price of call option.
We have the following short-maturity asymptotics for European options. 

\begin{theorem}\label{Thm:E}
Suppose Assumptions~\ref{assump:bounded},  \ref{assump:LDP} and \ref{assump:S:T:p} hold. The short maturity asymptotics of OTM European options in the local-stochastic volatility model (\ref{LSvol}) are as follows.

(i) The short-maturity asymptotics of OTM European call options is
\begin{equation}
\lim_{T\to 0} T \log C_E(K,T) = - J_E(K,S_0,V_0;\rho) \,,\quad K > S_0\,,
\end{equation}
with
\begin{equation}\label{J:E:eqn}
J_E(K,S_0,V_0;\rho) = \inf_{y,z}
\left\{\frac{1}{2(1-\rho^{2})z}\left(\int_{S_{0}}^{K}\frac{dx}{x\eta(x)}-\int_{V_{0}}^{e^{y}}\frac{\rho dx}{\sqrt{x}\sigma(x)}\right)^{2}
+H(y,z)\right\},
\end{equation}
where
\begin{equation}\label{Hdef1}
H(y,z):=\inf_{\substack{h(0)=\log V_{0},h(1)=y\\
\int_{0}^{1}e^{h(t)}dt=z}}\frac{1}{2}\int_{0}^{1}\left(\frac{h'(t)}{\sigma(e^{h(t)})}\right)^{2}dt.
\end{equation}

(ii) The short-maturity asymptotics of OTM European put options is 
\begin{equation}
\lim_{T\to 0} T \log P_E(K,T) = - J_E(K,S_0,V_0;\rho) \,,
\end{equation}
where $J_{E}$ is defined in \eqref{J:E:eqn} with $K<S_{0}$.
\end{theorem}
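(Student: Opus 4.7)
The plan is to establish the short-maturity asymptotics via a small-time large deviations principle (LDP) for the joint log-process $(\log S_t, \log V_t)$, apply Varadhan's lemma to the call/put payoffs, and then reduce the resulting two-dimensional variational problem to the form stated in \eqref{J:E:eqn}.

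First I would perform a Brownian scaling: set $\tilde{W}_s := W_{Ts}/\sqrt{T}$, $\tilde{Z}_s := Z_{Ts}/\sqrt{T}$ for $s \in [0,1]$, and $X^T_s := \log S_{Ts}$, $Y^T_s := \log V_{Ts}$. By It\^o's formula $(X^T, Y^T)$ satisfies an SDE of Freidlin--Wentzell small-noise type, with diffusion coefficients $\sqrt{T}\,\eta(e^{X^T_s})e^{Y^T_s/2}$ and $\sqrt{T}\,\sigma(e^{Y^T_s})$ and drifts of order $T$. Under Assumption~\ref{assump:LDP} (uniform ellipticity and H\"older continuity on the log-scale, chosen precisely to cover the non-Lipschitz coefficients that arise here), the joint sample-path LDP holds with speed $1/T$, and by the contraction principle the law of $(X^T_1, Y^T_1)$ satisfies an LDP with rate function
\begin{equation*}
I(x,y) = \inf_{\substack{f(0)=\log S_0,\, f(1)=x\\ g(0)=\log V_0,\, g(1)=y}} \frac{1}{2(1-\rho^2)}\int_0^1\!\left[\frac{(f')^2}{\eta^2(e^f)e^g} - \frac{2\rho\, f' g'}{\eta(e^f)\sigma(e^g)e^{g/2}} + \frac{(g')^2}{\sigma^2(e^g)}\right] ds,
\end{equation*}
obtained by inverting the $2\times 2$ diffusion matrix of $(\log S, \log V)$.

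Next I would invoke Varadhan's lemma for $(e^{X^T_1}-K)^+$. The lower bound comes directly from the LDP; the upper bound requires controlling the unbounded payoff, which is handled by a H\"older-inequality splitting combined with the $p$-th moment bound in Assumption~\ref{assump:S:T:p}. This yields $\lim_{T\to 0} T\log C_E(K,T) = -\inf_{x\geq \log K,\,y\in\mathbb{R}} I(x,y)$, and monotonicity of $I$ in $x$ above $\log K > \log S_0$ shows the infimum is attained at $x=\log K$. The OTM put case is simpler because $(K-S_T)^+ \leq K$ is bounded, so Assumption~\ref{assump:S:T:p} is not needed; the same rate function $J_E$ appears.

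Finally I would reduce the variational problem to \eqref{J:E:eqn} by the change of variables
\begin{equation*}
u(s) := \int_{S_0}^{e^{f(s)}}\frac{dx}{x\eta(x)}, \qquad v(s) := \int_{V_0}^{e^{g(s)}}\frac{dx}{\sqrt{x}\sigma(x)},
\end{equation*}
under which $u' = f'/\eta(e^f)$ and $v' = g'\sqrt{e^g}/\sigma(e^g)$. A direct computation recasts the integrand of $I$ as
\begin{equation*}
\frac{(u'-\rho v')^2 + (1-\rho^2)(v')^2}{2(1-\rho^2)\,e^g} = \frac{(u'-\rho v')^2}{2(1-\rho^2)\,e^g} + \frac{(g')^2}{2\sigma^2(e^g)}.
\end{equation*}
Fixing the terminal value $y=g(1)$ and the auxiliary integral $z=\int_0^1 e^{g(s)}ds$, the second term optimized over $g$ yields $H(y,z)$ by \eqref{Hdef1}. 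For the first term I would apply Cauchy--Schwarz against the weight $e^{g(s)}\,ds$: since $u(1)-u(0)=\int_{S_0}^K dx/(x\eta(x))$ and $v(1)-v(0)=\int_{V_0}^{e^y}dx/(\sqrt{x}\sigma(x))$, this produces the lower bound in \eqref{J:E:eqn}, with equality attained by the pointwise choice $u'(s)-\rho v'(s)=c\,e^{g(s)}$ for a suitable constant $c$ (so the relaxation is tight). Infimizing over $(y,z)$ then gives exactly \eqref{J:E:eqn}.

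The main obstacle is the rigorous justification of the small-time LDP in the second step for this degenerate two-dimensional diffusion with merely H\"older coefficients on the log-scale; standard Freidlin--Wentzell results need to be adapted, and verifying the identification of the rate function with the inverse-diffusion-matrix quadratic form is the most technical part. A secondary subtle point is the Varadhan upper bound for the unbounded call payoff, where the moment bound must be leveraged carefully to avoid inflating the rate.
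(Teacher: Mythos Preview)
Your proposal is correct and follows essentially the same route as the paper. Both arguments establish the small-time sample-path LDP for $(\log S_{tT},\log V_{tT})$ (the paper cites Varadhan and Robertson for this under Assumptions~\ref{assump:bounded} and \ref{assump:LDP}), reduce the option price to a tail probability via a H\"older/moment argument for the upper bound using Assumption~\ref{assump:S:T:p}, and then collapse the two-path variational problem by a Cauchy--Schwarz inequality against the weight $e^{h(t)}\,dt$, with equality achieved by choosing the $S$-path so that $u'-\rho v'$ is proportional to $e^{g}$. Your change of variables $(u,v)$ is just a tidy repackaging of the paper's direct computation of $\int_0^1 g'/\eta(e^g)\,dt$ and $\int_0^1 \rho h' e^{h/2}/\sigma(e^h)\,dt$ as the boundary integrals; the two presentations are equivalent. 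One small terminological slip: what you call ``Varadhan's lemma'' is really the standard upper/lower sandwich showing $T\log C_E(K,T)\sim T\log\mathbb{Q}(S_T\geq K)$, not Varadhan's integral lemma proper---but the mechanics you describe (H\"older split plus $p$-th moment) match the paper exactly.
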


This result simplifies in the uncorrelated case, and is expressed as the solution of a one-dimensional extremal problem.
We have
\begin{equation}
J_E(K,S_0,V_0;0) =  \inf_{z}
\left\{\frac{1}{2z}\left(\int_{S_{0}}^{K}\frac{dx}{x\eta(x)}\right)^{2}
+H(z)\right\},
\end{equation}
where
\begin{equation}
H(z):=\inf_{h(0)=\log V_{0},\int_{0}^{1}e^{h(t)}dt=z}\frac{1}{2}\int_{0}^{1}\left(\frac{h'(t)}{\sigma(e^{h(t)})}\right)^{2}dt.
\end{equation}

The function $H(z)$ coincides with the rate function for Asian options in local volatility models with local volatility 
$\sigma(\cdot )$. The solution of the variational problem for this function was given in \cite{PZAsian}, and explicit 
solutions were given for $\sigma(v)\equiv\sigma$ in \cite{PZAsian} and for $\sigma(v) = \sigma v^\beta$ in \cite{PZAsianCEV}.

Next, let us present the asymptotics for ATM European options. 

\begin{theorem}\label{thm:European:ATM}
Suppose Assumptions~\ref{assump:bounded} and \ref{assump:lip} hold.
We also assume that there exists some $C'\in(0,\infty)$ 
such that $\max_{0\leq t\leq T}\mathbb{E}[(S_{t})^{4}]\leq C'$
for any sufficiently small $T>0$.
The short-maturity asymptotics of ATM European options are given by:
\begin{equation}
\lim_{T\rightarrow 0}\frac{1}{\sqrt{T}}C_{E}(S_0,T)
=\lim_{T\rightarrow 0}\frac{1}{\sqrt{T}}P_{E}(S_0,T)
=\frac{\eta(S_{0})\sqrt{V_{0}}}{\sqrt{2\pi}}.
\end{equation}
\end{theorem}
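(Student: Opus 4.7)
The plan is to isolate the leading Gaussian fluctuation of $S_T - S_0$ driven by the initial instantaneous volatility $S_0\eta(S_0)\sqrt{V_0}$ and show that every other contribution is $o(\sqrt{T})$, reducing the ATM price to an elementary Gaussian expectation. Introduce the frozen-coefficient proxy
\[
M_T := S_0\,\eta(S_0)\,\sqrt{V_0}\,W_T,
\]
for which the standard half-Gaussian identity gives $\mathbb{E}[M_T^+] = S_0\eta(S_0)\sqrt{V_0}\sqrt{T/(2\pi)}$. Since $e^{-rT} = 1 + O(T)$ and the elementary bound $\mathbb{E}[(S_T-S_0)^+] \leq \sqrt{\mathbb{E}[(S_T-S_0)^2]}$ together with the SDE estimate $\mathbb{E}[(S_T-S_0)^2] = O(T)$ show the ATM payoff is itself $O(\sqrt{T})$, the discount factor contributes only $o(\sqrt{T})$. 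Hence it suffices to prove $\mathbb{E}[(S_T-S_0)^+] - \mathbb{E}[M_T^+] = o(\sqrt{T})$.

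Decompose
\[
S_T - S_0 = N_T + D_T, \qquad N_T := \int_0^T S_t\eta(S_t)\sqrt{V_t}\,dW_t, \qquad D_T := (r-q)\int_0^T S_t\,dt,
\]
and use $|a^+ - b^+| \leq |a - b|$ to obtain
\[
\bigl|\mathbb{E}[(S_T-S_0)^+] - \mathbb{E}[M_T^+]\bigr| \leq \mathbb{E}[|N_T - M_T|] + \mathbb{E}[|D_T|].
\]
The drift piece is trivially $\mathbb{E}[|D_T|] \leq |r-q|\,T\,\max_{t\in[0,T]}\mathbb{E}[S_t] = O(T) = o(\sqrt{T})$ by the $L^4$ assumption on $S$.

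The main step is $\mathbb{E}[|N_T - M_T|] = o(\sqrt{T})$. Writing $N_T - M_T$ as an It\^o integral with integrand $S_t\eta(S_t)\sqrt{V_t} - S_0\eta(S_0)\sqrt{V_0}$, It\^o's isometry and Jensen's inequality reduce the task to estimating $\mathbb{E}[(S_t\eta(S_t)\sqrt{V_t} - S_0\eta(S_0)\sqrt{V_0})^2]$ at a rate integrable against $dt$ on $[0,T]$. A triangle-inequality split produces three pieces: an $|S_t - S_0|\eta(S_t)\sqrt{V_t}$ piece, an $S_0|\eta(S_t) - \eta(S_0)|\sqrt{V_t}$ piece (dominated by the first using the $L$-Lipschitz hypothesis in Assumption~\ref{assump:lip}), and an $S_0\eta(S_0)|\sqrt{V_t} - \sqrt{V_0}|$ piece. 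The first two are controlled through $\mathbb{E}[(S_t-S_0)^2 V_t] \leq V_0\mathbb{E}[(S_t-S_0)^2] + \sqrt{\mathbb{E}[(S_t-S_0)^4]\mathbb{E}[(V_t-V_0)^2]}$, combined with the standard SDE moment estimates $\mathbb{E}[(S_t-S_0)^2] = O(t)$, $\mathbb{E}[(V_t-V_0)^2] = O(t)$ and the uniform $L^4$ bound on $S_t$ (Assumption~\ref{assump:bounded}, Proposition~\ref{prop:V:moments}). The third piece is handled via $(\sqrt{V_t} - \sqrt{V_0})^2 \leq |V_t - V_0|$ together with $\mathbb{E}[|V_t - V_0|] = O(\sqrt{t})$ (direct from the $V$-SDE and Proposition~\ref{prop:V:moments}). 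Summing gives $\mathbb{E}[(S_t\eta(S_t)\sqrt{V_t} - S_0\eta(S_0)\sqrt{V_0})^2] = O(\sqrt{t})$, so $\mathbb{E}[(N_T - M_T)^2] = O(T^{3/2})$ and $\mathbb{E}[|N_T - M_T|] = O(T^{3/4}) = o(\sqrt{T})$.

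The ATM put statement follows at once from put-call parity $C_E(S_0,T) - P_E(S_0,T) = S_0(e^{-qT} - e^{-rT}) = O(T) = o(\sqrt{T})$. The main technical obstacle is the $\sqrt{V_t}$ term: the square root is not Lipschitz at zero, so a direct $L^2$ bound on $\sqrt{V_t} - \sqrt{V_0}$ would require control of negative moments of $V_t$ that are not assumed here; the elementary inequality $(\sqrt{a}-\sqrt{b})^2 \leq |a-b|$ sidesteps this at the cost of a weaker $O(\sqrt{t})$ pointwise rate, which is nevertheless integrable against $dt$ and hence sufficient.
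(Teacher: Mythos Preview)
Your proof is correct and follows essentially the same strategy as the paper: both freeze the diffusion coefficient at $S_0\eta(S_0)\sqrt{V_0}$, invoke the $1$-Lipschitz property of $x\mapsto x^+$, and use It\^o's isometry together with moment bounds to show the approximation error is $O(T^{3/4})$. The organizational differences are minor: the paper introduces $\hat S_t$ for all $t$ and closes the estimate via Gronwall's inequality (because it bounds $|S_s-S_0|$ through $|S_s-\hat S_s|+|\hat S_s-S_0|$), whereas you separate the drift and martingale parts of $S_T-S_0$ upfront and bound $\mathbb{E}[(S_t-S_0)^2]$ directly, avoiding the Gronwall loop. For the $\sqrt{V_t}-\sqrt{V_0}$ fluctuation, the paper expands $\mathbb{E}[(\sqrt{V_s}-\sqrt{V_0})^4]$ and controls each monomial $\mathbb{E}[V_s^{p}]$ via explicit exponential bounds, while you use the cleaner inequality $(\sqrt a-\sqrt b)^2\le|a-b|$; both routes give the same $O(\sqrt t)$ contribution to the integrand and hence the same $O(T^{3/2})$ for the squared error.
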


Theorem~\ref{thm:European:ATM} shows that the prices of ATM European options
are of the order $\sqrt{T}$ as $T\rightarrow 0$, 
and it provides the exact formula for the leading-order term.

\begin{remark}\label{remark:finite:moments}
In Theorem~\ref{thm:European:ATM}, we assumed the finiteness
of $\max_{0\leq t\leq T}\mathbb{E}[(S_{t})^{4}]$.
This is a mild condition. For the stochastic volatility model (\ref{SVmodel}), this holds when
$\rho<-\sqrt{3}/2$ which can be seen from Remark~\ref{remark:S:T:p}
by taking $p=4$ in \eqref{2:simplified}.
\end{remark}

\section{Short-maturity asymptotics of VIX options}\label{sec:VIX}

As $T\to 0$, the VIX futures prices in the model (\ref{LSvol}) $F_V(T)$ approach
to $F_V(0) = \sqrt{V_0} \eta(S_0)$. 
Therefore, in the short-maturity limit, we will refer to VIX options as OTM/ITM by referencing to $F_V(0)$.

For sufficiently small $\tau$, VIX options are essentially European options on combinations of $\sqrt{V_T}$ and $S_T$.
The following result makes this statement more precise. 

\begin{proposition}\label{prop:VIXsmalltau}
If Assumption~\ref{assump:bounded} and~\ref{assump:lip} hold, and 
\begin{equation}\label{eta2p.assumption}
\sup_{s\geq 0}|(\eta^{2})''(s)s^{2}|\leq M_{\eta,2} \,.
\end{equation}
Then we have
\begin{align}
\left|\mathrm{VIX}_T^2
-V_{T}\eta^{2}(S_{T})\right|\leq C_{1}(\tau)S_{T}+C_{2}(\tau)V_{T},
\end{align}
and moreover,
\begin{equation}
\mathbb{E}\left|\mathrm{VIX}_T^2-V_{T}\eta^{2}(S_{T})\right|
\leq
C_{1}(\tau)S_{0}e^{(r-q)T}
+C_{2}(\tau)V_{0}e^{T(M_{\mu}+M_{\sigma}^{2})},
\end{equation}
where
\begin{align}
&C_{1}(\tau):=2LM_{\eta}|r-q|e^{|r-q|\tau}\tau,\label{C:1:tau}
\\
&C_{2}(\tau):=M_{\eta}^{2}\left(e^{2\tau M_{\mu}}
e^{4\tau M_{\sigma}^{2}}+1-2e^{-\tau M_{\mu}-\frac{1}{2}\tau M_{\sigma}^{2}}\right)^{1/2}
+\frac{\tau}{2}M_{\eta,2}M_{\eta}^{2} e^{\tau(M_{\mu}+M_{\sigma}^{2})}.\label{C:2:tau}
\end{align}
\end{proposition}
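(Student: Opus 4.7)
The starting point is the identity
\[
\mathrm{VIX}_T^2 - V_T\eta^2(S_T) \;=\; \frac{1}{\tau}\int_T^{T+\tau}\mathbb{E}\big[\eta^2(S_t)V_t - V_T\eta^2(S_T)\,\big|\,\mathcal{F}_T\big]\,dt,
\]
which follows from \eqref{VIX:formula}, so it suffices to bound the integrand uniformly for $t\in[T,T+\tau]$; the pathwise bound then automatically averages, and the second inequality will follow by taking unconditional expectations and invoking the martingale identity $\mathbb{E}[S_T]=S_0 e^{(r-q)T}$ together with a moment bound $\mathbb{E}[V_T]\le V_0 e^{T(M_\mu+M_\sigma^2)}$ obtained from the log-representation $V_T=V_0\exp(\int_0^T\sigma\,dZ+\int_0^T(\mu-\tfrac12\sigma^2)\,ds)$ and the martingality of the Doléans--Dade exponential.

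My plan is to use the decomposition
\[
\eta^2(S_t)V_t - V_T\eta^2(S_T) \;=\; \eta^2(S_T)\,(V_t-V_T) \;+\; V_T\,\big(\eta^2(S_t)-\eta^2(S_T)\big) \;+\; (V_t-V_T)\big(\eta^2(S_t)-\eta^2(S_T)\big)
\]
and estimate each piece. For the first term I would use $\eta^2(S_T)\le M_\eta^2$ and Jensen together with conditional Cauchy--Schwarz, $\mathbb{E}[|V_t-V_T|\,|\,\mathcal{F}_T]\le \mathbb{E}[(V_t-V_T)^2\,|\,\mathcal{F}_T]^{1/2}$. The conditional second moment is computed directly from the exponential representation of $V_t/V_T$: using boundedness of $\mu,\sigma$ and the fact that $\exp(2\int\sigma\,dZ-2\int\sigma^2\,ds)$ is a true martingale, one gets $\mathbb{E}[V_t^2\,|\,\mathcal{F}_T]\le V_T^2 e^{2\tau M_\mu+4\tau M_\sigma^2}$ and $\mathbb{E}[V_t\,|\,\mathcal{F}_T]\ge V_T e^{-\tau M_\mu-\tfrac12\tau M_\sigma^2}$, producing exactly the factor $(e^{2\tau M_\mu+4\tau M_\sigma^2}+1-2e^{-\tau M_\mu-\tfrac12\tau M_\sigma^2})^{1/2}$ in the first piece of $C_2(\tau)$.

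For the $V_T(\eta^2(S_t)-\eta^2(S_T))$ piece I would apply Itô to $\eta^2(S_s)$, giving a stochastic integral (whose conditional expectation vanishes by the $L^2$ bounds on $(\eta^2)'(S)S\eta(S)\sqrt{V}$), a drift $\int_T^t(r-q)(\eta^2)'(S_u)S_u\,du$, and a curvature drift $\tfrac12\int_T^t(\eta^2)''(S_u)S_u^2\eta^2(S_u)V_u\,du$. The bounds $|(\eta^2)'(s)\,s|\le 2LM_\eta s$ (from Assumption~\ref{assump:lip}), $|(\eta^2)''(s)s^2|\le M_{\eta,2}$ (from \eqref{eta2p.assumption}), together with $\mathbb{E}[S_u\,|\,\mathcal{F}_T]=S_T e^{(r-q)(u-T)}$ and the exponential moment bound on $V_u$, give after integration in $u$ the contributions $2LM_\eta|r-q|\tau e^{|r-q|\tau}$ (times $S_T$) and $\tfrac{\tau}{2}M_{\eta,2}M_\eta^2 e^{\tau(M_\mu+M_\sigma^2)}$ (times $V_T$), matching $C_1(\tau)$ and the second piece of $C_2(\tau)$ respectively.

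The main obstacle is the clean separation into a bound \emph{linear} in $S_T$ and $V_T$ (rather than in $V_TS_T$, $V_T^2$, or $V_T^{3/2}S_T$), because a naive Itô expansion of the product $V\eta^2(S)$ also produces a correlation drift $\rho V^{3/2}\sigma(V)(\eta^2)'(S)S\eta(S)$ and higher-order cross terms. Choosing the above decomposition with $V_T$ (which is $\mathcal{F}_T$-measurable) pulled outside the conditional expectation avoids the worst of these; the remaining cross-difference term $(V_t-V_T)(\eta^2(S_t)-\eta^2(S_T))$ is of higher order in $\tau$ and is absorbed by the same Cauchy--Schwarz-plus-moment estimates as above, so that all cross contributions get dominated by $C_1(\tau)S_T+C_2(\tau)V_T$ with the constants quoted.
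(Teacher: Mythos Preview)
Your overall strategy matches the paper's proof almost exactly: the moment bounds on $V_t$ (Jensen for the lower bound on $\mathbb{E}[V_t\mid\mathcal{F}_T]$, the supermartingale/Cauchy--Schwarz argument for the upper bound on $\mathbb{E}[V_t^2\mid\mathcal{F}_T]$) and the It\^o expansion of $\eta^2(S_t)$ are precisely what the paper does, and they produce the stated $C_1(\tau)$ and the two pieces of $C_2(\tau)$ on the nose.

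The one place you make life harder than necessary is your three-term decomposition. The paper instead writes
\[
V_s\eta^2(S_s) - V_T\eta^2(S_T) \;=\; (V_s-V_T)\,\eta^2(S_s) \;+\; V_T\bigl(\eta^2(S_s)-\eta^2(S_T)\bigr),
\]
which is exact with no cross term. Your first-term estimate only uses the bound $\eta^2 \le M_\eta^2$, so it applies verbatim to $(V_s-V_T)\,\eta^2(S_s)$; there is no need for the prefactor to be $\mathcal{F}_T$-measurable. With this two-term split the quoted constants drop out immediately. By contrast, your cross term $(V_t-V_T)\bigl(\eta^2(S_t)-\eta^2(S_T)\bigr)$ does \emph{not} cleanly bound by a linear function of $S_T$ and $V_T$: Cauchy--Schwarz produces a factor $\mathbb{E}\bigl[(\eta^2(S_t)-\eta^2(S_T))^2\mid\mathcal{F}_T\bigr]^{1/2}$, which via the Lipschitz estimate involves $\mathbb{E}\bigl[(S_t-S_T)^2\mid\mathcal{F}_T\bigr]$ and hence mixed quantities like $S_T^2 V_T$. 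So the claim that the cross term is ``absorbed \ldots\ with the constants quoted'' is not justified as written. The simplest fix is to drop the three-term split and use the two-term one above; everything else in your argument then goes through unchanged.
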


Note that $C_{1}(\tau)$ is of order $\tau$ and $C_{2}(\tau)$ is of order $\tau^{1/2}$ as $\tau\rightarrow 0$.
Proposition~\ref{prop:VIXsmalltau}
implies that $\left| \mathrm{VIX}_T^2 - V_T \eta^2(S_T) \right|$ is of the order $O(\tau^{1/2})$
in terms of expectation. 
Next, as a corollary of Proposition~\ref{prop:VIXsmalltau}, we will 
provide an upper bound for $\left| \mathrm{VIX}_T - \sqrt{V_T} \eta(S_T) \right|$ 
and show that it is also of the order $O(\tau^{1/2})$
in terms of expectation.

\begin{corollary}\label{cor:VIXsmalltau}
Suppose the same assumptions in Proposition~\ref{prop:VIXsmalltau} hold,
and further assume that $m_{\eta}:=\inf_{s\geq 0}\eta(s)>0$ and $\mathbb{E}[S_{T}^{2}]=O(1)$ as $T\rightarrow 0$. 
Then, we have
\begin{align}
\left|\mathrm{VIX}_T
-\sqrt{V_{T}}\eta(S_{T})\right|\leq\frac{C_{1}(\tau)}{m_{\eta}}\frac{S_{T}}{\sqrt{V_{T}}}+\frac{C_{2}(\tau)}{m_{\eta}}\sqrt{V_{T}},
\end{align}
where $C_{1}(\tau),C_{2}(\tau)$ are given in \eqref{C:1:tau}-\eqref{C:2:tau}
and moreover
\begin{equation}
\label{tau:order}
\mathbb{E}\left|\mathrm{VIX}_T-\sqrt{V_{T}}\eta(S_{T})\right|=O(\tau^{1/2}),
\qquad\text{as $\tau\rightarrow 0$}.
\end{equation}
\end{corollary}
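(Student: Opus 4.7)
The plan is to reduce the corollary to Proposition~\ref{prop:VIXsmalltau} by exploiting the factorization $a-b=(a^2-b^2)/(a+b)$ for non-negative $a,b$. Setting $a=\mathrm{VIX}_T$ and $b=\sqrt{V_T}\eta(S_T)$, this gives
$$|\mathrm{VIX}_T-\sqrt{V_T}\eta(S_T)|=\frac{|\mathrm{VIX}_T^2-V_T\eta^2(S_T)|}{\mathrm{VIX}_T+\sqrt{V_T}\eta(S_T)}.$$
The numerator is already controlled by Proposition~\ref{prop:VIXsmalltau} by $C_1(\tau)S_T+C_2(\tau)V_T$, while for the denominator I would discard the $\mathrm{VIX}_T$ summand and use $\mathrm{VIX}_T+\sqrt{V_T}\eta(S_T)\ge m_\eta\sqrt{V_T}$, which is exactly where the new hypothesis $m_\eta>0$ enters. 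Dividing yields the claimed pointwise bound.

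To obtain the $L^1$ estimate \eqref{tau:order}, I would take expectations in this pointwise bound and control the two resulting terms separately. The term $\frac{C_2(\tau)}{m_\eta}\mathbb{E}[\sqrt{V_T}]$ is immediately $O(\tau^{1/2})$: by \eqref{C:2:tau} one has $C_2(\tau)=O(\tau^{1/2})$, and Jensen's inequality together with Proposition~\ref{prop:V:moments} yields $\mathbb{E}[\sqrt{V_T}]\le\sqrt{\mathbb{E}[V_T]}=O(1)$.

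The delicate term is $\frac{C_1(\tau)}{m_\eta}\mathbb{E}[S_T/\sqrt{V_T}]$. Since $C_1(\tau)=O(\tau)$ from \eqref{C:1:tau}, it suffices to show $\mathbb{E}[S_T/\sqrt{V_T}]=O(1)$ as $T\to 0$. I would apply Cauchy--Schwarz to split this as $\sqrt{\mathbb{E}[S_T^2]}\sqrt{\mathbb{E}[V_T^{-1}]}$; the first factor is $O(1)$ by hypothesis. The main obstacle is the negative moment $\mathbb{E}[V_T^{-1}]$, which is not directly supplied by Proposition~\ref{prop:V:moments}. To handle it, I would apply It\^o's formula to $\log V_t$ and write
$$V_T^{-1}=V_0^{-1}\exp\Bigl(-\int_0^T\sigma(V_s)dZ_s+\int_0^T(\tfrac{1}{2}\sigma^2(V_s)-\mu(V_s))ds\Bigr),$$
then rewrite the exponential as the Dol\'eans--Dade martingale $\mathcal{E}_T=\exp(-\int_0^T\sigma(V_s)dZ_s-\tfrac{1}{2}\int_0^T\sigma^2(V_s)ds)$ multiplied by a deterministic factor bounded by $e^{(M_\sigma^2+M_\mu)T}$. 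Novikov's criterion applies since $\sigma$ is bounded by $M_\sigma$, so $\mathbb{E}[\mathcal{E}_T]=1$ and $\mathbb{E}[V_T^{-1}]\le V_0^{-1}e^{(M_\sigma^2+M_\mu)T}=O(1)$. Combining the two contributions produces $\mathbb{E}|\mathrm{VIX}_T-\sqrt{V_T}\eta(S_T)|=O(\tau)+O(\tau^{1/2})=O(\tau^{1/2})$, which is the claim.
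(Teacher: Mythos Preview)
Your proof is correct and follows essentially the same route as the paper: the factorization $a-b=(a^2-b^2)/(a+b)$, the lower bound $\mathrm{VIX}_T+\sqrt{V_T}\eta(S_T)\ge m_\eta\sqrt{V_T}$, Cauchy--Schwarz on $\mathbb{E}[S_T/\sqrt{V_T}]$, and the bound on $\mathbb{E}[V_T^{-1}]$ via the exponential representation all match the paper's argument. The only cosmetic difference is that you invoke Novikov's criterion to conclude $\mathbb{E}[\mathcal{E}_T]=1$, whereas the paper uses the slightly more economical observation that a non-negative local martingale is a supermartingale, hence $\mathbb{E}[\mathcal{E}_T]\le 1$; both yield the same bound $\mathbb{E}[V_T^{-1}]\le V_0^{-1}e^{(M_\mu+M_\sigma^2)T}$.
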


Note that in Corollary~\ref{cor:VIXsmalltau}, a sufficient condition for the additional assumption 
$\mathbb{E}[S_{T}^{2}]=O(1)$ as $T\rightarrow 0$ to hold
is when $\rho<-1/\sqrt{2}$ (see the discussions in Remark~\ref{remark:S:T:p}). 

\subsection{Particular cases}

For a few particular cases of the drift function 
$\mu( \cdot )$ for the $V_t$ process we have closed form expressions for $\mathrm{VIX}_T$ in terms of $V_T, S_T$.
Recall that the stochastic volatility models are obtained in the limit $\eta(\cdot ) \equiv 1$:
\begin{eqnarray}\label{Svol}
&& \frac{dS_t}{S_t} = \sqrt{V_t} dW_t + (r - q) dt \,, \\
&& \frac{dV_t}{V_t} =\sigma(V_t) dZ_t + \mu(V_t) dt\,, \nonumber
\end{eqnarray}
with initial conditions $S_0 > 0, V_0 > 0$, 
where $W_t, Z_t$ are correlated standard Brownian motions with correlation $\rho$, $r$ is the risk-free rate
and $q$ is the dividend yield.

\begin{proposition}\label{prop:StochVolVIX}

(i) Assume that the asset price follows the stochastic volatility model (\ref{Svol}) and
$\mu(\cdot)\equiv\mu$ is constant. Then the price of a VIX call option is expressed as
\begin{equation}
C_{V}(K,T)
=\mathbb{E}\left[\left(\sqrt{V_{T}}\sqrt{\frac{e^{\mu\tau}-1}{\tau\mu}}-K\right)^{+}\right]\,.
\end{equation}

(ii) Assume that the asset price follows the stochastic volatility model (\ref{Svol}) and
that the drift term of the $V_t$ process is mean-reverting 
$\mu(V_{t})V_{t}=a(b-V_{t})$, where $a,b>0$. This includes the Cox-Ingersoll-Ross process as a special case.
For this case we have
\begin{equation}
C_{V}(K,T)
=\mathbb{E}\left[\left(\sqrt{V_{T}\frac{1-e^{-a\tau}}{a\tau}+b\left(1-\frac{1-e^{-a\tau}}{a\tau}\right)}-K\right)^{+}\right]\,.
\end{equation}
\end{proposition}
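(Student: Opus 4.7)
The plan is to reduce $\mathrm{VIX}_T$ to an explicit deterministic function of $V_T$ in each case, after which the VIX call pricing formula follows by direct substitution. Since $\eta \equiv 1$, formula \eqref{VIX:formula} gives
\[
\mathrm{VIX}_T^2 = \frac{1}{\tau}\int_T^{T+\tau}\mathbb{E}[V_t \mid \mathcal{F}_T]\,dt,
\]
and by the Markov property of $V_t$ this equals $\frac{1}{\tau}\int_T^{T+\tau}\mathbb{E}[V_t\mid V_T]\,dt$. The special structure of $\mu(\cdot)$ in (i) and (ii) lets me compute this conditional expectation in closed form via an integrating factor, turning $\mathrm{VIX}_T$ into a deterministic function of $V_T$.

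For part (i), with $\mu(\cdot)\equiv\mu$ constant, I apply It\^o to $e^{-\mu t}V_t$ and obtain $d(e^{-\mu t}V_t) = e^{-\mu t}V_t\sigma(V_t)\,dZ_t$, which is a local martingale. Under Assumption~\ref{assump:bounded} we have $|V_t\sigma(V_t)|\le M_\sigma V_t$, and Proposition~\ref{prop:V:moments} gives $\sup_{t\le T+\tau}\mathbb{E}[V_t^2]<\infty$ for small horizons, so the stochastic integral is in fact a square-integrable martingale. This yields $\mathbb{E}[V_t\mid V_T]=V_T e^{\mu(t-T)}$; integrating on $[T,T+\tau]$ and dividing by $\tau$ gives $\mathrm{VIX}_T^2 = V_T\cdot\frac{e^{\mu\tau}-1}{\mu\tau}$. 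Substituting into $C_V(K,T)=\mathbb{E}[(\mathrm{VIX}_T-K)^+]$ (after taking the square root) produces the stated formula. For part (ii), with $\mu(V_t)V_t=a(b-V_t)$, I use the integrating factor $e^{at}$, which turns the SDE into $d(e^{at}V_t) = ab\, e^{at}\,dt + e^{at}V_t\sigma(V_t)\,dZ_t$. The same integrability argument shows the stochastic integral is a true martingale, hence
\[
\mathbb{E}[V_t\mid V_T] = V_T e^{-a(t-T)} + b\bigl(1-e^{-a(t-T)}\bigr).
\]
Integrating and dividing by $\tau$ gives $\mathrm{VIX}_T^2 = V_T\frac{1-e^{-a\tau}}{a\tau} + b\bigl(1-\frac{1-e^{-a\tau}}{a\tau}\bigr)$, and taking the positive square root inside the expectation yields the claim.

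The only delicate step is upgrading the two local martingales produced by the integrating factors to true martingales, since the conditional-expectation identities for $V_t$ are the entire content of the proof. This is the step I expect to carry the most bookkeeping: under the paper's standing Assumption~\ref{assump:bounded}, boundedness of $\sigma$ together with the moment estimate of Proposition~\ref{prop:V:moments} yields the requisite $L^2$-boundedness of the integrands on $[0,T+\tau]$; outside Assumption~\ref{assump:bounded} (e.g.\ for the CIR case, where $\sigma(v)=\sigma v^{-1/2}$), one invokes the classical martingale property of the affine drift process to obtain the same conditional-mean formulas. Once this is settled, the remainder of each proof is the elementary integration shown above.
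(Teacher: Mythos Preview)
Your proposal is correct and follows essentially the same approach as the paper: compute the conditional mean $\mathbb{E}[V_t\mid\mathcal{F}_T]$ in closed form and integrate over $[T,T+\tau]$. The paper's proof is in fact more terse---it simply asserts the formulas $\mathbb{E}[V_s\mid\mathcal{F}_T]=e^{\mu(s-T)}V_T$ and $\mathbb{E}[V_s\mid\mathcal{F}_T]=V_Te^{-a(s-T)}+b(1-e^{-a(s-T)})$ without the integrating-factor derivation or the local-to-true martingale discussion you supply, so your version is more careful on precisely the point you flagged as delicate.
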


\begin{remark}\label{rmk:VIX}
In general, since $V_{t}$ is a time-homogeneous Markov process, in stochastic volatility models (\ref{Svol}), we have
\begin{equation}
\mathrm{VIX}_T^2 = \mathcal{F}(V_T) \mbox{ with }
\mathcal{F}(x):=\frac{1}{\tau}\int_{0}^{\tau}\mathbb{E}[V_{s}|V_{0}=x]ds\,,
\end{equation}
and the VIX option prices are given by
\begin{align}
\label{CVtime-h}
C_{V}(K,T)
=\mathbb{E}\left[\left(\sqrt{\mathcal{F}(V_{T})}-K\right)^{+}\right],
\qquad
P_{V}(K,T)
=\mathbb{E}\left[\left(K - \sqrt{\mathcal{F}(V_{T})} \right)^{+}\right] \,.
\end{align}
\end{remark}

\begin{remark}
The result of Proposition~\ref{prop:StochVolVIX} can be extended to the more general 
local-stochastic volatility model (\ref{LSvol}) with $\eta(S) = \eta_0 \sqrt{S}$
for the particular case when $W_t$ and $Z_t$ are uncorrelated. 
\footnote{Note that $\eta(S)=\eta_0 \sqrt{S}$ does not satisfy Assumptions~\ref{assump:bounded}, \ref{assump:lip} and \ref{assump:LDP}. However, 
Assumptions~\ref{assump:bounded}, \ref{assump:lip} and \ref{assump:LDP} are used to obtain the asymptotic results in Section~\ref{sec:SPX} and Section~\ref{sec:VIX} as $T\rightarrow 0$, whereas here we obtain some explicit formula under this special case for any finite $T$, that does not rely on Assumptions~\ref{assump:bounded}, \ref{assump:lip} and \ref{assump:LDP} 
A similar comment holds for case 2) in Proposition~\ref{prop:StochVolVIX} where $\mu(v) = b(a/v-1)$ is not bounded.}

(i) Assuming $\mu(v)\equiv\mu$ we have
\begin{equation}
\mathrm{VIX}_T^2 = \eta_0^2 S_T V_T \frac{e^{(r-q+\mu)\tau}-1}{\tau(r-q+\mu)}\,.
\end{equation}
For this case VIX options become essentially options on the product $\sqrt{S_T V_T}$.
\begin{equation}
C_{V}(K,T)
=\mathbb{E}\left[\left(\eta_{0}\sqrt{S_{T}V_{T}}\sqrt{\frac{e^{(\mu+r-q)\tau}-1}{\tau(\mu+r-q)}}-K\right)^{+}\right],
\end{equation}
and the short-maturity asymptotics can be easily obtained as a European call option.

(ii) Assuming $\mu(v) v = a(b - v)$, we have 
\begin{align}
&\mathbb{E}\left[\frac{1}{\tau}\int_{T}^{T+\tau}V_{s}\eta^{2}(S_{s})ds\Big|\mathcal{F}_{T}\right]
\nonumber
\\
&=\frac{\eta_{0}^{2}}{\tau}\int_{T}^{T+\tau}S_{T}e^{(r-q)(s-T)}\left(V_{T}e^{-a(s-T)}+b\left(1-e^{-a(s-T)}\right)\right)ds
\nonumber
\\
&=\eta_{0}^{2}S_{T}V_{T}\frac{e^{(r-q-a)\tau}-1}{\tau(r-q-a)}
+\eta_{0}^{2}S_{T}b\left(\frac{e^{(r-q)\tau}-1}{\tau(r-q)}-\frac{e^{(r-q-a)\tau}-1}{\tau(r-q-a)}\right).
\end{align}
such that
\begin{equation}
C_{V}(K,T)
=\mathbb{E}\left[\left(\eta_{0}\sqrt{S_{T}V_{T}\frac{e^{(r-q-a)\tau}-1}{\tau(r-q-a)}
+S_{T}b\left(\frac{e^{(r-q)\tau}-1}{\tau(r-q)}-\frac{e^{(r-q-a)\tau}-1}{\tau(r-q-a)}\right)}-K\right)^{+}\right].
\end{equation}
If we let $\tau\rightarrow 0$, then the second term proportional to $S_T$ vanishes, and we obtain 
$C_{V}(K,T)\rightarrow\mathbb{E}[(\sqrt{S_{T}V_{T}}-K)^{+}]$, which is similar to the previous case. 
The short-maturity asymptotics can be again easily obtained as a European call option.
However, at finite $\tau$, it is a bit more complicated than European options,
and it will involve solving a slightly different variational problem.
\end{remark}

\subsection{The main result}

We first present the main result for OTM VIX options in the local-stochastic volatility model (\ref{LSvol}).

\begin{theorem}\label{Thm:VIX}
Under the settings of Corollary~\ref{cor:VIXsmalltau}, suppose Assumption~\ref{assump:LDP} holds and further assume $\tau=o(1)$ as $T\rightarrow 0$, then the short maturity asymptotics of OTM VIX options in the local-stochastic volatility model (\ref{LSvol}) are as follows.

(i) The asymptotics of the OTM VIX call option is 
\begin{equation}
\lim_{T\to 0} T \log C_V(K,T) = - J_V(K,S_0, V_0;\rho)\,,\quad K > \eta(S_0) \sqrt{V_0}\,,
\end{equation}
where 
\begin{align}
J_V(K,S_0,V_0;\rho) =
\inf_{y,z}
\left\{\frac{1}{2(1-\rho^{2})z}\left(\int_{S_{0}}^{(\eta^{2})^{-1}(K^{2}e^{-y})}\frac{dx}{x\eta(x)}-\int_{V_{0}}^{e^{y}}\frac{\rho dx}{\sqrt{x}\sigma(x)}\right)^{2}
+H(y,z)\right\},\label{J:V:eqn}
\end{align}
where
\begin{equation}\label{Hdef}
H(y,z):=\inf_{\substack{h(0)=\log V_{0},h(1)=y\\
\int_{0}^{1}e^{h(t)}dt=z}}\frac{1}{2}\int_{0}^{1}\left(\frac{h'(t)}{\sigma(e^{h(t)})}\right)^{2}dt.
\end{equation}

(ii) The asymptotics of OTM VIX put options is
\begin{equation}
\lim_{T\to 0} T \log P_V(K,T) = - J_V(K,S_0, V_0;\rho)\,,
\end{equation}
where $J_V$ is defined in \eqref{J:V:eqn} with $K < \eta(S_0) \sqrt{V_0}$.
\end{theorem}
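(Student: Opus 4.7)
The strategy parallels that of Theorem~\ref{Thm:E}, with the event set adapted to VIX options. I plan to carry it out in three steps: (a) replace $\mathrm{VIX}_T$ by $\sqrt{V_T}\,\eta(S_T)$ at the logarithmic scale using Corollary~\ref{cor:VIXsmalltau}; (b) apply the small-time sample-path LDP for $(\log S_T,\log V_T)$; and (c) reduce the resulting path-space variational problem to the two-variable form \eqref{J:V:eqn} via a Cauchy--Schwarz argument.

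For step (a), since $x\mapsto(x-K)^+$ is $1$-Lipschitz,
\begin{equation*}
\bigl|C_V(K,T) - e^{-rT}\mathbb{E}[(\sqrt{V_T}\eta(S_T) - K)^+]\bigr| \leq \mathbb{E}|\mathrm{VIX}_T - \sqrt{V_T}\eta(S_T)| = O(\tau^{1/2})
\end{equation*}
by Corollary~\ref{cor:VIXsmalltau} (and analogously for the put). Under $\tau=o(1)$, interpreted with decay rapid enough that the $O(\tau^{1/2})$ error is dominated by the exponentially small target $e^{-J_V/T}$, this substitution preserves the log asymptotics. For step (b), Assumption~\ref{assump:LDP} yields a Freidlin--Wentzell small-time LDP for the rescaled path $(\log S_{tT},\log V_{tT})_{t\in[0,1]}$ at speed $1/T$; inverting the joint covariance matrix of $(\log S,\log V)$ and completing the square gives the sample-path rate function
\begin{equation*}
I(f,g) = \frac{1}{2(1-\rho^{2})}\int_{0}^{1}\left(\frac{f'(t)}{\eta(e^{f(t)})e^{g(t)/2}} - \rho\frac{g'(t)}{\sigma(e^{g(t)})}\right)^{2} dt + \frac{1}{2}\int_{0}^{1}\frac{(g'(t))^{2}}{\sigma^{2}(e^{g(t)})} dt.
\end{equation*}
A Varadhan-type argument, using the moment controls of Proposition~\ref{prop:V:moments} together with the assumptions inherited from Corollary~\ref{cor:VIXsmalltau}, then shows that $\lim_{T\to 0}T\log\mathbb{E}[(\sqrt{V_T}\eta(S_T)-K)^+] = -\inf\{I(f,g):e^{g(1)}\eta^{2}(e^{f(1)})\geq K^{2}\}$.

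Step (c) mirrors the corresponding argument in Theorem~\ref{Thm:E}. Applying Cauchy--Schwarz with weight $b(t)=e^{g(t)/2}$ to the first integral in $I(f,g)$ gives
\begin{equation*}
\int_{0}^{1}\left(\frac{f'(t)}{\eta(e^{f(t)})e^{g(t)/2}} - \rho\frac{g'(t)}{\sigma(e^{g(t)})}\right)^{2} dt \geq \frac{1}{z}\left(\int_{S_{0}}^{e^{f(1)}}\frac{dx}{x\eta(x)} - \rho\int_{V_{0}}^{e^{g(1)}}\frac{dx}{\sqrt{x}\sigma(x)}\right)^{2},
\end{equation*}
where $z=\int_{0}^{1}e^{g(t)}dt$ and the two inner integrals on the right arise from the substitutions $x=e^{f(t)}$ and $x=e^{g(t)}$ applied to $\int ab\,dt$. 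Equality holds when the integrand is proportional to $e^{g(t)/2}$, an ODE constraint on $f$ which, paired with the endpoint data, yields an admissible minimizer. Setting $y=g(1)$, the lower bound depends only on $f(1), y, z$; minimizing $\tfrac12\int(g')^{2}/\sigma^{2}$ over $g$ with $g(0)=\log V_{0}$, $g(1)=y$, $\int_{0}^{1}e^{g(t)}dt=z$ produces exactly $H(y,z)$. The OTM boundary $e^{g(1)}\eta^{2}(e^{f(1)})=K^{2}$ fixes $e^{f(1)}=(\eta^{2})^{-1}(K^{2}e^{-y})$, so minimizing over $(y,z)$ delivers \eqref{J:V:eqn}. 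For the OTM put, the constraint $V_{T}\eta^{2}(S_{T})\leq K^{2}$ has the same boundary, hence the same rate function.

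The principal difficulties I anticipate are twofold: (i) confirming the Cauchy--Schwarz lower bound is tight, i.e., exhibiting an admissible pair $(f,g)$ that solves the proportionality ODE jointly with the endpoint and integral constraints, which should parallel the construction used in Theorem~\ref{Thm:E}; and (ii) justifying the log-scale substitution $\mathrm{VIX}_T\mapsto\sqrt{V_T}\eta(S_T)$, since the additive error from Corollary~\ref{cor:VIXsmalltau} is only polynomially small in $\tau$ while the target is exponentially small in $1/T$, so the assumption $\tau=o(1)$ must implicitly supply enough decay (e.g.\ $\tau^{1/2}=o(e^{-J_V/T})$) to render this error subleading.
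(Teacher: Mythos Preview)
Your steps (b) and (c) are essentially the paper's argument and are carried out correctly; the Cauchy--Schwarz reduction, the identification of $H(y,z)$, and the handling of the boundary constraint $e^{g(1)}\eta^2(e^{f(1)})=K^2$ all match.

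The genuine gap is in step (a), and you have correctly diagnosed it yourself in difficulty (ii) without finding the fix. Under the stated hypothesis $\tau=o(1)$ alone, the $L^1$ bound $\mathbb{E}|\mathrm{VIX}_T-\sqrt{V_T}\eta(S_T)|=O(\tau^{1/2})$ is far too weak to survive taking $T\log(\cdot)$, since the option price itself is $e^{-J_V/T}$; you would need $\tau^{1/2}=o(e^{-c/T})$, which is not assumed. The paper does \emph{not} strengthen the assumption on $\tau$. Instead it uses a different part of Corollary~\ref{cor:VIXsmalltau}: the \emph{pointwise} (almost sure) bound
\[
\bigl|\mathrm{VIX}_T-\sqrt{V_T}\eta(S_T)\bigr|\le \frac{C_1(\tau)}{m_\eta}\frac{S_T}{\sqrt{V_T}}+\frac{C_2(\tau)}{m_\eta}\sqrt{V_T},
\]
and argues at the \emph{probability} level. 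First reduce $C_V(K,T)$ to $\mathbb{Q}(\mathrm{VIX}_T\ge K)$ by the usual upper/lower bound argument (this needs a moment bound on $\mathrm{VIX}_T$, which follows from Assumption~\ref{assump:bounded}). Then, for any fixed $\delta>0$,
\[
\mathbb{Q}\bigl(|\mathrm{VIX}_T-\sqrt{V_T}\eta(S_T)|\ge\delta\bigr)
\le \mathbb{Q}\Bigl(\tfrac{C_1(\tau)}{m_\eta}\tfrac{S_T}{\sqrt{V_T}}+\tfrac{C_2(\tau)}{m_\eta}\sqrt{V_T}\ge\delta\Bigr).
\]
Since $C_1(\tau),C_2(\tau)\to 0$ and $(\log S_T,\log V_T)$ satisfies an LDP at speed $1/T$, the right-hand side has $T\log(\cdot)\to-\infty$. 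This is exponential equivalence, and it lets you replace $\mathbb{Q}(\mathrm{VIX}_T\ge K)$ by $\mathbb{Q}(\sqrt{V_T}\eta(S_T)\ge K)$ with no extra decay assumption on $\tau$. Your Lipschitz/expectation route cannot be repaired under $\tau=o(1)$; the exponential-equivalence route is the missing idea.
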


The rate function for OTM VIX options depends on the auxiliary function $H(y,z)$. 
The same function appears in the short-maturity asymptotics of the 
European options, see (\ref{Hdef1}). We study the general properties of this function and give closed form evaluations for commonly used cases for $\sigma(v)$ in Section~\ref{sec:H}.

\textbf{Stochastic volatility models.}
The stochastic volatility model (\ref{Svol}) is obtained by taking $\eta(x)\equiv 1$ in (\ref{LSvol}). This case is not covered by directly taking
$\eta(x) \equiv 1$ into Theorem ~\ref{Thm:VIX}. VIX options in the stochastic volatility model are effectively European-type options on $V_T$. 

Restricting further to models with time-homogeneous volatility dynamics, 
we have $\mathrm{VIX}_T^2= F(V_T)$, see Remark~\ref{rmk:VIX}, 
and the VIX options are European-type options on $V_T$.
For these models, the VIX futures price is $F_V(T) = \mathbb{E}[\sqrt{\mathcal{F}(V_T)}] = \sqrt{\mathcal{F}(V_0)} + O(T)$ as $T\rightarrow 0$. 

\begin{proposition}\label{prop:VIX}
Consider the OTM VIX options in the stochastic volatility models (\ref{Svol}) with time-homogeneous volatility process.
In these models one has $\mathrm{VIX}_T^2 = \mathcal{F}(V_T)$, see Remark~\ref{rmk:VIX}.
Assuming that there exists some $p>1$ such that 
$\mathbb{E}[(\mathcal{F}(V_T))^{p/2}]=O(1)$ as $T\rightarrow 0$, the short maturity of these options
is given by 
\begin{equation}
\lim_{T\to 0} T \log C_V(K,T) = - J_V(K, V_0)\,,\quad
\lim_{T\to 0} T \log P_V(K,T) = - J_V(K, V_0)\,,
\end{equation}
where 
\begin{align}
& J_V(K,V_0) =
\frac{1}{2}\left(\int_{V_0}^{\mathcal{F}^{-1}(K^2)}
\frac{dx}{x\sigma(x)}\right)^{2}\,.
\end{align}
\end{proposition}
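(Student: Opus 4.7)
The plan is to reduce the problem to a one-dimensional short-time large deviations principle (LDP) for $V_T$, solve the resulting scalar variational problem in closed form, and then translate probability asymptotics into option-price asymptotics.

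First, by Remark~\ref{rmk:VIX}, in the time-homogeneous stochastic volatility model~(\ref{Svol}) one has $\mathrm{VIX}_T=\sqrt{\mathcal{F}(V_T)}$, a deterministic and monotone increasing function of $V_T$ alone (monotonicity follows from the comparison theorem for scalar SDEs, since $V_s$ is monotone in $V_0$). Hence $(\sqrt{\mathcal{F}(V_T)}-K)^+$ is supported on $\{V_T\ge \mathcal{F}^{-1}(K^2)\}$ for $K>\sqrt{\mathcal{F}(V_0)}$, and the put payoff is supported on the complementary half-line; only the small-time distribution of $V_T$ matters.

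Next, I would establish a small-time LDP for $V_T$. Setting $Y_t=\log V_t$, It\^{o}'s formula gives $dY_t=\sigma(e^{Y_t})\,dZ_t+(\mu(e^{Y_t})-\tfrac12\sigma(e^{Y_t})^2)\,dt$, a scalar SDE with bounded drift and (locally) H\"{o}lder diffusion, so standard Freidlin--Wentzell theory yields that $Y_T$ satisfies an LDP on $\mathbb{R}$ with speed $1/T$ and rate $I(y)=\inf_{h(0)=\log V_0,\,h(1)=y}\tfrac12\int_0^1(h'(t)/\sigma(e^{h(t)}))^2\,dt$. Changing variables to $u(t)=\int_{\log V_0}^{h(t)}d\xi/\sigma(e^{\xi})$ converts the Lagrangian into $\tfrac12(u')^2$ with boundary $u(1)=\int_{\log V_0}^{y}d\xi/\sigma(e^\xi)$, whose minimizer is the straight line, giving the closed form $I(y)=\tfrac12\bigl(\int_{V_0}^{e^y}dx/(x\sigma(x))\bigr)^2$. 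Since $I$ is strictly increasing in $|y-\log V_0|$, the infimum of $I$ over $\{v\ge\mathcal{F}^{-1}(K^2)\}$ equals $J_V(K,V_0)$, and analogously on the other tail.

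Finally, I translate the LDP into the option-price limit. For the put, $(K-\sqrt{\mathcal{F}(V_T)})^+\le K$ is bounded, so the sandwich $\varepsilon\mathbb{P}(\sqrt{\mathcal{F}(V_T)}\le K-\varepsilon)\le\mathbb{E}[(K-\sqrt{\mathcal{F}(V_T)})^+]\le K\mathbb{P}(\sqrt{\mathcal{F}(V_T)}\le K)$ combined with the LDP and $\varepsilon\to 0$ yields the claim immediately. For the call, the lower bound $\mathbb{E}[(X-K)^+]\ge\varepsilon\mathbb{P}(X\ge K+\varepsilon)$ is identical; the matching upper bound is the main obstacle since the payoff is unbounded. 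The natural tool is the H\"{o}lder decomposition $\mathbb{E}[(X-K)^+]\le\mathbb{E}[\mathcal{F}(V_T)^{p/2}]^{1/p}\mathbb{P}(X\ge K)^{1-1/p}$, which with the moment hypothesis yields a rate $-(1-1/p)J_V$. To recover the full $-J_V$ I would either iterate by sending $p\to\infty$, using that under Assumption~\ref{assump:bounded} all moments of $V_T$ are uniformly bounded in small $T$ (via Doob/Novikov applied to $\int\sigma(V_s)\,dZ_s$ with bounded integrand), or perform a truncation $(X-K)^+\le(M-K)\mathbf{1}_{X>K}+X\mathbf{1}_{X>M}$ with $M=M(T)\to\infty$ chosen to balance the two contributions via the moment bound. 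Either variant closes the gap and delivers the stated exact rate.
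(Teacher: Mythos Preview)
Your proposal is correct and follows essentially the same route as the paper: reduce to a European-type option on $V_T$ via $\mathrm{VIX}_T=\sqrt{\mathcal{F}(V_T)}$, convert option prices to tail probabilities by upper/lower bounds, and identify the rate function with the known short-maturity European rate for the one-dimensional diffusion $V_t$ (the paper simply cites the local-volatility literature here, while you compute it explicitly via the change of variables $u=\int d\xi/\sigma(e^\xi)$, which is a nice self-contained touch).

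One small refinement on the call upper bound: the proposition assumes only a \emph{single} $p>1$ with $\mathbb{E}[\mathcal{F}(V_T)^{p/2}]=O(1)$, so your option (a) of sending $p\to\infty$ requires extra hypotheses. Your option (b) is the right one and is exactly what the paper does (it refers back to the proof of Theorem~\ref{Thm:E}), but note that the cleanest implementation is to fix the truncation level $U>K$ \emph{independently of $T$}, bound $(X-K)^+\le(U-K)\mathbf{1}_{X>K}+X\mathbf{1}_{X>U}$, apply H\"older with the given $p$, take $T\to 0$ to obtain $\limsup_T T\log C_V\le\max\{-J_V(K,V_0),\,-\tfrac{1}{q}J_V(U,V_0)\}$, and only then let $U\to\infty$ so that the second term tends to $-\infty$. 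This two-step limit avoids the delicate balancing of a $T$-dependent $M(T)$ and uses precisely the stated moment hypothesis.
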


\textbf{Uncorrelated case.}
In the uncorrelated limit $\rho=0$ the variational problem for the rate function in Theorem~\ref{Thm:VIX} simplifies, as shown in the next result.

\begin{corollary}
Under the settings of Theorem~\ref{Thm:VIX}.
The rate function for OTM VIX call options $J_V(K,S_0,V_0;0)$ for the uncorrelated $(\rho=0)$ 
local-stochastic volatility model (\ref{LSvol}) is given by
\begin{align}
& J_V(K,S_0,V_0;0) =
\inf_{y,z}
\left\{\frac{1}{2z}\left(\int_{S_{0}}^{(\eta^{2})^{-1}(K^{2}e^{-y})}\frac{dx}{x\eta(x)}\right)^{2}
+H(y,z)\right\}\,,
\end{align}
for $K > \eta(S_0) \sqrt{V_0}$
and the rate function for OTM VIX put option
is also given by $J_{V}(K,S_0,V_0;0)$ with
$K < \eta(S_0) \sqrt{V_0}$.
\end{corollary}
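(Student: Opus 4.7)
The plan is to derive the corollary as an immediate specialization of Theorem~\ref{Thm:VIX} by setting $\rho = 0$ in the variational formula (\ref{J:V:eqn}). First I would observe that every hypothesis of Theorem~\ref{Thm:VIX}---Assumptions~\ref{assump:bounded}, \ref{assump:lip}, \ref{assump:LDP}, together with the moment and timescale conditions inherited from Corollary~\ref{cor:VIXsmalltau} and $\tau = o(1)$ as $T \to 0$---concerns only the coefficient functions $\eta,\mu,\sigma$ and the variance process, and none of them depends on the correlation $\rho$. Hence the conclusion of Theorem~\ref{Thm:VIX} applies verbatim at $\rho = 0$.

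Next I would substitute $\rho = 0$ directly into (\ref{J:V:eqn}). Two simplifications occur simultaneously. The prefactor $\frac{1}{2(1-\rho^2)z}$ collapses to $\frac{1}{2z}$, and the second term inside the squared bracket, namely $\int_{V_0}^{e^y} \frac{\rho\,dx}{\sqrt{x}\,\sigma(x)}$, vanishes identically because of the explicit factor of $\rho$ in its integrand; note that the integral itself is finite on the feasible set since $\inf_x \sigma(x) > 0$ by Assumption~\ref{assump:LDP}, so there is no $0\cdot\infty$ ambiguity. The auxiliary function $H(y,z)$ defined in (\ref{Hdef}) is built entirely from $\sigma$ and the path $h$, and is therefore unchanged. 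Reading off the result gives exactly the formula claimed for OTM calls with $K > \eta(S_0)\sqrt{V_0}$; the OTM put case with $K < \eta(S_0)\sqrt{V_0}$ then follows at once by applying part (ii) of Theorem~\ref{Thm:VIX} to the same simplified expression.

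Since the argument reduces to plugging $\rho = 0$ into a formula that is continuous in $\rho$, there is no genuine obstacle. The only point that merits a brief verbal remark is that the feasible set for the joint infimum over $(y,z)$---the locus where $H(y,z) < \infty$, determined by reachability of the endpoint $y$ and the Jensen-type constraint $z \geq V_0^{\,\text{something}}$ arising from the integral constraint in (\ref{Hdef})---is independent of $\rho$, so the variational problem is well-posed in exactly the same sense as in Theorem~\ref{Thm:VIX}, and the infimum is attained (or approached) on the same set.
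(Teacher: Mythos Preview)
Your proposal is correct and matches the paper's treatment: the corollary is stated without a separate proof because it follows immediately from Theorem~\ref{Thm:VIX} by substituting $\rho=0$ into (\ref{J:V:eqn}), exactly as you outline. Your parenthetical remark about the feasible set in the last paragraph is unnecessary (and the ``$z \geq V_0^{\text{something}}$'' phrasing is imprecise), but it does no harm to the argument.
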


Next, we present the asymptotics for ATM VIX options.

\begin{theorem}\label{thm:VIX:ATM}
Suppose the assumptions in Theorem~\ref{thm:European:ATM} hold. Furthermore, assume that $\tau = O(T^{1+\epsilon})$ for some 
$\epsilon >0$ in~\eqref{tau:order} under the settings of Corollary~\ref{cor:VIXsmalltau} and suppose Assumption~\ref{assump:LDP} holds. The asymptotics of ATM VIX options are given by
\begin{align}
&\lim_{T\rightarrow 0}\frac{1}{\sqrt{T}}C_{V}(K,T)
=\lim_{T\rightarrow 0}\frac{1}{\sqrt{T}}P_{V}(K,T)
\nonumber
\\
&=\frac{1}{\sqrt{2\pi}}\sqrt{\left((\eta(S_{0})\frac{1}{2}\sigma(V_{0})\sqrt{V_{0}}+\eta'(S_{0})\eta(S_{0})S_{0}V_{0}\rho\right)^{2}+\left(\eta'(S_{0})\eta(S_{0})S_{0}V_{0}\sqrt{1-\rho^{2}}\right)^{2}}.
\end{align}
\end{theorem}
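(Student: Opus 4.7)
The strategy is to replace $\mathrm{VIX}_T$ by $Y_T:=\sqrt{V_T}\,\eta(S_T)$, linearize $Y_T$ via It\^o's formula around $(S_0,V_0)$, and reduce the computation to the expected positive part of a centered Gaussian. Put-call parity $C_V(K,T)-P_V(K,T)=e^{-rT}(F_V(T)-K)$ at the ATM strike $K=F_V(T)$ gives $C_V=P_V$, so it suffices to treat the call. By Corollary~\ref{cor:VIXsmalltau}, $\mathbb{E}|\mathrm{VIX}_T-Y_T|=O(\tau^{1/2})=o(\sqrt T)$ under the hypothesis $\tau=O(T^{1+\varepsilon})$, and using $|(a)^+-(b)^+|\le|a-b|$ this replaces $\mathrm{VIX}_T$ by $Y_T$ inside the payoff with error $o(\sqrt T)$; it also gives $|F_V(T)-\mathbb{E}[Y_T]|=o(\sqrt T)$.

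Applying It\^o's formula to $f(s,v)=\sqrt v\,\eta(s)$ together with~\eqref{LSvol}, the martingale part of $dY_t$ equals
\[
V_t S_t\,\eta(S_t)\eta'(S_t)\,dW_t+\tfrac12\sqrt{V_t}\,\eta(S_t)\,\sigma(V_t)\,dZ_t,
\]
while the drift is integrable under Assumptions~\ref{assump:bounded}--\ref{assump:LDP}, Proposition~\ref{prop:V:moments}, and the hypothesis $\max_{0\le t\le T}\mathbb{E}[S_t^4]=O(1)$ inherited from Theorem~\ref{thm:European:ATM}. Freezing the coefficients at $t=0$ produces
\[
Y_T-Y_0=\alpha W_T+\beta Z_T+R_T,\quad \alpha:=V_0 S_0\,\eta(S_0)\eta'(S_0),\quad \beta:=\tfrac12\sqrt{V_0}\,\eta(S_0)\sigma(V_0),
\]
with a remainder $R_T$ collecting the drift and the coefficient increments. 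Standard It\^o-isometry estimates combined with the Lipschitz continuity (Assumption~\ref{assump:lip}) and the moment bounds above yield $\mathbb{E}[R_T^2]=o(T)$, so $\mathbb{E}|R_T|=o(\sqrt T)$; in particular $\mathbb{E}[Y_T]-Y_0=O(T)$ and hence $K-Y_0=F_V(T)-Y_0=o(\sqrt T)$, so the offset of the strike from $Y_0$ is negligible at the $\sqrt T$ scale.

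The variable $\alpha W_T+\beta Z_T$ is centered Gaussian with variance $(\alpha^2+2\alpha\beta\rho+\beta^2)T=\sigma_*^2 T$, where $\sigma_*^2=(\beta+\alpha\rho)^2+\alpha^2(1-\rho^2)$ is exactly the expression under the square root in the statement (using the decomposition $W=\rho Z+\sqrt{1-\rho^2}\,\hat W$ with $\hat W\perp Z$). Since $\mathbb{E}[(\sigma_*\sqrt T\,N)^+]=\sigma_*\sqrt T/\sqrt{2\pi}$ for $N\sim\mathcal N(0,1)$, one concludes
\[
T^{-1/2}\,C_V(F_V(T),T)=T^{-1/2}\mathbb{E}[(\alpha W_T+\beta Z_T)^+]+o(1)\ \longrightarrow\ \sigma_*/\sqrt{2\pi},
\]
as claimed. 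The main technical obstacle is the uniform $L^1$ control of $R_T$: the drift involves the singular factor $V_t^{-1/2}$ coming from $\partial_v f$, which requires negative moment bounds on $V_t$ (these follow from the boundedness of $\mu,\sigma$ via the additive representation of $\log V_t$, which is sub-Gaussian), while the martingale contribution needs an $L^2$ modulus-of-continuity estimate on the $(S_t,V_t)$-dependent diffusion coefficients near $t=0$, built from Lipschitz continuity together with the SDE moment bounds. An alternative that sidesteps the $V_t^{-1/2}$ singularity is to expand $V_T\,\eta^2(S_T)$ via It\^o first and take the square root at the end, arriving at the same $\alpha, \beta$.
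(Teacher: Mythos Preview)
Your argument is correct and leads to the same limit, but the route differs from the paper's. The paper first replaces $(S_T,V_T)$ by explicit Gaussian approximations $\hat S_T=S_0+\eta(S_0)S_0\sqrt{V_0}\,W_T$ and $\hat V_T=V_0+\sigma(V_0)V_0\,Z_T$, and then Taylor-expands the nonlinear map $(s,v)\mapsto\sqrt{v}\,\eta(s)$ around $(S_0,V_0)$. Because $\sqrt{\cdot}$ is not globally Lipschitz, this second step is carried out on the truncated event $\{\hat V_T\ge V_0/2\}$, and the complement is discarded via large deviations estimates (this is where Assumption~\ref{assump:LDP} enters the paper's proof). Your approach instead applies It\^o to $f(s,v)=\sqrt{v}\,\eta(s)$ and freezes the diffusion coefficients; this bypasses the truncation and LDP machinery entirely, at the cost of needing an $L^2$ modulus-of-continuity estimate for the frozen-coefficient error in the martingale integrands. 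Both approaches ultimately need the same moment inputs.

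One small correction: your stated ``main technical obstacle'' is not quite where you locate it. Since $dV_t=V_t\mu(V_t)\,dt+V_t\sigma(V_t)\,dZ_t$, the $\partial_v f$ and $\partial_{vv}f$ contributions to the It\^o drift are $\tfrac12\sqrt{V_t}\,\eta(S_t)\mu(V_t)$ and $-\tfrac18\sqrt{V_t}\,\eta(S_t)\sigma^2(V_t)$ respectively, so the $V_t^{-1/2}$ factor cancels and no negative moments of $V_t$ are needed. The genuine work in controlling $R_T$ lies in the martingale increment $\int_0^T(a_t-a_0)\,dW_t$ with $a_t=V_tS_t\,\eta(S_t)\eta'(S_t)$: showing $\int_0^T\mathbb E[(a_t-a_0)^2]\,dt=o(T)$ requires continuity of $\eta'$ (or the second-derivative control implicit in the hypotheses of Corollary~\ref{cor:VIXsmalltau}), together with the fourth-moment bound on $S_t$ and the $V_t$-moment bounds from Proposition~\ref{prop:V:moments}.
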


Theorem~\ref{thm:VIX:ATM} shows that the prices of ATM VIX options
are of the order $\sqrt{T}$ as $T\rightarrow 0$, 
and it provides the exact formula for the leading-order term.

\section{The function $H(y,z)$}
\label{sec:H}

We study in this section in more detail the function
\begin{equation}\label{Hgen}
H(y,z):=\inf_{\substack{h(0)=\log V_{0},h(1)=y\\
\int_{0}^{1}e^{h(t)}dt=z}}\frac{1}{2}\int_{0}^{1}\left(\frac{h'(t)}{\sigma(e^{h(t)})}\right)^{2}dt\,,
\end{equation}
which appears in the short maturity limit of both European and VIX options in the local-stochastic volatility model considered. 

The function $H(z)$ appearing in the short-maturity limit for the European options is related to $H(y,z)$ as
\begin{equation}
\inf_{y\geq 0} H(y,z) = H(z)\,.
\end{equation}

Next, we will show that the function $H(y,z)$ can be evaluated 
explicitly for two commonly used vol-of-vol functions $\sigma(\cdot)$.

\subsubsection{Constant $\sigma(v) \equiv \sigma$}

This case corresponds to log-normal type process for $V_t$. For this case we have an explicit result.

\begin{proposition}\label{prop:Hlognorm}
The function $H(y,z)$ for the case $\sigma(v) = \sigma$ is given by
\begin{equation}\label{HLN}
H(y,z) = \frac{1}{2\sigma^2} I\left(\frac{z}{V_0}, \frac{e^{\frac12 y}}{\sqrt{V_0}} \right)\,,
\end{equation}
where $I(u,v)$ is 
\begin{equation}\label{Iuv}
I(u,v) = 8 F(v/u) + 4 \frac{1+v^2}{u} - 4\pi^2 \,.
\end{equation}
The function $F(\rho)$ is defined as
\begin{equation}\label{Fdef}
F(\rho) :=
\begin{cases}
\frac12 x_1^2 - \rho \cosh x_1 + \frac{\pi^2}{2} \,, & 0 < \rho < 1\,, \\
-\frac12 y_1^2 + \rho \cos y_1 + \pi y_1 \,, & \rho \geq 1\,, \\
\end{cases}
\end{equation}
where $x_1, y_1$ are the solutions of the equation
\begin{equation}
\rho \frac{\sinh x_1}{x_1} = 1\,,\qquad
y_1 + \rho \sin y_1 = \pi\,.
\end{equation}
\end{proposition}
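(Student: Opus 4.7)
The plan is to set up the Euler--Lagrange equation for this constrained minimization and solve it in closed form. As a first step, I would reduce to the normalized case $V_{0}=1$ by substituting $\tilde{h}(t):=h(t)-\log V_{0}$: the action $\int_{0}^{1}(h'(t))^{2}dt$ is invariant under this additive shift, while the data become $\tilde{h}(0)=0$, $\tilde{h}(1)=2\log v$ and $\int_{0}^{1}e^{\tilde{h}(t)}dt=u$, where $u:=z/V_{0}$ and $v:=e^{y/2}/\sqrt{V_{0}}$. This alone explains why the answer depends only on the pair $(u,v)$ and why it takes the scaling form $H(y,z)=(2\sigma^{2})^{-1}I(u,v)$.

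Next, introducing a Lagrange multiplier for the integral constraint, the stationarity condition is the Liouville equation $h''(t)=\mu\,e^{h(t)}$ for some real constant $\mu$, whose first integral (obtained by multiplying by $h'$) reads $\tfrac{1}{2}(h'(t))^{2}=\mu\,e^{h(t)}+C$. A very convenient observation is that, on any such extremal,
\[
\int_{0}^{1}(h'(t))^{2}\,dt \;=\; 2\mu\int_{0}^{1}e^{h(t)}\,dt+2C \;=\; 2\mu u+2C,
\]
so the whole computation reduces to reading off $(\mu,C)$ from the boundary and integral conditions. Depending on the signs of $\mu$ and $C$, the solutions fall into three explicit families: a $\mathrm{sech}^{2}$-profile $e^{h(t)}=(C/|\mu|)\,\mathrm{sech}^{2}((k/2)(t-t_{0}))$ with $\mu<0$, $C>0$, $k=\sqrt{2C}$; a $\sec^{2}$-profile $e^{h(t)}=(|C|/\mu)\,\sec^{2}((m/2)(t-t_{0}))$ with $\mu>0$, $C<0$, $m=\sqrt{-2C}$; and a $\mathrm{csch}^{2}$-profile with $\mu>0$, $C>0$, which needs to be excluded as the global minimizer in the regime considered.

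Imposing $h(0)=0$, $h(1)=2\log v$ and $\int_{0}^{1}e^{h}\,dt=u$ yields three equations for the three unknowns $(\mu,C,t_{0})$. The key algebraic step is to show that, after elementary hyperbolic/trigonometric manipulations, these collapse to a single implicit equation for a scalar parameter (the effective half-width of the profile) in which only $\rho:=v/u$ enters: the $\mathrm{sech}$-branch produces $\rho(\sinh x_{1})/x_{1}=1$ for $0<\rho<1$, and the $\sec$-branch produces $y_{1}+\rho\sin y_{1}=\pi$ for $\rho\geq 1$. Substituting back, $\mu u+C$ reduces to $\tfrac{1}{2}x_{1}^{2}-\rho\cosh x_{1}+\pi^{2}/2$ in the first regime and $-\tfrac{1}{2}y_{1}^{2}+\rho\cos y_{1}+\pi y_{1}$ in the second, which assemble into $I(u,v)=8F(v/u)+4(1+v^{2})/u-4\pi^{2}$.

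The main obstacle I expect is twofold. First, condensing the two endpoint equations and the integral constraint, which a priori also involve $t_{0}$ and the scale $k$ (or $m$), into the clean single-variable relations for $x_{1}$ and $y_{1}$ requires a careful (but elementary) manipulation using $\cosh$/$\cos$ addition formulas and identities such as $\int\mathrm{sech}^{2}=\tanh$; one also must verify that a unique admissible root exists in each $\rho$-regime and identify it as the one that is pinned down by monotonicity of $u$ in the parameter at fixed $v$. Second, because the Euler--Lagrange equation only provides stationarity and admits the extra $\mathrm{csch}^{2}$-branch, it is necessary to confirm that the chosen $(\mu,C,t_{0})$ actually realizes the global infimum. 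For this I would establish existence of a minimizer via the direct method in $H^{1}([0,1])$---the functional is convex in $h'$ and the constraint $\int e^{h}=u$ is continuous under the compact Sobolev embedding $H^{1}([0,1])\hookrightarrow C([0,1])$---and then match it to the correct branch via the sign of $\rho-1$.
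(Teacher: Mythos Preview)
Your proposal is correct as a plan, but takes a genuinely different route from the paper. You attack the variational problem head-on: set up the constrained Euler--Lagrange equation (the Liouville ODE $h''=\mu e^{h}$), integrate once to $\tfrac12(h')^{2}=\mu e^{h}+C$, classify the $\mathrm{sech}^{2}/\sec^{2}/\mathrm{csch}^{2}$ families, and then wrestle the three boundary/integral conditions down to the single scalar equations in $\rho=v/u$. The paper instead goes around the calculus of variations entirely: it identifies $H(y,z)$ as the large-deviation rate function for $\mathbb{Q}\big(\tfrac1T\int_{0}^{T}V_{t}\,dt\in\cdot,\ \log V_{T}\in\cdot\big)$ via the contraction principle, writes this joint law in closed form using Yor's Hartman--Watson formula for the time-integral of geometric Brownian motion, and then simply reads off the rate function from the known $t\to 0$ asymptotics of the Hartman--Watson integral $\theta_{\rho/t}(t)$ (Pirjol, 2020), where the function $F(\rho)$ already appears. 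The paper even remarks that an equivalent result was obtained by the Euler--Lagrange route in \cite{PZSABR}, which is essentially what you are sketching.

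The trade-off is clear. Your approach is self-contained and explains structurally why only $\rho=v/u$ survives, but you do have to carry out the nontrivial algebra you flag (collapsing three conditions into the $x_{1},y_{1}$ equations) and supply a genuine global-minimum argument to discard the $\mathrm{csch}^{2}$ branch. The paper's approach is short because it leans on two heavy off-the-shelf inputs---Yor's formula and the Hartman--Watson asymptotics---and gets global optimality for free from the LDP, at the cost of being less transparent about why the answer has the form it does.
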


\begin{remark}
An equivalent solution for $I(u,v)$ was given also in 
Proposition~4 in \cite{PZSABR}, where it was obtained by solving the Euler-Lagrange equation for the variational problem (\ref{Hgen}). The solution given here appears also in Section 3.2 in \cite{PZSABR} and is more convenient for practical applications.
\end{remark}

For numerical evaluation of $F(\rho)$ it is convenient to use the expansion around $\rho=1$
\begin{equation}\label{Fexp}
F(\rho) = \frac{\pi^2}{2}-1 - \log\rho + \log^2\rho + \frac{2}{15}\log^3\rho + O\left(\log^4 \rho\right) \,.
\end{equation}
The coefficients of the first ten terms in this expansion are tabulated in Section 5 of \cite{Nandori2022}. This series converges for $|\log\rho| < 3.42925$, see Proposition 4.2(i) in \cite{Nandori2022}. Outside of the convergence region, the function $F(\rho)$ can be well approximated by tail expansions for $\rho\to 0,\infty$ obtained in \cite{Pirjol2020}.

The function $I(u,v)$ has the following properties:

(i) $I(1,1)  = 0$. At this point the optimal path $h(t) = \log V_0$ is constant.

(ii) The function $I(u,v)$ has an expansion around its minimum at $u=v=1$ as
\begin{equation}
I(u,v) = 12 \log^2 u - 24 \log u \log v + 16\log^2 v + \cdots\,,
\end{equation}
where the terms neglected are of order $O(\log^a u \log^b v)$ with $a + b \geq 3$.
This is easily obtained from using the expansion (\ref{Fexp}) in (\ref{Iuv}).

\subsubsection{$\sigma(v) = \sigma v^{-\frac12}$}

This corresponds to a Heston-type model, where the variance process has a square-root type volatility:
\begin{equation}\label{VH}
dV_t = \mu(V_t) V_t dt + \sigma \sqrt{V_t} dW_t\,.
\end{equation}

\begin{proposition}\label{prop:Heston:2}
The function $H(y,z)$ for the Heston-type model is 
\begin{equation}
H(y,z) = V_0 I_H\left( \frac{z}{V_0}, \frac{e^y}{V_0} \right)\,,
\end{equation}
where $I_H(x,y)$ is a rate function giving the joint asymptotics of the time-integral
and terminal value for the process (\ref{VH}) as $T\to 0$. In this limit 
$\mathbb{Q}\left(\frac{1}{T V_0} \int_0^T V_t dt \in\cdot, \frac{V_T}{V_0} \in\cdot\right) $ satisfies a LDP with rate function
\begin{equation}\label{IH}
I_H(x,y) = \sup_{\theta,\phi} \left[\theta x + \phi y - \Lambda_H(\theta, \phi)\right]\,,
\end{equation}
where the cumulant function is
\begin{equation}\label{cumulant:limit}
\Lambda_H(\theta,\phi) := \lim_{T\to 0} T \log \mathbb{E}\left[e^{\frac{\theta}{T^2} \int_0^T V_t dt + \frac{\phi}{T} V_T}\right] \,.
\end{equation}
\end{proposition}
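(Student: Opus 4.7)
The plan is to identify the variational problem defining $H(y,z)$ with the Freidlin--Wentzell rate function governing the small-time behavior of paths of the Heston variance process (\ref{VH}), then use the contraction principle together with Gärtner--Ellis to recast the resulting path-space rate function as the Legendre transform (\ref{IH}) of the cumulant limit (\ref{cumulant:limit}).

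First I would perform the substitution $\phi(s) = e^{h(s)}$ in the definition (\ref{Hgen}) of $H$ specialized to $\sigma(v) = \sigma v^{-1/2}$. Since $\sigma(e^{h(t)})^{2} = \sigma^{2} e^{-h(t)}$ and $\dot\phi^{2}/\phi = e^{h}\dot h^{2}$, one obtains
\begin{equation}
H(y,z) = \frac{1}{2\sigma^{2}}\inf_{\phi(0)=V_{0},\,\phi(1)=e^{y},\,\int_{0}^{1}\phi(s)ds = z}\int_{0}^{1}\frac{\dot\phi(s)^{2}}{\phi(s)}\,ds .
\end{equation}
Rescaling by $V_{0}$ via $\phi = V_{0}\psi$ transforms this into a variational problem with boundary and integral constraints expressed in the dimensionless variables $e^{y}/V_{0}$ and $z/V_{0}$, producing a prefactor $V_{0}/(2\sigma^{2})$ in front of $\int_{0}^{1}\dot\psi^{2}/\psi\,ds$, which is precisely the anticipated factor $V_{0}$ in the statement.

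Next I would establish that the expression in the previous display is the Freidlin--Wentzell rate function for the joint law of $\bigl(\frac{1}{TV_{0}}\int_{0}^{T}V_{t}\,dt,\, V_{T}/V_{0}\bigr)$ at speed $1/T$ as $T\to 0$. To do so, time-rescale $X_{s} := V_{Ts}$, which satisfies $dX_{s} = T\mu(X_{s})X_{s}ds + \sqrt{T}\,\sigma\sqrt{X_{s}}\,dB_{s}$ with $B_{s} = W_{Ts}/\sqrt{T}$. Under Assumption~\ref{assump:LDP} (Hölder lower bound on $\sigma$), a small-noise LDP for $X_{s}$ holds on $C([0,1])$ at speed $1/T$ with good rate function $\frac{1}{2\sigma^{2}}\int_{0}^{1}\dot\phi^{2}/\phi\,ds$ (with the convention $+\infty$ off the absolutely continuous paths starting at $V_{0}$ with $\phi>0$); the drift contributes only a lower-order $O(T)$ perturbation that can be absorbed by standard Girsanov/exponential-tilting estimates. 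The contraction principle applied to the continuous functional $\phi\mapsto(\int_{0}^{1}\phi(s)ds,\phi(1))$ then yields an LDP for the joint law of $(\frac{1}{T}\int_{0}^{T}V_{t}dt, V_{T})$, and normalizing by $V_{0}$ yields the LDP for $(\frac{1}{TV_{0}}\int_{0}^{T}V_{t}dt, V_{T}/V_{0})$ with the rate function matching the variational expression derived above, thereby giving $H(y,z) = V_{0}I_{H}(z/V_{0}, e^{y}/V_{0})$.

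Finally, I would verify the Legendre representation (\ref{IH}) of $I_{H}$. The existence and finiteness of the bivariate cumulant limit $\Lambda_{H}(\theta,\phi)$ in (\ref{cumulant:limit}) can be shown on an open neighborhood of the origin in $(\theta,\phi)$ using the affine structure of the CIR-type variance SDE (after bounded perturbation of the drift): when $\mu \equiv 0$ the joint Laplace transform of $(\int_{0}^{T}V_{t}dt, V_{T})$ is available in closed form via the Feynman--Kac formula, and the bounded-drift correction, by Assumption~\ref{assump:bounded}, contributes only a lower-order additive term that vanishes after multiplication by $T$. Essential smoothness/steepness of $\Lambda_{H}$ on its effective domain then lets Gärtner--Ellis identify $I_{H}$ with the Legendre transform, completing the proof. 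The main obstacle I anticipate is the rigorous justification of the small-time LDP for the square-root-type diffusion (\ref{VH}), since $\sqrt{\cdot}$ fails the usual Lipschitz hypothesis of classical Freidlin--Wentzell; this requires the Hölder estimate in Assumption~\ref{assump:LDP} and careful handling of paths that can approach $0$, together with uniform integrability bounds to pass from the weak LDP delivered by Gärtner--Ellis to a full LDP and to match the variational and Legendre descriptions of $I_{H}$.
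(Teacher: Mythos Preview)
Your approach is essentially the same as the paper's: identify $H(y,z)$ with the LDP rate function for $(\frac{1}{T}\int_0^T V_t\,dt, V_T)$ via the sample-path LDP and the contraction principle, then invoke G\"artner--Ellis with the cumulant limit $\Lambda_H$ computed using the affine structure of the square-root process under a constant (or zero) drift. The paper's proof is terser and supplies the explicit closed form of $\Lambda_H(\theta,\phi)$ (equation~\eqref{LambdaSol}), which is what actually verifies the essential-smoothness hypothesis you allude to; your sketch is correct but would need that computation to be complete.
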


We are now in a position to compute the rate function $I_H(x,y)$ from (\ref{IH}). 
The result can be put into an explicit form as a double expansion in 
$\epsilon_x := \log x, \epsilon_y := \log y$.

\begin{proposition}\label{prop:HHeston}
The first few terms in the expansion of the rate function $I_H(x,y)$ for the square root model $\sigma(v) = \sigma/\sqrt{v}$ are given by:
\begin{align}
I_H(x,y) &= \frac{1}{\sigma^2} \Big\{
6 \epsilon_x^2 - 6 \epsilon_x \epsilon_y + 2 \epsilon_y^2+ \frac{12}{5} \epsilon_x^3 - \frac{3}{5} \epsilon_x^2 \epsilon_y - \frac{11}{5} \epsilon_x \epsilon_y^2 + \frac{6}{5}\epsilon_y^3 \nonumber\\
&\qquad\qquad\qquad+ \frac{271}{350} \epsilon_x^4 - \frac{61}{175} \epsilon_x^3 \epsilon_y
+ \frac{39}{350} \epsilon_x^2 \epsilon_y^2 - \frac{129}{175}\epsilon_x\epsilon_y^3
+ \frac{473}{1050} \epsilon_y^4 + O(\epsilon^5) \Big\}\,,\label{IHexp}
\end{align}
where we denote by $\epsilon^k$ the set of all terms of the form $\epsilon_x^i \epsilon_y^j$ with $i+j=k$.
\end{proposition}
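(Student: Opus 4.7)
The plan is to combine the variational characterization of $H(y,z)$ from~\eqref{Hdef} with the scaling $H(y,z) = V_0 I_H(z/V_0, e^y/V_0)$ from Proposition~\ref{prop:Heston:2} to obtain an isoperimetric variational problem for $I_H(x,y)$, solve its Euler--Lagrange equation in closed form via a linearizing substitution, and then extract the expansion by a perturbative inversion of the integral constraint.

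First, for $\sigma(v)=\sigma v^{-1/2}$ the integrand in~\eqref{Hdef} equals $(h'(t))^2 e^{h(t)}/\sigma^2$. Substituting $g=e^{h}$ and then rescaling $g = V_0\tilde g$ (and relabeling the target values as $x,y$) recasts $I_H(x,y)$ as
\[
\sigma^2 I_H(x,y) \;=\; \tfrac12\inf\int_0^1 \frac{(\tilde g'(t))^2}{\tilde g(t)}\,dt,
\]
subject to $\tilde g(0)=1$, $\tilde g(1)=y$, $\int_0^1 \tilde g\,dt=x$. The further substitution $u=\sqrt{\tilde g}$ converts the integrand into $4(u')^2$, and introducing a Lagrange multiplier $-\omega^2$ for the isoperimetric constraint reduces the Euler--Lagrange equation to the linear harmonic oscillator $u''+\omega^2 u = 0$.

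Solving with boundary data $u(0)=1$, $u(1)=\sqrt y$ gives $u(t)=\cos(\omega t)+B\sin(\omega t)$ with $B=(\sqrt y-\cos\omega)/\sin\omega$. The conserved quantity $(u')^2 + \omega^2 u^2 = \omega^2(1+B^2)$, combined with $1+B^2=(1+y-2\sqrt y\cos\omega)/\sin^2\omega$, yields the action in closed form,
\[
\sigma^2 I_H(x,y) \;=\; 2\omega^2\!\left[\frac{1+y-2\sqrt y\cos\omega}{\sin^2\omega}\;-\;x\right],
\]
while the remaining parameter $\omega$ is determined implicitly by $\int_0^1 u^2\,dt = x$. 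Setting $\epsilon_x=\log x$, $\epsilon_y=\log y$, and $\delta=\sqrt y-1$ (so $\delta=\epsilon_y/2+O(\epsilon_y^2)$), the trivial solution $\omega=0$ corresponds to $(x,y)=(1,1)$; for small perturbations one expands both the constraint and the action as Taylor series in $(\omega,\delta)$, inverts the constraint to obtain $\omega^2$ as a power series in $(\epsilon_x,\epsilon_y)$, and substitutes into the closed-form expression for the action. Truncation at fourth order then yields the coefficients in~\eqref{IHexp}.

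The main obstacle is the bookkeeping in the perturbative inversion: because $B\sim\delta/\omega$ as $\omega\to 0$, the numerator $1+y-2\sqrt y\cos\omega$ carries a leading $\delta^2$ piece that must be tracked carefully against $\sin^2\omega\sim\omega^2$ in the denominator, and mixed monomials $\omega^{2a}\delta^{b}$ must be reassembled to fixed total degree in $(\epsilon_x,\epsilon_y)$. A useful independent check of the quadratic part is the Gaussian approximation $V_t\approx V_0+\sigma\sqrt{V_0}\,W_t$ for small $T$: the pair $\bigl(V_T/V_0,\,(TV_0)^{-1}\int_0^T V_s\,ds\bigr)$ is then asymptotically normal with covariance $(\sigma^2 T/V_0)\bigl(\begin{smallmatrix}1 & 1/2\\ 1/2 & 1/3\end{smallmatrix}\bigr)$, whose inverse $\bigl(\begin{smallmatrix}12 & -6\\ -6 & 4\end{smallmatrix}\bigr)$ reproduces exactly the leading $\sigma^{-2}(6\epsilon_x^2-6\epsilon_x\epsilon_y+2\epsilon_y^2)$ in~\eqref{IHexp}.
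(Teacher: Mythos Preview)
Your proposal is correct but takes a genuinely different route from the paper. The paper works on the dual side: it uses the G\"artner--Ellis representation $I_H(x,y)=\sup_{\theta,\phi}\{\theta x+\phi y-\Lambda_H(\theta,\phi)\}$ together with the closed-form cumulant function $\Lambda_H$ from Proposition~\ref{prop:Heston:2}, Taylor-expands $\Lambda_H$ in $(\theta,\phi)$, inverts the stationarity equations $\partial_\theta\Lambda_H=x$, $\partial_\phi\Lambda_H=y$ order by order in $(\epsilon_x,\epsilon_y)$, and substitutes back. You instead stay on the primal (variational) side: the substitution $u=\sqrt{\tilde g}$ linearizes the Euler--Lagrange equation to a harmonic oscillator, giving an explicit one-parameter family of extremals and a closed parametric formula for the action in terms of $\omega$, after which the perturbative work reduces to inverting the single scalar constraint $\int_0^1 u^2\,dt=x$. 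Both methods require a fourth-order series inversion to reach the $O(\epsilon^4)$ coefficients, so the labor is comparable; your route has the advantage of bypassing the cumulant function altogether and, as a by-product, yielding an exact parametric expression for $I_H$ valid beyond the ATM expansion. (One minor slip: in your Gaussian check you write the covariance for the ordered pair $(V_T/V_0,\text{time average})$, but the inverse $\bigl(\begin{smallmatrix}12&-6\\-6&4\end{smallmatrix}\bigr)$ corresponds to the reversed ordering $(\text{average},V_T/V_0)$; the resulting quadratic form is of course the same.)
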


\begin{remark}
As a consistency check, we can calculate that 
$J_1(x) = \inf_{y} I(x,y)$ 
has the expansion:
$J_1(x) = \frac{1}{\sigma^2}(\frac32 \epsilon_x^2 + \frac35 \epsilon_x^3 + \frac{271}{1400} \epsilon_x^4 
+ O(\epsilon_x^5))$.
The first three terms reproduce the expansion of the rate function for Asian options in the square-root model, given in equation (19) of \cite{PZAsianCEV}.
\end{remark}

\begin{remark}
In a similar way, we get that
$J_2(y) := \inf_{\epsilon_x} I_H(x,y)$ 
has the expansion
\begin{equation}
J_2(y) = \frac{1}{\sigma^2}\left(\frac23 \epsilon_y^2 + \frac14 \epsilon_y^3 + \frac{7}{96} \epsilon_y^4 
+ O\left(\epsilon_x^5\right) \right)\,,
\end{equation}
which is the same as the expansion of the rate function for European options in the square-root model
\begin{equation}
J_E(y) = \frac{2}{\sigma^2} \left(e^{\frac12 y} - 1\right)^2\,.
\end{equation}
This follows by substituting $\sigma(x) = \sigma/\sqrt{x}$ into
\begin{equation}
J_E(K,S_0) = \frac12 \Big( \int_{S_0}^K \frac{dx}{x\sigma(x)}\Big)^2
= \frac12 \Big( \frac{2\sqrt{S_0}}{\sigma} \Big( \sqrt{K/S_0} - 1\Big) \Big)^2 = \frac{2S_0}{\sigma^2}(e^{\frac12 \log\frac{K}{S_0}}-1)^2\nonumber
\end{equation}
and taking $S_0=1$.
\end{remark}

\section{Detailed predictions and comparison with the literature}\label{sec:applications}

We present in this section predictions following from the theoretical results obtained above. We start in Section~\ref{sec:6.1} with the example of a simple stochastic volatility model, the log-normal SABR model, for which the exact short maturity asymptotics is known. We show that the results of this paper reproduce the known results for this case. 

In Sections~\ref{sec:6.2} and \ref{sec:6.3} we discuss two local-stochastic volatility models with popular volatility specification: log-normal (SABR-type) and square-root (Heston-type) volatility, respectively. 
For both cases we derive analytical results for the ATM implied volatility and skew for both European and VIX options, for arbitrary local volatility function $\eta(x)$. These expressions are relevant for calibration to SPX and VIX smiles. We show that our results reduce to previously known expressions in various limiting cases of pure stochastic volatility models $(\eta(x)=1)$ and of the uncorrelated local-stochastic volatility models \cite{Forde2011}.

\subsection{Log-normal SABR model}
\label{sec:6.1}

The log-normal SABR model is obtained by taking $\eta(s) \equiv 1$
and log-normal volatility $\sigma(v)\equiv\sigma$.

\textbf{European options.}
The rate function of the European options given by Theorem~\ref{Thm:E} is
\begin{equation}
J_E(K) = \inf_{y,z}
\left\{\frac{1}{2(1-\rho^{2})z}\left(\int_{S_{0}}^{K}\frac{dx}{x\eta(x)}-\int_{V_{0}}^{e^{y}}\frac{\rho dx}{\sqrt{x}\sigma(x)}\right)^{2}
+H(y,z)\right\}\,.
\end{equation}

Taking here $\eta(x)\equiv 1$ and substituting the explicit form of the function $H(y,z)$ from (\ref{HLN}) we have
\begin{equation}
J_E(K) = \inf_{y,z}
\left\{\frac{1}{2(1-\rho^{2})z}\left(\log\frac{K}{S_0} - \frac{2\rho}{\sigma} \left(\sqrt{e^y} - \sqrt{V_0}\right) \right)^{2}
+ \frac{1}{2\sigma^2} I\left(\frac{z}{V_0}, \frac{e^{\frac12 y}}{\sqrt{V_0}} \right)\right\}\,.
\end{equation}

Denote $u := \frac{z}{V_0}$ and $w:=e^{\frac12 y}/\sqrt{V_0}$. The rate function becomes
\begin{align}
J_E(K) &= \inf_{u,w}
\left\{\frac{1}{2(1-\rho^{2}) V_0 u}\left(\log\frac{K}{S_0} - \frac{2\rho \sqrt{V_0}}{\sigma}  (w-1) \right)^{2}
+ \frac{1}{2\sigma^2} I(u, w )\right\}\nonumber \\
& = \frac{1}{\sigma^2} \inf_{u,w}
\left\{\frac{2}{(1-\rho^{2}) u}\left( \frac{\sigma}{2\sqrt{V_0}} \log\frac{K}{S_0} - \rho (w-1) \right)^{2}
+ \frac{1}{2} I(u, w )\right\}\,. \label{JESABR}
\end{align}

Let us compare this with the rate function for the short maturity asymptotics of 
European options in the log-normal SABR model given in equation (6.2) of \cite{PZSABR}.
Expressed in the notations of the current paper, the SDE of the model $dS_t = \sigma_t S_t dW_t\,,
d\sigma_t = \omega \sigma_t dZ_t$ becomes $dS_t = \sqrt{V_t} S_t dW_t\,,
dV_t = 2 \omega V_t dZ_t$, which corresponds to $\sigma(v) \equiv 2\omega$.
The rate function from \cite{PZSABR} is
\begin{equation}\label{J2d}
J(K) = \inf_{u,v>0} \left\{ \frac{2}{(1-\rho^2) u} 
\left( \frac{\omega}{\sqrt{V_0}} \log \frac{K}{S_0} - \rho (v-1) \right)^2 +
\frac12 I(u,v) \right\}\,.
\end{equation}
Substituting $\omega = \sigma/2$ we see that they agree.

In \cite{PZSABR} evidence has been presented that the solution of the extremal problem (\ref{J2d}) can be expressed in closed form as
\begin{equation}\label{Jsol}
J(K) = 2 \log^2 \left(\frac{\sqrt{1+2\rho \zeta + \zeta^2} + \zeta + \rho}{1+\rho}\right)\,,\qquad 
\zeta := \frac{\omega}{\sqrt{V_0}} \log\left(\frac{K}{S_0}\right)\,.
\end{equation}
This was tested by verifying that it correctly reproduces the first two terms of the series expansion in $x=\log\frac{K}{S_0}$ of the solution of the extremal problem (\ref{J2d}), and also by comparing with numerical solution of the extremal problem. In the uncorrelated limit $\rho=0$ the analytical result was proved explicitly. 

The result (\ref{Jsol}) reproduces the well-known formula for the short-maturity asymptotics of the implied volatility in the 
log-normal SABR model \cite{SABR}
\begin{equation}
\sigma_{\mathrm{BS}}(K,S_0) = \sqrt{V_0} \frac{\zeta}{\log\left(\frac{\sqrt{1+2\rho \zeta + \zeta^2} + \zeta + \rho}{1+\rho}\right)}\,.
\end{equation}

\textbf{VIX options.} We consider next the short-maturity asymptotics of VIX options in the mean-reverting log-normal SABR model with volatility specification
\begin{equation*}
dV_t = a(b-V_t) dt + \sigma V_t dZ_t \,.
\end{equation*}

This is a particular case of the class of models covered by Proposition~\ref{prop:VIX}. For this case we have $\mathrm{VIX}_T = \sqrt{\alpha(\tau) V_T + \beta(\tau)}$ with $\alpha(\tau) = \frac{1-e^{-a \tau}}{a\tau}$ and $\beta(\tau) = b(1 - \alpha(\tau))$. 

The VIX futures price is $F_V(T) = \sqrt{\alpha(\tau) V_0 + \beta(\tau)} + O(T)$. 
As $T\to 0$, VIX call options are OTM for $K > F_V(0)$ and VIX put options are ITM for $K< F_V(0)$.

The short maturity limit of the OTM VIX options is given by Proposition~\ref{prop:VIX}
with the replacement $\mathcal{F}(v) = \sqrt{\alpha(\tau) v + \beta(\tau)}$. We get
\begin{equation}
J_V(K) = \frac{1}{2\sigma^2} \log^2\left(\frac{ K^2 - \beta(\tau)}{\alpha(\tau) V_0}\right)\,.
\end{equation}

The short-maturity limit of the VIX implied volatility is given by:
\begin{equation}\label{VIXimpvol}
\lim_{T\to 0} \sigma_{\mathrm{VIX}}^2(K, V_0,T) = \frac{\log^2\left(\frac{K}{F_V(0)}\right)}{2J_V(K)}
= \sigma^2 \frac{\log^2\left(\frac{K}{F_V(0)}\right)}{\log^2\left(\frac{ K^2 - \beta(\tau)}{\alpha(\tau) V_0}\right)}\,.
\end{equation}
This agrees with the result in Sec.~1.8.1 of Forde and Smith \cite{Forde2023}. Denote the
short-maturity asymptotics of the VIX implied volatility given by (\ref{VIXimpvol}) as $\sigma_{VIX}(K,V_0)$.

The first few terms in the expansion of the asymptotic VIX implied volatility in powers of log-strike $z=\log\frac{K}{F_V(0)}$ are
\begin{equation}\label{sigVIXexp}
\sigma_{\mathrm{VIX}}(K,V_0) = \frac{\sigma}{2} \left\{ \frac{\alpha V_0}{\alpha V_0 + \beta}
+ \frac{\beta}{\alpha V_0 + \beta} z - 
\frac{(2\alpha V_0 + \beta) \beta}{3\alpha V_0 (\alpha V_0 + \beta )} z^2 + O\left(z^3\right)
\right\} \,.
\end{equation} 

The asymptotic VIX implied volatility (\ref{VIXimpvol}) has the following properties:

\begin{itemize}

\item The VIX implied volatility vanishes for $K \leq \sqrt{\beta(\tau)}$, since the VIX is bounded 
below as $\mathrm{VIX}_T \geq \sqrt{\beta(\tau)}$.

\item From the expansion (\ref{sigVIXexp}) we see that the ATM VIX volatility is 
$$\sigma_{\mathrm{VIX}}(K=F_V(0),V_0) = \frac{\sigma}{2} \cdot \frac{\alpha(\tau) V_0} {\alpha(\tau) V_0 + \beta(\tau)}\,.$$

\item For $\beta(\tau) >0$, the smile is up-sloping and concave in $K$. For $\beta(\tau)=0$ the smile is flat with $\sigma_{\mathrm{VIX}}(K,V_0) = \frac12 \sigma$.
\end{itemize}

\subsection{Local-stochastic volatility model with log-normal volatility}
\label{sec:6.2}

In this section we consider the local-stochastic volatility model with log-normal volatility
\begin{equation}\label{LSlognorm}
dS_t  =  S_t \sqrt{V_t} \eta(S_t) dW_t \,,\qquad
dV_t = V_t \mu(V_t) dt + \sigma V_t dZ_t\,,
\end{equation}
where $W_t,Z_t$ are correlated with correlation $\rho$.
Denote $\eta_0 = \eta(S_0), \eta_1 = S_0 \eta'(S_0), \eta_2=\frac12 S_0 \eta'(S_0)+\frac12 S_0^2 \eta''(S_0)$ the first few coefficients in the expansion of the local volatility function $\eta(x)$ in powers of log-price $\log S$ around $S_0$.

We give analytical results for the ATM volatility, skew and convexity of the implied volatility for the European and VIX options in this model.

\subsubsection{European options} The implied volatility of European options in the lognormal local-stochastic volatility model is given by the following result.

\begin{proposition}\label{prop:Elognorm}
The implied volatility of European options in the model (\ref{LSlognorm}) has the expansion in log-strike 
$k = \log(K/S_0)$
\begin{equation}
\sigma_{\mathrm{BS}}(k) = \sigma_{\mathrm{BS}}(0) + s_{E}\cdot k + 
\kappa_E k^2 + O(k^3)\,,
\end{equation}
where the at-the-money implied volatility is 
\begin{equation}\label{SPXvolATM}
\sigma_{\mathrm{BS}}(0) = \eta_0 \sqrt{V_0}\,,
\end{equation}
the ATM skew is
\begin{equation}\label{SPXskewATM}
s_E = \frac14 \left(\rho \sigma + 2\eta_1 \sqrt{V_0} \right)\,,
\end{equation}
and the ATM convexity is 
\begin{equation}\label{SPXcvxATM}
\kappa_E = \frac{(2-3\rho^2) \sigma^2 + 4(4\eta_0\eta_2 - \eta_1^2) V_0}
{48\eta_0 \sqrt{V_0}} \,.
\end{equation}

\end{proposition}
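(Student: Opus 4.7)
The plan is to apply Theorem~\ref{Thm:E} with $\sigma(v)\equiv\sigma$ together with the closed-form expression for $H(y,z)$ from Proposition~\ref{prop:Hlognorm}, thereby turning $J_E(K,S_0,V_0;\rho)$ into an explicit two-dimensional extremal problem with $H(y,z)=\frac{1}{2\sigma^2}I(z/V_0,e^{y/2}/\sqrt{V_0})$. One then expands this rate function in powers of log-moneyness $k=\log(K/S_0)$ and inverts via the standard Black--Scholes short-maturity asymptotic $\lim_{T\to 0} T\log C_{BS}(K,T) = -k^2/(2\sigma_{BS}^2(k))$ for OTM options to extract $\sigma_{BS}(0)$, $s_E$ and $\kappa_E$. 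Setting $\ell_u := \log(z/V_0)$ and $\ell_v := \log(e^{y/2}/\sqrt{V_0})$, the variational problem becomes $J_E = \inf_{\ell_u,\ell_v} G(\ell_u,\ell_v;k)$ with
\begin{equation*}
G(\ell_u,\ell_v;k) = \frac{1}{2(1-\rho^{2}) V_0\, e^{\ell_u}}\Bigl(A(k) - \tfrac{2\rho\sqrt{V_0}}{\sigma}(e^{\ell_v}-1)\Bigr)^{2} + \frac{1}{2\sigma^{2}}\, I(e^{\ell_u}, e^{\ell_v}),
\end{equation*}
where $A(k) := \int_{S_0}^{K} dx/(x\eta(x)) = \int_0^k ds/\eta(S_0 e^s)$ is Taylor-expanded via $\eta(S_0 e^s) = \eta_0 + \eta_1 s + \eta_2 s^2 + O(s^3)$ to yield $A(k) = k/\eta_0 - \eta_1 k^2/(2\eta_0^{2}) + O(k^3)$.

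Next, I would expand $I(u,v)$ around $(u,v)=(1,1)$ in $(\ell_u,\ell_v)$ using the representation~\eqref{Iuv} and the series~\eqref{Fexp} for $F(\rho)$; the leading quadratic part $12\ell_u^2 - 24\ell_u \ell_v + 16\ell_v^2$ is given in the excerpt, and the cubic/quartic pieces follow from the same two formulas. At $k=0$ the infimum is attained at $\ell_u=\ell_v=0$ with $J_E(0)=0$, so for small $k$ I would solve the first-order conditions $\partial_{\ell_u} G = \partial_{\ell_v} G = 0$ perturbatively, writing $\ell_u^{\ast}(k), \ell_v^{\ast}(k)$ as power series in $k$ whose leading term is linear; the linear coefficients come from inverting the $2\times 2$ Hessian of the quadratic part of $I$ against the linear-in-$k$ forcing produced by the cross term $A(k)(e^{\ell_v}-1)$. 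Substituting back into $G$ gives an expansion $J_E(k) = j_2 k^{2} + j_3 k^{3} + j_4 k^{4} + O(k^{5})$, and matching with the short-maturity BS expansion $k^{2}/(2\sigma_{BS}^{2}(k))$ through $\sigma_{BS}(k) = \sigma_{BS}(0) + s_E k + \kappa_E k^{2} + O(k^{3})$ produces the three coefficients~\eqref{SPXvolATM}--\eqref{SPXcvxATM}.

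The main obstacle is the bookkeeping required for the convexity $\kappa_E$, which demands $A(k)$ through $O(k^3)$, the expansion of $I(u,v)$ through quartic order in $(\ell_u,\ell_v)$, and the perturbative optimizer $(\ell_u^{\ast},\ell_v^{\ast})$ to second order in $k$; the algebra is mechanical but delicate because cross terms mixing the local-volatility coefficients $\eta_0,\eta_1,\eta_2$ with the vol-of-vol parameters $\sigma,\rho$ enter both through $A(k)$ and through the quadratic-form inversion. Natural sanity checks are: (i) set $\eta\equiv 1$ so that $\eta_0=1$, $\eta_1=\eta_2=0$ and confirm $s_E \to \rho\sigma/4$ and $\kappa_E \to (2-3\rho^{2})\sigma^{2}/(48\sqrt{V_0})$, matching the Hagan et al.\ log-normal SABR expansion derived in Section~\ref{sec:6.1}; and (ii) take $\rho=0$ to reduce $J_E$ to the one-dimensional variational problem displayed just after Theorem~\ref{Thm:E} and recover the uncorrelated local-stochastic volatility formulas of Forde--Jacquier~\cite{Forde2011}.
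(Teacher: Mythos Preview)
Your proposal is correct and follows essentially the same route as the paper: substitute the log-normal $H(y,z)=\frac{1}{2\sigma^2}I(z/V_0,e^{y/2}/\sqrt{V_0})$ into the rate function of Theorem~\ref{Thm:E}, change variables to $(\log u,\log v)$, expand $A(k)$ and $I(u,v)$ in Taylor series, solve the first-order conditions for the optimizer order by order in $k$, and read off the coefficients of $J_E(k)$ before converting via $\sigma_{\mathrm{BS}}(k)=|k|/\sqrt{2J_E(k)}$. The paper's proof is organized identically, with the same change of variables, the same expansion of the integral (their $I_S$ is your $A$), and the same perturbative ansatz $\log u_*=a_1^E k+\cdots$, $\log v_*=b_1^E k+\cdots$; your proposed sanity checks against the SABR and uncorrelated limits are also the ones highlighted in the paper.
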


\begin{remark}
The result (\ref{SPXvolATM}) reproduces the result of Theorem~\ref{thm:European:ATM} for the ATM European options.
\end{remark}

\begin{remark}
The ATM skew (\ref{SPXskewATM}) is the sum of two terms, which correspond to the skew in the stochastic volatility model (obtained by taking $\eta(\cdot)\equiv 1$) $(s_{E})_1 = \frac14\rho\sigma$, and to the skew in the local volatility model (obtained in the limit $\sigma=0$)
$(s_{E})_2 = \frac12 \eta_1\sqrt{V_0}$. 
\end{remark}

\begin{remark}
A similar decomposition holds also for the ATM convexity (\ref{SPXcvxATM}).
The first term is the ATM convexity in the log-normal SABR model, and the second term is the ATM smile convexity in the local volatility model.
\end{remark}

\begin{remark}
In the uncorrelated limit $\rho=0$, the results for the European options ATM volatility, skew and 
convexity reproduce the results in Theorem~4.1 of Forde and Jacquier (2011) \cite{Forde2011} for local-stochastic volatility models by specializing to the log-normal $V_t$ process.
\end{remark}

\subsubsection{VIX options}
We give next the ATM expansion of the implied volatility of the VIX options.
As shown in Corollary~\ref{cor:VIXsmalltau}, we can approximate the VIX options as $$C_V(K,T) = e^{-rT}\mathbb{E}\left[\left(\eta(S_T) \sqrt{V_T} - K\right)^+\right]\,,$$
and the corrections to this approximation are of order $O(\tau^\frac12)$.
The VIX futures price is also approximated as $F_V(T) = \eta(S_{0}) \sqrt{V_0} + O(T)$.

\begin{proposition}\label{prop:Vlognorm}
The implied volatility of VIX options in the model (\ref{LSlognorm}) has the expansion in log-strike 
$x_{\mathrm{VIX}} = \log\left(K/(\eta(S_{0})\sqrt{V_0})\right)$:
\begin{equation}
\sigma_{\mathrm{VIX}}(x_{\mathrm{VIX}}) = \sigma_{\mathrm{VIX,ATM}}(0) + s_{\mathrm{VIX}}\cdot x_{\mathrm{VIX}} + \kappa_{\mathrm{VIX}}\cdot x^2_{\mathrm{VIX}} + O(x^3_{\mathrm{VIX}})\,,
\end{equation}
where the at-the-money VIX implied volatility is 
\begin{equation}\label{VIXvolATM}
\sigma_{\mathrm{VIX,ATM}}(0) = \frac12 \sqrt{\sigma^2 + 4\rho \sigma \eta_1 \sqrt{V_0} + 4  \eta_1^2 V_0}\,,
\end{equation}
and the ATM VIX skew is
\begin{align}
s_{\mathrm{VIX}} = 
\frac12 \sqrt{V_0} \frac{\rho \sigma + 2\eta_1 \sqrt{V_0}}{(\sigma^2 + 4\eta_1 \rho \sigma \sqrt{V_0} + 4\eta_1^2 V_0)^{3/2}}
\cdot
\left(\sigma^2 \eta_1 + 2\rho \sigma \sqrt{V_0} (\eta_1^2 + 2 \eta_0 \eta_2) + 8\eta_0 \eta_1\eta_2 V_0 \right)\,,\label{sVIX}
\end{align}
and the ATM VIX convexity is
\begin{equation}
\kappa_{\mathrm{VIX}} = \frac{\sqrt{V_0}}{6}\frac{K_{VIX}}{(\sigma^2 + 4\eta_1 \rho \sigma \sqrt{V_0} + 4\eta_1^2 V_0)^{7/2}}\,,
\end{equation}
where the numerator $K_{\mathrm{VIX}}$ has a lengthy expression and is given in the Appendix~\ref{app:VIXcvx}.
\end{proposition}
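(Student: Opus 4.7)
The proof specializes Theorem~\ref{Thm:VIX} to the log-normal volatility case $\sigma(v) \equiv \sigma$ and expands the resulting two-dimensional variational problem around the ATM point. First, the volatility integral in \eqref{J:V:eqn} evaluates to $\int_{V_0}^{e^y}\frac{\rho\,dx}{\sqrt{x}\sigma} = \frac{2\rho}{\sigma}(e^{y/2} - \sqrt{V_0})$, and substituting $H(y,z) = \frac{1}{2\sigma^2} I(z/V_0, e^{y/2}/\sqrt{V_0})$ from Proposition~\ref{prop:Hlognorm}, then changing variables to $u = z/V_0$ and $w = e^{y/2}/\sqrt{V_0}$, puts the rate function in the form
\begin{equation*}
J_V(K) = \inf_{u,w>0}\left\{\frac{1}{2(1-\rho^2)V_0\, u}\left[G(w,K) - \frac{2\rho\sqrt{V_0}}{\sigma}(w-1)\right]^2 + \frac{1}{2\sigma^2}\,I(u,w)\right\},
\end{equation*}
where $G(w,K) := \int_{S_0}^{S^\ast}\frac{dx}{x\eta(x)}$ and $S^\ast$ is defined implicitly by $\eta(S^\ast) = K/(w\sqrt{V_0})$. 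At $K = F_V(0) = \eta(S_0)\sqrt{V_0}$ the choice $(u,w) = (1,1)$ forces $S^\ast = S_0$, so both integrals and $I(1,1)$ vanish, giving $J_V(F_V(0)) = 0$, which fixes the base point of the expansion. The ATM formula \eqref{VIXvolATM} itself follows directly from Theorem~\ref{thm:VIX:ATM} specialized to $\sigma(V_0) = \sigma$ together with the standard ATM approximation $C_{\mathrm{BS}}(F,T) \sim F\,\sigma_{\mathrm{VIX,ATM}}\sqrt{T/(2\pi)}$, after algebraic simplification of the two squared terms inside the outer square root.

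For the skew and convexity I set $u = 1 + \delta_u$, $w = 1 + \delta_w$ and expand the two ingredients of $J_V$ in the log-moneyness $x := x_{\mathrm{VIX}} = \log(K/F_V(0))$. The expansion of $I(u,w)$ about $(1,1)$, with leading quadratic part $I = 12\delta_u^2 - 24\delta_u\delta_w + 16\delta_w^2 + O(3)$, is extended to cubic and quartic order using the series \eqref{Fexp} for $F(\rho)$. The expansion of $G(w,K)$ is obtained by writing $\eta(S^\ast) = \eta_0\,e^{\xi}$ with $\xi := x - \log w$ and inverting $\eta$ about $S_0$ using $\eta_0$, $\eta_1 = S_0\eta'(S_0)$, $\eta_2 = \tfrac12 S_0\eta'(S_0) + \tfrac12 S_0^2\eta''(S_0)$; the leading behavior is $G = \xi/\eta_1 - (\eta_0 \eta_2/\eta_1^3)\xi^2 + \cdots$. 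Solving the first-order conditions $\partial_{\delta_u} = \partial_{\delta_w} = 0$ order by order in $x$ (at leading order the $u$-equation collapses to $\delta_u^\ast = \delta_w^\ast$, reducing the problem to one variable) yields $(\delta_u^\ast,\delta_w^\ast)$ as power series in $x$, and resubstitution produces $J_V(x) = J_2 x^2 + J_3 x^3 + J_4 x^4 + O(x^5)$. A direct computation at leading order gives $J_2 = 2/[\sigma^2 + 4\rho\sigma\eta_1\sqrt{V_0} + 4\eta_1^2 V_0]$, consistent with \eqref{VIXvolATM}.

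The implied-volatility expansion is then read off from the identity $\sigma_{\mathrm{VIX}}^2(x) = x^2/(2 J_V(x))$: matching coefficients with $\sigma_{\mathrm{VIX}}(x) = \sigma_{\mathrm{VIX,ATM}} + s_{\mathrm{VIX}}\,x + \kappa_{\mathrm{VIX}}\,x^2 + \cdots$ gives $\sigma_{\mathrm{VIX,ATM}} = (2 J_2)^{-1/2}$, $s_{\mathrm{VIX}} = -\sigma_{\mathrm{VIX,ATM}}^3\, J_3$, and $\kappa_{\mathrm{VIX}} = \tfrac{3\, s_{\mathrm{VIX}}^2}{2\sigma_{\mathrm{VIX,ATM}}} - \sigma_{\mathrm{VIX,ATM}}^3\, J_4$, reproducing \eqref{sVIX} and the full convexity stated in Appendix~\ref{app:VIXcvx}. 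The main obstacle is carrying the perturbative expansion of both $I(u,w)$ and $G(w,K)$ consistently through order $x^4$ while simultaneously solving the coupled first-order conditions: the nonlinear coupling between the $\rho$-dependent correlation term and the implicit expansion of $\eta^{-1}$ produces a proliferation of algebraic terms in $\eta_0, \eta_1, \eta_2, \sigma, \rho, V_0$, which is precisely the reason the convexity numerator $K_{\mathrm{VIX}}$ is relegated to Appendix~\ref{app:VIXcvx}.
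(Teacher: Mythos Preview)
Your proposal is correct and follows essentially the same approach as the paper: both specialize Theorem~\ref{Thm:VIX} to $\sigma(v)\equiv\sigma$, substitute the explicit $H(y,z)$ from Proposition~\ref{prop:Hlognorm}, change variables to $(u,w)=(z/V_0,e^{y/2}/\sqrt{V_0})$, expand the minimizers about $(1,1)$ order by order in $x_{\mathrm{VIX}}$ (discovering $\delta_u^\ast=\delta_w^\ast$ at leading order), and read off the implied-volatility coefficients from $\sigma_{\mathrm{VIX}}=|x|/\sqrt{2J_V}$. The only cosmetic differences are that the paper parameterizes the perturbation by $\log u_\ast$, $\log v_\ast$ rather than $u-1$, $w-1$, and derives the ATM level solely from the rate function rather than also invoking Theorem~\ref{thm:VIX:ATM} as a consistency check.
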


\begin{remark}
The result (\ref{VIXvolATM}) reproduces the result of Theorem~\ref{thm:VIX:ATM} for the ATM VIX options
by taking into account that $S_0 \eta'(S_0) = \eta_1$. 
\end{remark}

\begin{remark}
In the stochastic volatility limit $\eta(x)=1$, the result (\ref{VIXvolATM}) reduces to \\
$\sigma_{\mathrm{VIX}}(\mathrm{ATM}) = \frac12\sigma$,
which is just the vol-of-vol of the stochastic process. In this limit the VIX skew vanishes $s_{\mathrm{VIX}}=0$.
\end{remark}

\begin{corollary}
Varying the correlation in the range $\rho \in [-1,1]$ gives bounds on the ATM VIX implied volatility
\begin{equation}
\left| \frac12\sigma - \eta_1 \sqrt{V_0} \right| \leq \sigma_{\mathrm{VIX}}\left(K=\eta(S_{0})\sqrt{V_0}\right)  \leq
\left| \frac12\sigma + \eta_1 \sqrt{V_0} \right|\,.
\end{equation}
\end{corollary}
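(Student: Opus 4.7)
The plan is to read the bounds directly off the closed-form ATM formula (\ref{VIXvolATM}) by exploiting the fact that the correlation enters only linearly under the square root. Writing the argument as
\[
A(\rho):=\sigma^2 + 4\rho\,\sigma\eta_1\sqrt{V_0} + 4\eta_1^2 V_0,
\]
the map $\rho\mapsto A(\rho)$ is affine in $\rho$, and $A(\rho)\geq 0$ on $[-1,1]$ since $A(\rho)=4\sigma_{\mathrm{VIX,ATM}}(\rho)^2$. Hence the range of $\sigma_{\mathrm{VIX,ATM}}(\rho)=\tfrac12\sqrt{A(\rho)}$ over $\rho\in[-1,1]$ is the closed interval whose endpoints are $\tfrac12\sqrt{A(+1)}$ and $\tfrac12\sqrt{A(-1)}$.

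The key algebraic step is to complete the square at each endpoint:
\begin{align*}
A(+1) &= \sigma^2 + 4\sigma\eta_1\sqrt{V_0} + 4\eta_1^2 V_0 = \bigl(\sigma + 2\eta_1\sqrt{V_0}\bigr)^2, \\
A(-1) &= \sigma^2 - 4\sigma\eta_1\sqrt{V_0} + 4\eta_1^2 V_0 = \bigl(\sigma - 2\eta_1\sqrt{V_0}\bigr)^2.
\end{align*}
Taking square roots and dividing by two yields the endpoint values $\sigma_{\mathrm{VIX,ATM}}|_{\rho=+1}=\bigl|\tfrac12\sigma + \eta_1\sqrt{V_0}\bigr|$ and $\sigma_{\mathrm{VIX,ATM}}|_{\rho=-1}=\bigl|\tfrac12\sigma - \eta_1\sqrt{V_0}\bigr|$. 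Every value of $\sigma_{\mathrm{VIX,ATM}}$ attained as $\rho$ sweeps $[-1,1]$ therefore lies in the closed interval with these two numbers as endpoints, which is exactly the content of the corollary.

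There is essentially no obstacle here; the whole argument is the evaluation of an affine function at the two endpoints of its domain, combined with monotonicity of $\sqrt{\cdot}$ on $\mathbb{R}_{\geq 0}$. The one cosmetic point worth flagging is that the left-to-right ordering of the two bounds as written in the corollary is purely formal: whichever of $\bigl|\tfrac12\sigma\pm\eta_1\sqrt{V_0}\bigr|$ is larger plays the role of upper bound, and this depends on the sign of $\eta_1$ (typically non-positive under the leverage-effect assumption on $\eta$).
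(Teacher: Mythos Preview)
Your proof is correct and is exactly the intended argument: the paper states the corollary without proof, as an immediate consequence of the closed-form ATM expression (\ref{VIXvolATM}), and your evaluation of the affine map $\rho\mapsto A(\rho)$ at the endpoints $\rho=\pm 1$ is the natural way to read it off. Your closing remark about the ordering is also well taken: under the paper's leverage-effect assumption $\eta$ is decreasing, so $\eta_1=S_0\eta'(S_0)\leq 0$, and in that case $\bigl|\tfrac12\sigma-\eta_1\sqrt{V_0}\bigr|$ is actually the larger of the two endpoint values; the inequality in the corollary should be read with the smaller of $\bigl|\tfrac12\sigma\pm\eta_1\sqrt{V_0}\bigr|$ on the left and the larger on the right.
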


\subsection{Local-stochastic volatility model with Heston-type volatility}
\label{sec:6.3}

In this section we consider the local-stochastic volatility model with square-root volatility, which we will call Heston-type
\begin{equation}\label{LSHeston}
dS_t  =  S_t \sqrt{V_t} \eta(S_t) dW_t \,,\qquad
dV_t = V_t \mu(V_t) dt + \sigma \sqrt{V_t} dZ_t\,,
\end{equation}
where $W_t,Z_t$ are correlated with correlation $\rho$.
Denote $\eta_0 = \eta(S_0), \eta_1 = S_0 \eta'(S_0), \eta_2=\frac12 S_0 \eta'(S_0)+\frac12 S_0^2 \eta''(S_0)$ the first few derivatives of the local volatility function around $S_0$.

This model takes $\sigma(S)=\sigma S^{-1/2}$ in Eqn.~\eqref{LSvol} which does not satisfy Assumptions~\ref{assump:bounded} and \ref{assump:LDP}. Although Theorems~\ref{Thm:E} and \ref{Thm:VIX} were obtained under the Assumptions~\ref{assump:bounded} and \ref{assump:LDP}, one can see that these results hold also for this case as long as $\mathbb{Q}(\{(\log S_{Tt},\log V_{Tt}),0\leq t\leq 1\}\in\cdot)$ satisfies a sample-path large deviation principle as in the proof of Theorem~\ref{Thm:E} (which is true for Heston-type and CEV-type SDEs without Assumptions~\ref{assump:bounded} and \ref{assump:LDP}, see e.g. \cite{Baldi}) and the following two assumptions hold.
The first assumption is on the finiteness of the moment generating function of the integrated variance $\int_0^t V_udu$, 
which is known to hold under the Heston model.

\begin{assumption}\label{assump:V:integral}
We assume that for any $\theta>0$, there exists some $C_{\theta}\in(0,\infty)$, such that $\mathbb{E}[e^{\theta \int_0^T V_u du}]\leq C_{\theta}$ for any sufficiently small $T>0$.    
\end{assumption}

The second assumption is the finiteness of the moments of $V_{t}$ process, which also holds under the Heston model.
\begin{assumption}
For any $p>1$, there exists some $C_{p}\in(0,\infty)$, such that $\max_{0\leq t\leq T}\mathbb{E}[(V_t)^p]\leq C_{p}$ for any sufficiently small $T>0$.   
\end{assumption}

We give next a result on 
moment finiteness for $S_{t}$ process under a certain assumption on the moment generating function of the integrated variance $\int_0^t V_udu$ so that Assumption~\ref{assump:S:T:p} is satisfied.

\begin{proposition}\label{prop:S:moments}
Suppose that $\eta(\cdot)$ is uniformly bounded. Also, suppose Assumption~\ref{assump:V:integral} holds. Then for any $p>1$, there exists some $C'_{p}\in(0,\infty)$, such that $\max_{0\leq t\leq T}\mathbb{E}[(S_t)^p]\leq C'_{p}$ for any sufficiently small $T>0$.
\end{proposition}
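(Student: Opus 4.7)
\textbf{Proof plan for Proposition~\ref{prop:S:moments}.}
The plan is to write $S_t^p$ as a product of a stochastic exponential with a factor involving the integrated variance, and then use H\"older's inequality to decouple them. Applying It\^o's formula (or equivalently writing $S_t$ as a stochastic exponential of $\int_0^{\cdot}\sqrt{V_u}\eta(S_u)\,dW_u$), we obtain
\begin{equation*}
S_t^p = S_0^p\, e^{p(r-q)t}\,\exp\!\Big(p\int_0^t\!\sqrt{V_u}\,\eta(S_u)\,dW_u - \tfrac{p}{2}\int_0^t V_u\,\eta^2(S_u)\,du\Big).
\end{equation*}
For any $\lambda>1$, use the identity $pX - \tfrac{p}{2}Y = \tfrac{1}{\lambda}\bigl(\lambda pX - \tfrac{(\lambda p)^2}{2}Y\bigr) + \tfrac{\lambda p - 1}{2}\,pY$, where $X=\int_0^t\sqrt{V_u}\eta(S_u)\,dW_u$ and $Y=\int_0^t V_u\eta^2(S_u)\,du$ (setting $pX - \tfrac{p}{2}Y$ as the log-integrand above). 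This decomposes $S_t^p$ as the $(1/\lambda)$-power of the Dol\'eans exponential
\begin{equation*}
N_t := \exp\!\Big(\lambda p\int_0^t\!\sqrt{V_u}\,\eta(S_u)\,dW_u - \tfrac{(\lambda p)^2}{2}\int_0^t V_u\,\eta^2(S_u)\,du\Big)
\end{equation*}
multiplied by an exponential of $Y$.

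Next, I would apply H\"older's inequality with conjugate exponents $\lambda$ and $\lambda/(\lambda-1)$, giving
\begin{equation*}
\mathbb{E}[S_t^p] \le S_0^p e^{p|r-q|T}\, \mathbb{E}[N_t]^{1/\lambda}\, \mathbb{E}\!\Big[\exp\!\Big(\tfrac{\lambda p(\lambda p-1)}{2(\lambda-1)}\int_0^t V_u\,\eta^2(S_u)\,du\Big)\Big]^{(\lambda-1)/\lambda}.
\end{equation*}
Using the uniform bound $\eta\le M_\eta$, the integrand inside the last expectation is dominated by $\exp\!\bigl(\theta\int_0^T V_u\,du\bigr)$ with $\theta:=\tfrac{\lambda p(\lambda p-1) M_\eta^2}{2(\lambda-1)}$, which is finite and bounded by Assumption~\ref{assump:V:integral}. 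It remains to show $\mathbb{E}[N_t]=1$. For this I would verify Novikov's condition: the required quantity
\begin{equation*}
\mathbb{E}\!\Big[\exp\!\Big(\tfrac{(\lambda p)^2}{2}\int_0^T V_u\,\eta^2(S_u)\,du\Big)\Big] \;\le\; \mathbb{E}\!\Big[\exp\!\Big(\tfrac{(\lambda p M_\eta)^2}{2}\int_0^T V_u\,du\Big)\Big]
\end{equation*}
is finite by Assumption~\ref{assump:V:integral}, so $N_t$ is a true martingale and $\mathbb{E}[N_t]=1$. This yields a constant upper bound $C'_p$ on $\mathbb{E}[S_t^p]$, uniform in $t\in[0,T]$ for any sufficiently small $T$.

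The only delicate step is the one where the $S_t$-dependence inside $\eta$ would ordinarily spoil a naive Girsanov-style argument (changing measure via $N_t$ would modify the dynamics of $V_t$ and require re-controlling $\int V_u\,du$ under the new measure). The key to avoiding this obstacle is that H\"older's inequality, combined with the uniform bound $\eta\le M_\eta$, completely eliminates the $S$-dependence in the bound, reducing everything to a quantity already controlled by Assumption~\ref{assump:V:integral} under the \emph{original} measure $\mathbb{Q}$. The free parameter $\lambda>1$ can be fixed at any convenient value (e.g., $\lambda=2$), since all constants remain finite.
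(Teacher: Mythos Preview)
Your proposal is correct and follows essentially the same approach as the paper: write $S_t^p$ as a stochastic exponential times an exponential of the integrated variance, split via H\"older (the paper takes $\lambda=2$, i.e.\ Cauchy--Schwarz), and bound the integrated-variance factor using $\eta\le M_\eta$ and Assumption~\ref{assump:V:integral}. The only difference is that you invoke Novikov to get $\mathbb{E}[N_t]=1$, whereas the paper simply observes that $N_t$ is a nonnegative local martingale, hence a supermartingale, so $\mathbb{E}[N_t]\le 1$; this is slightly more economical since it avoids verifying Novikov's condition altogether.
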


Finally, we note that the extremal problems for the rate functions $J_E(K), J_V(K)$ appearing in Theorem~\ref{Thm:E} and \ref{Thm:VIX} are well defined, and the function $H(y,z)$ is calculable as shown in Proposition~\ref{prop:HHeston}.

\subsubsection{European options} The implied volatility of European options in the Heston-type local-stochastic volatility model is given by the following result.

\begin{proposition}\label{prop:EHeston}
The implied volatility of European options in the Heston-type model (\ref{LSHeston}) has the expansion in log-strike 
$k = \log(K/S_0)$
\begin{equation}
\sigma_{\mathrm{BS}}(k) = \sigma_{\mathrm{BS}}(0) + s_{E}\cdot k + 
\kappa_E k^2 + O(k^3)\,,
\end{equation}
where the at-the-money implied volatility is 
\begin{equation}\label{SPXvolATMHeston}
\sigma_{\mathrm{BS}}(0) = \eta_0 \sqrt{V_0}\,,
\end{equation}
the ATM skew is
\begin{equation}\label{SPXskewATMHeston}
s_E = \frac{1}{4\sqrt{V_0}} \left(\rho \sigma + 2\eta_1 V_0 \right)\,,
\end{equation}
and the ATM convexity is 
\begin{equation}\label{SPXcvxATMHeston}
\kappa_E = \frac{(2-5\rho^2) \sigma^2 + 4(4\eta_0\eta_2 - \eta_1^2) V_0}
{48\eta_0 V_0^{3/2}} \,.
\end{equation}
\end{proposition}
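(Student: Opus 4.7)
The plan is to extract the implied-volatility coefficients from a power-series expansion of the European rate function $J_E$ furnished by Theorem~\ref{Thm:E}. The short-maturity OTM Black--Scholes asymptotic gives $\sigma^2_{\mathrm{BS}}(K)=k^2/(2J_E(K))$ with $k=\log(K/S_0)$. Writing $2J_E(K)=\alpha_0 k^2+\alpha_1 k^3+\alpha_2 k^4+O(k^5)$ and matching to $\sigma_{\mathrm{BS}}(k)=\sigma_{\mathrm{BS}}(0)+s_E k+\kappa_E k^2+O(k^3)$, one reads off $\sigma_{\mathrm{BS}}(0)=\alpha_0^{-1/2}$, $s_E=-\alpha_1/(2\alpha_0^{3/2})$, and $\kappa_E=3\alpha_1^2/(8\alpha_0^{5/2})-\alpha_2/(2\alpha_0^{3/2})$, so the task reduces to computing the three Taylor coefficients $\alpha_0,\alpha_1,\alpha_2$.

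For the Heston-type specification $\sigma(v)=\sigma v^{-1/2}$, two simplifications enter the variational problem \eqref{J:E:eqn}. First, the correlation integral collapses to $\int_{V_0}^{e^y}\rho\,dx/(\sqrt{x}\sigma(x))=(\rho/\sigma)(e^y-V_0)$. Second, by Proposition~\ref{prop:Heston:2} one has $H(y,z)=V_0 I_H(z/V_0,e^y/V_0)$, and Proposition~\ref{prop:HHeston} supplies the joint expansion of $I_H$ in $\epsilon_u=\log u,\epsilon_v=\log v$ through quartic order. The local-volatility integral also admits the expansion
\begin{equation*}
A(K):=\int_{S_0}^{K}\frac{dx}{x\eta(x)}=\frac{k}{\eta_0}-\frac{\eta_1}{2\eta_0^2}k^2+\frac{1}{3}\Bigl(\frac{\eta_1^2}{\eta_0^3}-\frac{\eta_2}{\eta_0^2}\Bigr)k^3+O(k^4),
\end{equation*}
obtained from $A'(k)=1/\eta(S_0 e^k)$ together with the Taylor series $\eta(S_0 e^u)=\eta_0+\eta_1 u+\eta_2 u^2+O(u^3)$.

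The variational problem is then solved by perturbation around the ATM minimizer $(u^*,v^*)=(1,1)$, at which $J_E(S_0)=0$. I would posit $\epsilon_u=a_1 k+a_2 k^2+a_3 k^3+O(k^4)$ and $\epsilon_v=b_1 k+b_2 k^2+b_3 k^3+O(k^4)$ and expand both pieces of the integrand in \eqref{J:E:eqn} through order $k^4$. The $O(k^2)$ stationarity conditions form a linear system whose solution is $a_1=b_1/2$ and $b_1=\rho\sigma/(V_0\eta_0)$, which already gives $\alpha_0=1/(V_0\eta_0^2)$ and reproduces \eqref{SPXvolATMHeston}. The coefficients $(a_2,b_2)$ and $(a_3,b_3)$ are then determined by the $O(k^3)$ and $O(k^4)$ stationarity equations, which are linear in the next unknown pair once the previous ones have been fixed; substituting back into the objective yields $\alpha_1$ and $\alpha_2$, from which the skew \eqref{SPXskewATMHeston} and convexity \eqref{SPXcvxATMHeston} follow via the formulas above. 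The main obstacle is quartic-order bookkeeping: the factor $1/u=e^{-\epsilon_u}$ multiplies a squared quantity that already carries a nontrivial cubic expansion through $A(K)$ and $e^{\epsilon_v}-1$, and the cubic and quartic $I_H$ coefficients in \eqref{IHexp} must all be retained, so the intermediate expression for $\alpha_2$ is lengthy and only collapses to the compact form \eqref{SPXcvxATMHeston} after collecting terms and repeatedly using the leading-order identity $b_1=\rho\sigma/(V_0\eta_0)$.
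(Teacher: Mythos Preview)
Your proposal is correct and follows essentially the same route as the paper: the paper's proof simply refers back to the proof of Proposition~\ref{prop:Elognorm} (perturbative expansion of the minimizers of the two-variable extremal problem for $J_E$ in powers of $k$, followed by conversion to implied-volatility coefficients), replacing the log-normal $H(y,z)$ by the Heston-type $H(y,z)=V_0 I_H(z/V_0,e^y/V_0)$ from Propositions~\ref{prop:Heston:2} and~\ref{prop:HHeston}. Your parametrization $(\epsilon_u,\epsilon_v)=(\log(z/V_0),y-\log V_0)$ differs from the paper's only by a trivial rescaling, and the leading-order relations $a_1=b_1/2$, $b_1=\rho\sigma/(V_0\eta_0)$, $\alpha_0=1/(V_0\eta_0^2)$ are exactly the analogues of \eqref{a1Esol} in the Heston setting.
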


\begin{remark}
In the stochastic volatility limit $\eta(x)=1$ we can compare the results with the short-maturity implied volatility expansion around the ATM point for the Heston model, which is known from the literature \cite{Lewis1,Durrleman2004,Forde2009}.
Expressed in our notations, the first three terms in this expansion are given by:
\begin{equation}
\sigma_{\mathrm{BS}}^{\mathrm{Heston}}(K) = 
\sqrt{V_0} \left( 1 + \frac{\rho\sigma}{4V_0} k + \frac{1}{24} \left(1 - \frac52 \rho^2\right) \frac{\sigma^2 k^2}{V_0^2} + O\left(k^3\right) \right)\,.
\end{equation}


Our results for skew and convexity reproduce the coefficients in this 
expansion after taking $\eta_0\to 1, \eta_{1,2}\to 0$.
\end{remark}

\begin{remark}
The results for the European options ATM volatility, skew and 
convexity in the local Heston model reproduce the results in equations (4.4)-(4.6) of Bompis and Gobet (2018) \cite{Bompis2018}. 
\end{remark}

\subsubsection{VIX options} The implied volatility of VIX options in the Heston-type local-stochastic volatility model is given by the following result.
For this result, as shown in Corollary~\ref{cor:VIXsmalltau}, VIX options can be approximated as
\begin{equation}
    C_V(K,T) = e^{-rT} \mathbb{E}\left[\left(\eta(S_T) \sqrt{V_T} - K\right)^+\right]\,.
\end{equation}
The corrections to this approximation are of order $O(\tau^{1/2})$.

\begin{proposition}\label{prop:VHeston}
The implied volatility of VIX options in the model (\ref{LSHeston}) has the expansion in log-strike $x_{\mathrm{VIX}} = \log(K/(\eta_0 \sqrt{V_0}))$:
\begin{equation}
\sigma_{\mathrm{VIX}}(x_{\mathrm{VIX}}) = \sigma_{\mathrm{VIX}}(0)  + s_{\mathrm{VIX}}\cdot x_{\mathrm{VIX}} + O\left(x^2_{\mathrm{VIX}}\right)\,,
\end{equation}
where the at-the-money VIX implied volatility is 
\begin{equation}\label{VIXvolATMHeston}
\sigma_{\mathrm{VIX}}(0) = \frac{1}{\sqrt{V_0}} \sqrt{\frac14 \sigma^2 + \eta_1\rho \sigma V_0 + \eta_1^2 V_0^2} \,,
\end{equation}
and the ATM VIX skew is
\begin{equation}\label{VIXskewATMHeston}
s_{\mathrm{VIX}} = 
\frac{1}{4\sqrt{V_0}} 
\frac{-\sigma^4 - 2\eta_1 \rho V_0 \sigma^3 + 4 \sigma^2 V_0^2 (\eta_1^2 + 2\eta_0 \eta_2 \rho^2) + 8\eta_1 \rho V_0^3 \sigma (4\eta_0\eta_2 + \eta_1^2) + 32 \eta_0 \eta_1^2 \eta_2 V_0^4}
{(\sigma^2 + 4\eta_1\rho \sigma V_0 + 4\eta_1^2 V_0^2)^{3/2}}\,.
\end{equation}
\end{proposition}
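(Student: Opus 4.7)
The plan is to treat the ATM level and the ATM skew of the VIX implied volatility with two closely related but distinct tools. The ATM level in (\ref{VIXvolATMHeston}) can be read off directly from Theorem~\ref{thm:VIX:ATM} upon substituting the Heston-type specification $\sigma(V_0) = \sigma V_0^{-1/2}$. The ATM skew in (\ref{VIXskewATMHeston}) requires instead the full rate-function machinery of Theorem~\ref{Thm:VIX}: I would expand $J_V(K,S_0,V_0;\rho)$ in the log-moneyness $x_{\mathrm{VIX}} = \log(K/(\eta_0\sqrt{V_0}))$ up to cubic order, and then invert the short-maturity Black--Scholes identity $J_V(x_{\mathrm{VIX}}) = x_{\mathrm{VIX}}^2/(2\sigma_{\mathrm{VIX}}(x_{\mathrm{VIX}})^2)$ to extract the linear coefficient in $\sigma_{\mathrm{VIX}}$.

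For the ATM level I would apply Theorem~\ref{thm:VIX:ATM} and observe that with $\sigma(V_0) = \sigma V_0^{-1/2}$ one has $\frac{1}{2}\sigma(V_0)\sqrt{V_0} = \sigma/2$, so the first bracket under the square root simplifies to $\eta_0\sigma/2 + \eta_0\eta_1 V_0 \rho$ (using $\eta_1 = S_0\eta'(S_0)$). Expanding the sum of two squares produces $\eta_0^2(\sigma^2/4 + \rho\sigma\eta_1 V_0 + \eta_1^2 V_0^2)$, and dividing by the short-maturity VIX forward $F_V(0) = \eta_0\sqrt{V_0}$ via the Black--Scholes ATM relation $C_V(F_V(0),T) \sim F_V(0)\,\sigma_{\mathrm{VIX}}(0)\sqrt{T/(2\pi)}$ yields (\ref{VIXvolATMHeston}).

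For the skew I would specialize the rate function of Theorem~\ref{Thm:VIX} to $\sigma(x) = \sigma x^{-1/2}$. The correlation integrand collapses, giving $\int_{V_0}^{e^y}\rho\,du/(\sqrt{u}\sigma(u)) = (\rho/\sigma)(e^y - V_0)$, while $H(y,z) = V_0 I_H(z/V_0,e^y/V_0)$ with the explicit polynomial expansion of $I_H$ provided by Proposition~\ref{prop:HHeston}. Writing $\eta(S_*) = K e^{-y/2}$ for the upper integration limit and Taylor-expanding $\eta$ to second order about $S_0$, I can invert for $k_* := \log(S_*/S_0)$ as a power series in $x_{\mathrm{VIX}}$ and in $\epsilon_y := y - \log V_0$. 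I then parameterize the saddle-point variables by formal series $\epsilon_y = a_1 x_{\mathrm{VIX}} + a_2 x_{\mathrm{VIX}}^2 + \cdots$ and $z/V_0 = 1 + b_1 x_{\mathrm{VIX}} + b_2 x_{\mathrm{VIX}}^2 + \cdots$, substitute into the two-dimensional minimization, and solve the first-order optimality conditions order by order in $x_{\mathrm{VIX}}$. This produces $J_V(x_{\mathrm{VIX}}) = \alpha_2 x_{\mathrm{VIX}}^2 + \alpha_3 x_{\mathrm{VIX}}^3 + O(x_{\mathrm{VIX}}^4)$, with $\alpha_2 = 1/(2\sigma_{\mathrm{VIX}}(0)^2)$ (which reproduces the ATM level as an internal consistency check) and $s_{\mathrm{VIX}} = -\sigma_{\mathrm{VIX}}(0)^3\,\alpha_3$, giving (\ref{VIXskewATMHeston}).

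The main technical obstacle is the bookkeeping required to carry the two-dimensional perturbative minimization through to cubic order. The coefficient $\alpha_3$ picks up contributions from three distinct sources: the cubic terms in the $I_H$ expansion (\ref{IHexp}), the quadratic Taylor term in $1/(u\eta(u))$ about $u = S_0$ (which brings in $\eta_2$), and the cubic part of $V_0(e^{\epsilon_y}-1)$ interacting with the inversion of $\eta(S_*) = K e^{-y/2}$. These three sources must recombine into the compact numerator of (\ref{VIXskewATMHeston}). The computation admits a useful cross-check via the pure-Heston limit $\eta_1 = \eta_2 = 0$, where the skew collapses to $-\sigma/(4\sqrt{V_0})$, matching the direct expansion of the rate function obtained by specializing Proposition~\ref{prop:VIX} to $\sigma(v) = \sigma v^{-1/2}$.
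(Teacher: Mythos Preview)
Your proposal is correct and follows essentially the same approach as the paper. The paper's proof simply states that it is ``similar to that of Proposition~\ref{prop:Vlognorm} apart from the use of the function $H(y,z)$ for the Heston-type model,'' i.e., it derives both the ATM level and the skew from the expansion $J_V(K) = j_1^V x^2 + j_2^V x^3 + O(x^4)$ of the rate function in Theorem~\ref{Thm:VIX}, using the Heston $H(y,z)$ from Proposition~\ref{prop:HHeston}, and then inverts $\sigma_{\mathrm{VIX}} = |x|/\sqrt{2J_V}$.

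The one organizational difference worth noting is that you obtain $\sigma_{\mathrm{VIX}}(0)$ directly from Theorem~\ref{thm:VIX:ATM}, whereas the paper's method (via Proposition~\ref{prop:Vlognorm}) reads it off from $j_1^V$ and only afterwards remarks that the result agrees with Theorem~\ref{thm:VIX:ATM}. Your shortcut is perfectly legitimate and saves some algebra; the rate-function route has the minor advantage of being self-contained and providing the internal consistency check you mention. For the skew the two arguments are identical in substance: expand the minimizers of the two-dimensional extremal problem as power series in $x_{\mathrm{VIX}}$, solve the stationarity conditions order by order, and combine the cubic contributions from $I_H$, from the local-volatility expansion through $\eta_2$, and from the correlation integral.
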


\begin{remark}
The ATM VIX volatility (\ref{VIXvolATMHeston})
agrees with the prediction from Theorem~\ref{thm:VIX:ATM}.
\end{remark}

We can compare these results with the prediction for VIX options in the stochastic model with Heston-type volatility process $dV_t = \sigma \sqrt{V_t} dW_t + a(b - V_t) dt$, similar to the analysis in Section \ref{sec:6.1} for the SABR-type model.
As before, we apply the results of Proposition~\ref{prop:VIX}
with $\mathrm{VIX}_T = \sqrt{\alpha(\tau) V_T + \beta(\tau)}$ and $\alpha(\tau) = \frac{1-e^{-a \tau}}{a\tau}$ and $\beta(\tau) = b(1 - \alpha(\tau))$. 

The VIX futures price is $F_V(T) = \sqrt{\alpha(\tau) V_0 + \beta(\tau)} + O(T)$. 
As $T\to 0$, VIX call options are OTM for $K > F_V(0)$ and VIX put options are ITM for $K< F_V(0)$.

The short maturity limit of the OTM VIX options is given by Proposition~\ref{prop:VIX} with $\sigma(v) = \sigma v^{-1/2}$ and $\mathcal{F}(v) = \sqrt{\alpha(\tau) v + \beta(\tau)}$. The rate function for OTM VIX options is
\begin{equation}
J_V(K) = \frac{1}{2\sigma^2} \left( \int_{V_0}^{\mathcal{F}^{-1}(K^2)}
\frac{dx}{\sqrt{x}} \right)^2 = \frac{2}{\sigma^2}
( \sqrt{\mathcal{F}^{-1}(K^{2})} - \sqrt{V_0})^2 =
\frac{2}{\sigma^2}
\left( \sqrt{\frac{K^2-\beta(\tau)}{\alpha(\tau)}} - \sqrt{V_0}\right)^2\,,
\end{equation}
such that the short-maturity limit of the VIX implied volatility is:
\begin{equation}\label{VIXimpvolHeston}
\lim_{T\to 0} \sigma_{\mathrm{VIX}}(K, V_0,T) = 
\frac{\sigma}{2} \frac{\log(K/\sqrt{V_0})}
{\sqrt{\frac{K^2-\beta(\tau)}{\alpha(\tau)}} - \sqrt{V_0}}\,.
\end{equation}

This agrees with the result in Sec.~1.8.2 of Forde and Smith \cite{Forde2023}. They choose equal mean-reverting level and spot variance $b=V_0$ to simplify the result, but the result above is more general and holds for all parameters.

In the small averaging time $\tau\to 0$ (or equivalently small mean-reversion limit $a\to 0$) we have $\alpha(\tau)\to 1,\beta(\tau)\to 0$ and the asymptotic VIX smile becomes
\begin{equation}\label{sigVIXHeston}
\lim_{T\to 0,\tau\to 0} \sigma_{\mathrm{VIX}}(K, V_0,T) = 
\frac{\sigma}{2} \frac{\log(K/\sqrt{V_0})}
{K - \sqrt{V_0}}\,.
\end{equation}

The first few terms in the expansion of the asymptotic VIX implied volatility in powers of log-strike $z=\log\frac{K}{F_V(0)}$ (with $F_V(0)=\sqrt{V_0}$) are
\begin{equation}\label{sigVIXexpHeston}
\sigma_{\mathrm{VIX}}(K,V_0,0) = \frac{\sigma}{2\sqrt{V_0}} 
\frac{z}{e^z-1}=
\frac{\sigma}{2\sqrt{V_0}} \left\{ 
1 - \frac12 z + \frac{1}{12}z^2 + O(z^3) \right\}\,.
\end{equation} 
The VIX smile in the Heston model is down-sloping and convex, which is 
well known to contradict empirical evidence from market data, and disfavors this model for modeling volatility products.

The Heston model result (\ref{sigVIXHeston}) can be compared with the general results of Proposition \ref{prop:VHeston} for the VIX smile in the Heston-type local-stochastic volatility model. Taking $\eta_0=1, \eta_{1,2}=0$ in 
Proposition \ref{prop:VHeston} we get
\begin{equation}
    \sigma_{\mathrm{VIX}}(0) = \frac{\sigma}{2\sqrt{V_0}}\,,\qquad
    s_{\mathrm{VIX}} = - \frac{\sigma}{4\sqrt{V_0}}\,,
\end{equation}
which reproduces the first two coefficients of the series expansion for the Heston model (\ref{sigVIXexpHeston}).


\section{Numerical Illustrations}\label{sec:numerical}

In this section we compare the asymptotic results for the implied volatility of 
European and VIX options with the actual implied volatility, obtained by Monte Carlo simulations of 
a local-stochastic volatility model. For this test we choose the local-stochastic volatility model

\begin{equation}\label{SDETanh}
dS_t  =  \eta(S_t) S_t \sqrt{V_t} dW_t \,,\qquad \frac{dV_t}{V_t} = \sigma dZ_t\,,
\end{equation}
where $W_t,Z_t$ are correlated standard Brownian motions with correlation $\rho$.
The local volatility function is taken as
\begin{equation}\label{loc:vol:eta:S}
\eta(S) := f_0 + f_1 \tanh \left( \log\frac{S}{S_0} - x_0 \right)\,.
\end{equation}
This is the so-called Tanh model which was used in Forde and Jacquier (2011) \cite{Forde2011}
 to test the predictions of their asymptotic results for the uncorrelated local-stochastic volatility model. 
The coefficients $\eta(x),\sigma(v)$ for the model (\ref{SDETanh}) are bounded, and satisfy the technical conditions assumed in our paper. 

The local volatility function \eqref{loc:vol:eta:S} is expanded in powers of the log-asset $\log(S/S_0)$ as
\begin{equation}
\eta(S) = \eta_0 + \eta_1 \log\frac{S}{S_0} +  \eta_2 \log^2\frac{S}{S_0} + \cdots\,,
\end{equation}
with 
\begin{eqnarray}
&& \eta_0 := f_0 - f_1 \tanh x_0\,, \\
&& \eta_1 := \frac{f_1}{\cosh^2 x_0}\,,  \\
&& \eta_2 := \frac{f_1}{\cosh^2 x_0} \tanh x_0 \,.
\end{eqnarray}

The short-maturity asymptotics of the implied volatility of European and VIX options in the model (\ref{SDETanh}) were obtained in Section~\ref{sec:6.2}. The asymptotic predictions for European options are given in Proposition~\ref{prop:Elognorm} and those for the VIX options in Proposition~\ref{prop:Vlognorm}.
The information about the local volatility function $\eta(x)$ enter these predictions only through the expansion coefficients $\eta_{0,1,2}$. We will compare these predictions against Monte Carlo simulations of the model. 

\textbf{Model parameters.} In the numerical test, we will assume that the parameters for the local volatility function $\eta(x)$ are given by:
\begin{equation}\label{params1}
f_0 = 1.0\,,\quad f_1 = -0.5\,,\quad x_0 = 0\,,
\end{equation}
and the parameters of the volatility process are
\begin{equation}\label{params2}
\sigma = 2.0\,,\quad V_0 = 0.1 \,.
\end{equation}
The spot asset price is taken as $S_0=1$.

The correlation $\rho$ will be varied in the range $\{ -0.7, 0, +0.7\}$. The MC simulation will use $N_{\mathrm{MC}}=100k$ paths and $n=200$ time steps. The variance $V_t$ is simulated exactly as a geometric Brownian motion, and the process for $S_t$ is simulated using a Euler discretization.

\subsection{Numerical tests for European options}

The short-maturity asymptotic implied volatility of the European options will be approximated as a quadratic function of log-strike
\begin{equation}\label{EurTh}
\sigma_{\mathrm{BS}}(k) = \sigma_{\mathrm{ATM}} + s_E k + \kappa_E k^2\,,
\end{equation}
where the ATM level $\sigma_{\mathrm{ATM}}$, skew $s_E$ and convexity $\kappa_E$ are given in Proposition~\ref{prop:Elognorm}. Their numerical values for this test are shown in Table~\ref{tab:tanh}.

The asymptotic result (\ref{EurTh}) is shown as the solid curve in Figure~\ref{Fig:Eur}. 
The red dots show the results of a MC simulation for European options with maturity $T=1/12$ (1 month). The agreement is reasonably good for strikes sufficiently close to the ATM point.

\begin{figure}[h]
\centering
\includegraphics[width=1.9in]{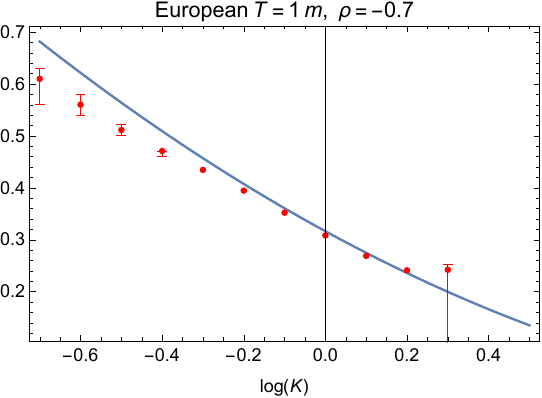}
\includegraphics[width=1.9in]{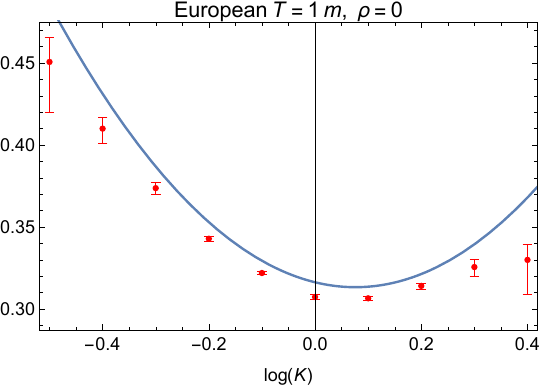}
\includegraphics[width=1.9in]{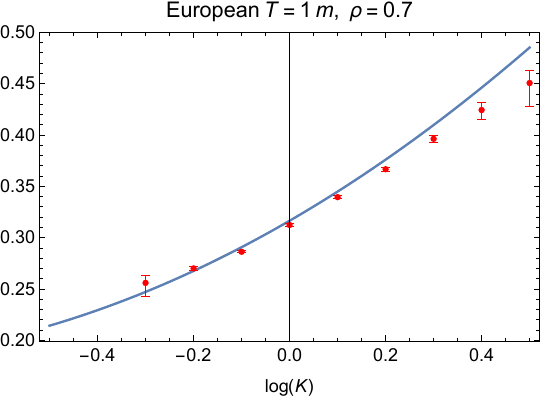}
\caption{Numerical tests for European option pricing in the Tanh model with parameters (\ref{params1}), (\ref{params2}) 
and correlation
$\rho =  - 0.7, 0, +0.7$ respectively. The solid curve is the asymptotic prediction (\ref{EurTh}) for the implied volatility and
the red dots show the MC result for European options with maturity $T=1/12$ (1 month).}
\label{Fig:Eur}
\end{figure}

\begin{table}[h!]
  \centering
  \caption{The parameters for the short-maturity asymptotics of the European and VIX options in the Tanh model used for the numerical test. }
    \begin{tabular}{|c|ccc||ccc|}
    \hline
$\rho$ & $\sigma_{E,\mathrm{ATM}}$ & $s_E$ & $\kappa_E$ 
           & $\sigma_{\mathrm{VIX},\mathrm{ATM}}$ & $s_{\mathrm{VIX}}$ & $\kappa_{\mathrm{VIX}}$ \\
    \hline\hline
$-0.7$ & 0.316 & $-0.429$ & 0.133 & 
          1.116  & 0.054  & 0.004 \\
0 & 0.316 & $-0.079$ & 0.520 & 
      1.012 & 0.012 & 0.002 \\
$+0.7$ & 0.316 & 0.271 & 0.133 & 0.896 & $-0.053$ & $-0.005$ \\
    \hline
    \end{tabular}%
  \label{tab:tanh}%
\end{table}%

\subsection{Numerical tests for VIX options}

Next consider the VIX options. 
We compute a quadratic approximation for the VIX implied volatility as
\begin{equation}\label{VIXth}
\sigma_{\mathrm{VIX}}(x_{\mathrm{VIX}}) = \sigma_{\mathrm{VIX},\mathrm{ATM}} + s_{\mathrm{VIX}}\cdot x_{\mathrm{VIX}} + \kappa_{\mathrm{VIX}} 
x_{\mathrm{VIX}}^2\,, 
\end{equation}
where $x_{\mathrm{VIX}} := \log\frac{K}{\mathrm{VIX}_0}$ with $\mathrm{VIX}_0:=\eta_0 \sqrt{V_0}$.
The ATM level $\sigma_{\mathrm{VIX},ATM}$, skew $s_{\mathrm{VIX}}$ and convexity $\kappa_{\mathrm{VIX}}$ are given in Proposition~\ref{prop:Vlognorm}. Their numerical values corresponding to the parameters (\ref{params1}), (\ref{params2}) are listed in Table~\ref{tab:tanh}.

\begin{figure}[h]
\centering
\includegraphics[width=1.9in]{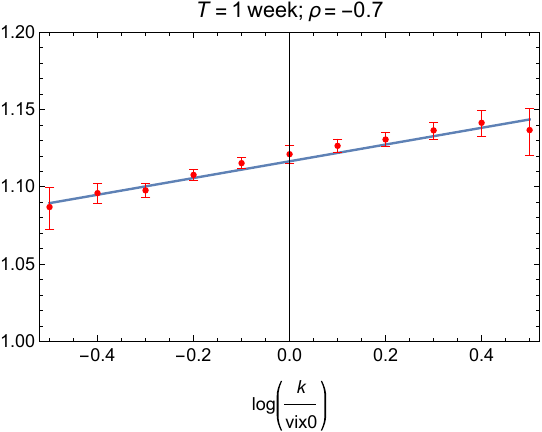}
\includegraphics[width=1.9in]{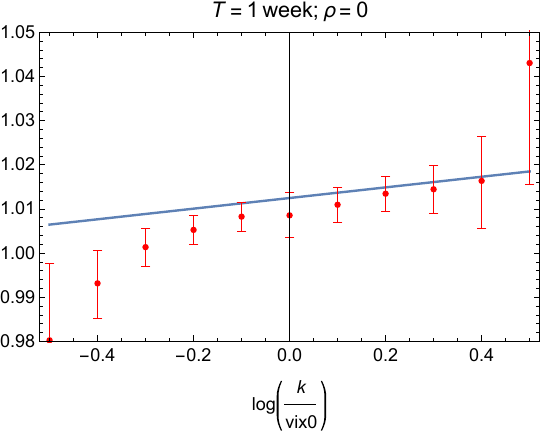}
\includegraphics[width=1.9in]{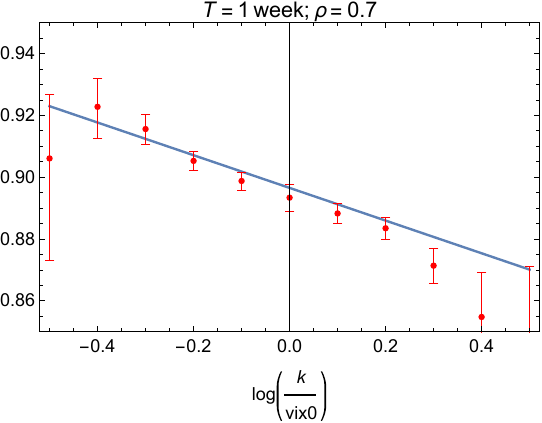}
\caption{Numerical tests for VIX option pricing in the Tanh model with parameters (\ref{params1}), (\ref{params2}) 
and correlation
$\rho = - 0.7, 0, +0.7$ respectively. The solid curve is the asymptotic prediction (\ref{VIXth}) for the VIX implied volatility and 
the red dots show the MC simulation for 
VIX options with maturity $T=1/52$ (1 week).}
\label{Fig:VIX}
\end{figure}

The asymptotic prediction (\ref{VIXth}) is shown as the solid curve in Figure~\ref{Fig:VIX}. The red dots show the results of a MC simulation of the model for VIX options with maturity $T=1/52$ (1 week).
The range of strikes covered in the testing is constrained by the spread of the values of $\mathrm{VIX}_T$ in the simulation. This is sufficiently wide, even for the shorter maturity $T=1/52$ considered.
(On the other hand, the range of simulated values for $S_T$ at $T=1/52$ is less dispersed, so in order to get a wider range of strikes we used a longer maturity $T=1/12$ for the European options testing.)
The agreement of the asymptotic result for the VIX implied volatility with the MC simulation is again reasonably good for strikes sufficiently close to the ATM point. 

\section*{Acknowledgements}
Xiaoyu Wang is supported by the Guangzhou-HKUST(GZ) Joint Funding Program\\ (No.2024A03J0630), 
Guangzhou Municipal Key Laboratory of Financial Technology Cutting-Edge Research.
Lingjiong Zhu is partially supported by the grants NSF DMS-2053454, NSF DMS-2208303.

\appendix

\section{Background on Large Deviations Theory}\label{app:LD}

We give in this Appendix a few basic concepts of large deviations theory from probability theory
which are in the proofs.
We refer to Varadhan \cite{VaradhanLD}, Dembo and Zeitouni \cite{Dembo1998} for more details on large deviations and its applications.

\begin{definition}[Large Deviation Principle]
A sequence $(P_\epsilon)_{\epsilon \in \mathbb{R}^+}$ of probability measures
on a topological space $X$ satisfies the large deviation principle with rate function $I: X \to \mathbb{R}$
if $I$ is non-negative, lower semicontinuous and for any measurable set $A$, we have
\begin{equation}
- \inf_{x\in A^o} I(x) \leq \liminf_{\epsilon\to 0} \epsilon \log P_\epsilon(A) \leq
\limsup_{\epsilon\to 0} \epsilon \log P_\epsilon(A) \leq - \inf_{x\in \bar A} I(x) \,,
\end{equation}
where $A^o$ denotes the interior of $A$ and $\bar A$ its closure.
\end{definition}

\begin{theorem}[Contraction Principle, see e.g. Theorem 4.2.1. in \cite{Dembo1998}]\label{Contraction:Thm}
If $F:X\rightarrow Y$ is a continuous map and 
$P_{\epsilon}$ satisfies a large deviation principle on $X$ with the rate 
function $I(x)$,
then the probability measures $Q_{\epsilon}:=P_{\epsilon}F^{-1}$ satisfies
a large deviation principle on $Y$ with the rate function
$J(y)=\inf_{x: F(x)=y}I(x)$.
\end{theorem}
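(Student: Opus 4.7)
The plan is to verify directly the three defining properties of the large deviation principle for the pushforward measures $Q_\epsilon = P_\epsilon \circ F^{-1}$ on $Y$: first, that the candidate rate function $J(y) = \inf\{I(x) : F(x) = y\}$ is non-negative and lower semi-continuous; second, that the upper large deviation bound holds on closed sets of $Y$; and third, that the matching lower bound holds on open sets of $Y$. The central observation that makes the proof short is that since $F$ is continuous, $F^{-1}(A)$ is closed (respectively open) in $X$ whenever $A$ is closed (respectively open) in $Y$, and this lets one transfer the LDP bounds from $X$ to $Y$ verbatim through the identity $Q_\epsilon(A) = P_\epsilon(F^{-1}(A))$.

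For the upper bound, I would take a closed set $C \subset Y$ and write $Q_\epsilon(C) = P_\epsilon(F^{-1}(C))$. Since $F^{-1}(C)$ is closed in $X$, the LDP for $P_\epsilon$ yields
\begin{equation*}
\limsup_{\epsilon\to 0} \epsilon \log Q_\epsilon(C) \leq -\inf_{x \in F^{-1}(C)} I(x).
\end{equation*}
Partitioning $F^{-1}(C) = \bigcup_{y\in C} F^{-1}(y)$ rewrites the right-hand side as
\begin{equation*}
\inf_{x \in F^{-1}(C)} I(x) = \inf_{y \in C} \inf_{x \in F^{-1}(y)} I(x) = \inf_{y \in C} J(y),
\end{equation*}
giving the claimed upper bound. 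The lower bound for an open set $G \subset Y$ is produced by the same two-line argument, using that $F^{-1}(G)$ is open in $X$.

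For lower semi-continuity of $J$, I would argue via level sets using the identity $\{J \leq \alpha\} = F(\{I \leq \alpha\})$, which follows immediately from the definition of $J$. When $I$ is a good rate function, the sublevel set $\{I \leq \alpha\}$ is compact, its continuous image $F(\{I \leq \alpha\})$ is compact, hence closed, so all sublevel sets of $J$ are closed and $J$ is lower semi-continuous on $Y$ (and in fact a good rate function there). Non-negativity is inherited trivially from that of $I$.

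The only mild subtlety is the lower semi-continuity of $J$: if one does not assume $I$ is a good rate function, the image-of-a-compact-set argument fails and one must fall back on a sequential argument that directly exploits continuity of $F$ together with a diagonal extraction. In all the applications in this paper the rate functions arise from sample-path Freidlin--Wentzell theory and are good, so the clean compactness route above is sufficient.
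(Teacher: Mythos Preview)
The paper does not supply its own proof of this theorem: it is quoted in Appendix~\ref{app:LD} as background and attributed to Dembo--Zeitouni \cite{Dembo1998}, Theorem~4.2.1, without argument. So there is no paper proof to compare against; your proposal is being measured against the standard textbook proof, and it \emph{is} the standard textbook proof. The reduction of the upper and lower bounds to the corresponding bounds for $P_\epsilon$ on the preimages $F^{-1}(C)$ and $F^{-1}(G)$, together with the identity $\inf_{x\in F^{-1}(A)} I(x)=\inf_{y\in A} J(y)$, is exactly how Dembo--Zeitouni argue.

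One point worth tightening: your level-set identity $\{J\le\alpha\}=F(\{I\le\alpha\})$ is not automatic from the definition of $J$ alone, since the infimum defining $J(y)$ need not be attained. You need goodness of $I$ to guarantee attainment (via compactness of $\{I\le\alpha\}\cap F^{-1}(y)$), and you do invoke goodness in the next sentence, so the argument is sound; but the phrase ``follows immediately from the definition of $J$'' overstates it. Note also that the paper's statement omits the word ``good'' in front of ``rate function'', whereas the cited Theorem~4.2.1 in \cite{Dembo1998} does assume goodness; your caveat at the end correctly flags that this hypothesis is what makes the clean compactness route work, and that it is satisfied in all the applications here.
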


\section{Proofs of the Main Results}\label{sec:proofs:main:results}

We give in this Appendix the proofs of the main results in the paper.

\subsection{Model specification}

\begin{proof}[Proof of Proposition~\ref{prop:V:moments}]
For any $p\geq 1$, we can compute that
\begin{align}
\mathbb{E}\left[V_{t}^{p}\right]
&=V_{0}^{p}\mathbb{E}\left[e^{\int_{0}^{t}(p\mu(V_{u})-\frac{p}{2}\sigma^{2}(V_{u}))du+p\int_{0}^{t}\sigma(V_{u})dZ_{u}}\right]
\nonumber
\\
&\leq
V_{0}^{p}e^{pM_{\mu}t+\frac{p^{2}}{2}M_{\sigma}^{2}t}
\mathbb{E}\left[e^{-\int_{0}^{t}\frac{p^{2}}{2}\sigma^{2}(V_{u})du+p\int_{0}^{t}\sigma(V_{u})dZ_{u}}\right]
\nonumber
\\
&\leq V_{0}^{p}e^{pM_{\mu}t+\frac{p^{2}}{2}M_{\sigma}^{2}t},
\end{align}
where we used Assumption~\ref{assump:bounded}
and the fact that $e^{-\int_{0}^{t}\frac{p^{2}}{2}\sigma^{2}(V_{u})du+p\int_{0}^{t}\sigma(V_{u})dZ_{u}}$
is a non-negative local martingale and thus a supermartingale.
Hence, we conclude that for any $p\geq 1$
\begin{equation}
\max_{0\leq t\leq T}\mathbb{E}\left[V_{t}^{p}\right]  
\leq V_{0}^{p}e^{pM_{\mu}T+\frac{p^{2}}{2}M_{\sigma}^{2}T}
=O(1),
\end{equation}
as $T\rightarrow 0$.
This completes the proof.
\end{proof}

\begin{proof}[Proof of Proposition~\ref{prop:LM}]
The result follows from Theorem 2.4(i) of Lions and Musiela \cite{Lions2008}.
\end{proof}


\subsection{European options}

\begin{proof}[Proof of Theorem~\ref{Thm:E}]
(i) OTM call options $K > S_0$.
The starting point of the proof is a relation between the small-time asymptotics of the call option price with $K>S_{0}$
and the small-time asymptotics of the density of the asset price in the right tail
\begin{equation}\label{limT}
\lim_{T\to 0} T \log \mathbb{E}\left[(S_{T}-K)^{+}\right] = \lim_{T\to 0}
T \log \mathbb{Q}(S_T \geq K)\,,\quad K > S_0.
\end{equation}
This relation follows by upper and lower bounds for \eqref{limT}. 

Let us first prove the upper bound for \eqref{limT}. 
We include the following argument for the sake
of completeness, which can be found in \cite{Friz2018}.
For any $U>K>S_{0}$, by applying H\"{o}lder's inequality, we have
\begin{align}
\mathbb{E}[(S_T - K)^+]  
&= \mathbb{E}[(S_T - K) 1_{S_T \in (K, U)}] + \mathbb{E}[(S_T - K) 1_{S_T \geq U}] 
\nonumber
\\
&\leq
(U - K) \mathbb{Q}(S_T \in (K, U)) + \left(\mathbb{E}[(S_T)^p ]\right)^{1/p} \left(\mathbb{E}[1_{S_T > U}]\right)^{1/q}
\nonumber
\\
&\leq
(U - K) \mathbb{Q}(S_T\geq K) + \left(\mathbb{E}[(S_T)^p ]\right)^{1/p} \left(\mathbb{Q}(S_T \geq U)\right)^{1/q}
\label{take:log}
\end{align}
for any $p,q>1$ such that $\frac{1}{p}+\frac{1}{q}=1$,
where $p$ is chosen 
such that which $\mathbb{E}[(S_T)^p]=O(1)$ as $T\rightarrow 0$ under Assumption~\ref{assump:S:T:p}.

By taking the logarithm in \eqref{take:log} and multiplying with $T$ and letting $T\rightarrow 0$, we obtain
\begin{align}
\limsup_{T\rightarrow 0}T\log 
\mathbb{E}[(S_T - K)^+]  
\leq
\max\left\{\limsup_{T\rightarrow 0}T\log\mathbb{Q}(S_{T}\geq K),\frac{1}{q}\limsup_{T\rightarrow 0}T\log\mathbb{Q}(S_{T}\geq U)\right\}.\label{eqn:K:U}
\end{align}
Next, let us show that the limits
$\lim_{T\rightarrow 0}T\log\mathbb{Q}(S_{T}\geq K)$ and $\lim_{T\rightarrow 0}T\log\mathbb{Q}(S_{T}\geq U)$ exist.


Under Assumptions~\ref{assump:bounded} and \ref{assump:LDP}, by the sample-path large deviations for 
small time diffusions (see for example \cite{Varadhan} and \cite{Robertson2010}),
one can see that $\mathbb{Q}(\{(\log S_{tT},\log V_{tT}),0\leq t\leq 1\}\in\cdot)$
satisfies a sample-path large deviation principle with the rate function:
\begin{equation}\label{rate:function:LDP}
\frac{1}{2(1-\rho^{2})}\int_{0}^{1}\left(\frac{g'(t)}{\eta(e^{g(t)})\sqrt{e^{h(t)}}}-\frac{\rho h'(t)}{\sigma(e^{h(t)})}\right)^{2}dt
+\frac{1}{2}\int_{0}^{1}\left(\frac{h'(t)}{\sigma(e^{h(t)})}\right)^{2}dt,    
\end{equation}
with $g(0)=\log S_{0}$, $h(0)=\log V_{0}$ and $g,h$ being absolutely continuous
and the rate function is $+\infty$ otherwise.

By an application of the contraction principle (see for example Theorem 4.2.1. in \cite{Dembo1998}, restated in Theorem~\ref{Contraction:Thm}),
one can compute that
\begin{align}
&\lim_{T\rightarrow 0}T\log\mathbb{Q}\left(S_{T}\geq K\right)
\nonumber
\\
&=-\inf_{\substack{g(0)=\log S_{0}\\
h(0)=\log V_{0}\\
g(1)=\log K}}
\left\{\frac{1}{2(1-\rho^{2})}\int_{0}^{1}\left(\frac{g'(t)}{\eta(e^{g(t)})\sqrt{e^{h(t)}}}-\frac{\rho h'(t)}{\sigma(e^{h(t)})}\right)^{2}dt
+\frac{1}{2}\int_{0}^{1}\left(\frac{h'(t)}{\sigma(e^{h(t)})}\right)^{2}dt\right\}.
\end{align}
Similarly, we can obtain the limit
$\lim_{T\rightarrow 0}T\log\mathbb{Q}(S_{T}\geq U)$
with $\lim_{U\rightarrow\infty}\lim_{T\rightarrow 0}T\log\mathbb{Q}(S_{T}\geq U)=-\infty$.
Since $U>K>S_{0}$ is arbitrary, by letting $U\rightarrow\infty$ in \eqref{eqn:K:U},
we obtain the upper bound for \eqref{limT}, i.e.
\begin{equation*}
\limsup_{T\to 0} T \log \mathbb{E}\left[(S_{T}-K)^{+}\right] 
\leq\limsup_{T\to 0}
T \log \mathbb{Q}(S_T \geq K)\,,\quad K > S_0.
\end{equation*}
The argument for the lower bound for \eqref{limT} is standard, see e.g. \cite{Pham2007} and we omit the details here.
Hence, we proved \eqref{limT}.

A similar relation holds between the small-time asymptotics of the put options 
and of the density of $S_T$ in the left wing ($K < S_0$).

For both cases, based on the previous discussions, the limit \eqref{limT} can be computed using large deviations theory as:
\begin{align}
\label{rate:function}
&\lim_{T\rightarrow 0}T\log\mathbb{Q}\left(S_{T}\geq K\right)
\nonumber
\\
&=-\inf_{\substack{g(0)=\log S_{0}\\
h(0)=\log V_{0}\\
g(1)=\log K}}
\left\{\frac{1}{2(1-\rho^{2})}\int_{0}^{1}\left(\frac{g'(t)}{\eta(e^{g(t)})\sqrt{e^{h(t)}}}-\frac{\rho h'(t)}{\sigma(e^{h(t)})}\right)^{2}dt
+\frac{1}{2}\int_{0}^{1}\left(\frac{h'(t)}{\sigma(e^{h(t)})}\right)^{2}dt\right\}.
\end{align}

Given $h$, we can determine the optimal $g$ as follows.
By Cauchy-Schwarz inequality, we have
\begin{align}
&\int_{0}^{1}\left(\frac{g'(t)}{\eta(e^{g(t)})\sqrt{e^{h(t)}}}-\frac{\rho h'(t)}{\sigma(e^{h(t)})}\right)^{2}dt
\cdot\int_{0}^{1}\left(\sqrt{e^{h(t)}}\right)^{2}dt
\nonumber
\\
&\geq
\left(\int_{0}^{1}\left(\frac{g'(t)}{\eta(e^{g(t)})}-\frac{\rho h'(t)\sqrt{e^{h(t)}}}{\sigma(e^{h(t)})}\right)dt\right)^{2},
\end{align}
where the integrals on the right-hand side can be expressed in a simpler form as
\begin{align}
\int_{0}^{1}\frac{g'(t)}{\eta(e^{g(t)})}dt
=\int_{0}^{1}\frac{d(e^{g(t)})}{e^{g(t)}\eta(e^{g(t)})}
=\int_{e^{g(0)}}^{e^{g(1)}}\frac{dx}{x\eta(x)},
\label{Thm:E:call:1}
\end{align}
where $e^{g(0)}=S_{0}$ and $e^{g(1)}=K$, 
and
\begin{equation}
\int_{0}^{1}\frac{\rho h'(t)\sqrt{e^{h(t)}}}{\sigma(e^{h(t)})}dt
=\int_{0}^{1}\frac{\rho h'(t)e^{h(t)}\sqrt{e^{h(t)}}}{e^{h(t)}\sigma(e^{h(t)})}dt
=\int_{e^{h(0)}}^{e^{h(1)}}\frac{\rho dx}{\sqrt{x}\sigma(x)},
\end{equation}
where $h(0)=\log V_{0}$.
Therefore, we have
\begin{align}
\frac{1}{2}\int_{0}^{1}\left(\frac{g'(t)}{\eta(e^{g(t)})\sqrt{e^{h(t)}}}-\frac{\rho h'(t)}{\sigma(e^{h(t)})}\right)^{2}dt
\geq
\frac{1}{2}\left(\int_{S_{0}}^{K}\frac{dx}{x\eta(x)}-\int_{V_{0}}^{e^{h(1)}}\frac{\rho dx}{\sqrt{x}\sigma(x)}\right)^{2}
\left(\int_{0}^{1}e^{h(t)}dt\right)^{-1},
\end{align}
and by Cauchy-Schwarz inequality, the equality is achieved
when 
\begin{equation}
\frac{g'(t)}{\eta(e^{g(t)})}-\frac{\rho\sqrt{e^{h(t)}}h'(t)}{\sigma(e^{h(t)})}=C_{1}e^{h(t)},
\end{equation}
for some constant $C_{1}$ so that $g(t)$ can be solved
via the equation:
\begin{equation}
\int_{S_{0}}^{e^{g(t)}}\frac{dx}{x\eta(x)}-\int_{V_{0}}^{e^{h(t)}}\frac{\rho dx}{\sqrt{x}\sigma(x)}=C_{1}\int_{0}^{t}e^{h(s)}ds,
\end{equation}
where
\begin{equation}
C_{1}=\frac{\int_{S_{0}}^{K}\frac{dx}{x\eta(x)}-\int_{V_{0}}^{e^{h(1)}}\frac{\rho dx}{\sqrt{x}\sigma(x)}}{\int_{0}^{1}e^{h(s)}ds}.
\end{equation}
Since with fixed $h$, we can solve for the optimal $g$,
by the discussions above, we conclude that
\begin{align}
&\lim_{T\rightarrow 0}T\log\mathbb{Q}\left(S_{T}\geq K\right)
\nonumber
\\
&=-\inf_{h(0)=\log V_{0}}
\Bigg\{\frac{1}{2(1-\rho^{2})}\left(\int_{S_{0}}^{K}\frac{dx}{x\eta(x)}-\int_{V_{0}}^{e^{h(1)}}\frac{\rho dx}{\sqrt{x}\sigma(x)}\right)^{2}
\left(\int_{0}^{1}e^{h(t)}dt\right)^{-1}
\nonumber
\\
&\qquad\qquad\qquad\qquad\qquad
+\frac{1}{2}\int_{0}^{1}\left(\frac{h'(t)}{\sigma(e^{h(t)})}\right)^{2}dt\Bigg\}
\nonumber
\\
&=-\inf_{y,z}
\left\{\frac{1}{2(1-\rho^{2})z}\left(\int_{S_{0}}^{K}\frac{dx}{x\eta(x)}-\int_{V_{0}}^{e^{y}}\frac{\rho dx}{\sqrt{x}\sigma(x)}\right)^{2}
+\inf_{\substack{h(0)=\log V_{0},h(1)=y\\
\int_{0}^{1}e^{h(t)}dt=z}}\frac{1}{2}\int_{0}^{1}\left(\frac{h'(t)}{\sigma(e^{h(t)})}\right)^{2}dt\right\}
\nonumber
\\
&=-\inf_{y,z}
\left\{\frac{1}{2(1-\rho^{2})z}\left(\int_{S_{0}}^{K}\frac{dx}{x\eta(x)}-\int_{V_{0}}^{e^{y}}\frac{\rho dx}{\sqrt{x}\sigma(x)}\right)^{2}
+H(y,z)\right\},
\label{Thm:E:call:2}
\end{align}
where
\begin{equation}
H(y,z):=\inf_{\substack{h(0)=\log V_{0},h(1)=y\\
\int_{0}^{1}e^{h(t)}dt=z}}\frac{1}{2}\int_{0}^{1}\left(\frac{h'(t)}{\sigma(e^{h(t)})}\right)^{2}dt.
\end{equation}

(ii) OTM put options $K < S_0$. 
The case for OTM put options is analogous to the case for call options. Similar to~\eqref{limT}, we have
\begin{equation}
\lim_{T\to 0} T \log \mathbb{E}[(K - S_{T})^{+}] = \lim_{T\to 0}
T \log \mathbb{Q}(K \geq S_{T})\,,\quad K < S_0.
\end{equation}
By large deviations theory, the rate function for $\lim_{T\rightarrow 0}T\log\mathbb{Q}\left(K \geq S_{T}\right)$ is the same as \eqref{rate:function}. Following the steps to get \eqref{Thm:E:call:1}, we compute
\begin{equation}\label{put:variation:1}
\int_{0}^{1}\frac{g'(t)}{\eta(e^{g(t)})}dt
=\int_{0}^{1}\frac{d(-e^{-g(t)})}{e^{-g(t)}\eta(e^{g(t)})} = \int_{e^{-g(1)}}^{e^{-g(0)}}\frac{dx}{x\eta(x^{-1})},
\end{equation}
where $e^{g(1)} = K < e^{g(0)} = S_0$. And
\begin{equation}\label{put:variation:2}
\int_{0}^{1}\frac{\rho h'(t)\sqrt{e^{h(t)}}}{\sigma(e^{h(t)})}dt
=\int_{0}^{1}\frac{-\rho h'(t)e^{-h(t)}\sqrt{e^{h(t)}}}{-e^{-h(t)}\sigma(e^{h(t)})}dt
=\int_{e^{-h(1)}}^{e^{-h(0)}}\frac{\rho dx}{\sqrt{x^3}\sigma(x^{-1})},
\end{equation}
where $h(0)=\log V_{0}$. Therefore, we can get the inequality as follows
\begin{align}
&\frac{1}{2}\int_{0}^{1}\left(\frac{g'(t)}{\eta(e^{g(t)})\sqrt{e^{h(t)}}}-\frac{\rho h'(t)}{\sigma(e^{h(t)})}\right)^{2}dt
\nonumber
\\
&\geq
\frac{1}{2}\left(\int_{K^{-1}}^{S_0^{-1}}\frac{dx}{x\eta(x^{-1})}-\int_{e^{-h(1)}}^{V_0^{-1}}\frac{\rho dx}{\sqrt{x^3}\sigma(x^{-1})}\right)^{2}
\left(\int_{0}^{1}e^{h(t)}dt\right)^{-1},
\end{align}
and by Cauchy-Schwarz inequality, the equality is achieved
when $\frac{g'(t)}{\eta(e^{g(t)})}-\frac{\rho\sqrt{e^{h(t)}}h'(t)}{\sigma(e^{h(t)})}$ and $e^{h(t)}$ are linearly dependent. Hence, $g(t)$ can be solved
via the equation:
\begin{equation}
\int_{K^{-1}}^{S_0^{-1}}\frac{dx}{x\eta(x^{-1})}-\int_{e^{-h(1)}}^{V_0^{-1}}\frac{\rho dx}{\sqrt{x^3}\sigma(x^{-1})}=C_{1}\int_{0}^{t}e^{h(s)}ds,
\end{equation}
where
\begin{equation}
C_{1}=\frac{\int_{K^{-1}}^{S_0^{-1}}\frac{dx}{x\eta(x^{-1})}-\int_{e^{-h(1)}}^{V_0^{-1}}\frac{\rho dx}{\sqrt{x^3}\sigma(x^{-1})}}{\int_{0}^{1}e^{h(s)}ds}.
\end{equation}
Since with fixed $h$, we can solve for the optimal $g$ similar to~\eqref{Thm:E:call:2} such that 
\begin{align}
& \lim_{T\to 0}
T \log \mathbb{Q}(K \geq S_{T}) \nonumber \\
&=-\inf_{y,z}
\left\{\frac{1}{2(1-\rho^{2})z}\left(\int_{K^{-1}}^{S_0^{-1}}\frac{dx}{x\eta(x^{-1})}-\int_{e^{-y}}^{V_0^{-1}}\frac{\rho dx}{\sqrt{x^3}\sigma(x^{-1})}\right)^{2}
\right.\nonumber \\
& \left.\qquad\qquad\qquad\qquad\qquad\qquad\qquad\qquad\qquad\qquad +\inf_{\substack{h(0)=\log V_{0},h(1)=y\\
\int_{0}^{1}e^{h(t)}dt=z}}\frac{1}{2}\int_{0}^{1}\left(\frac{h'(t)}{\sigma(e^{h(t)})}\right)^{2}dt\right\}
\nonumber
\\
&=-\inf_{y,z}
\left\{\frac{1}{2(1-\rho^{2})z}\left(\int_{S_0^{-1}}^{K^{-1}}\frac{dx}{x\eta(x^{-1})}-\int_{V_0^{-1}}^{e^{-y}}\frac{\rho dx}{\sqrt{x^3}\sigma(x^{-1})}\right)^{2}
+H(y,z)\right\}.
\label{Thm:E:put}
\end{align}
Note that $\int_{K^{-1}}^{S_0^{-1}}\frac{dx}{x\eta(x^{-1})} < 0$ when $K<S_0$, and $H(y,z)$ is the same as the one defined in~\eqref{Thm:E:call:2}. Hence, the optimum is achieved in the regime $0 < z < V_0$ where $h'(t) < 0$ and $H(z)$ is decreasing. The last equation in \eqref{Thm:E:put} holds by the simple fact that $(a-b)^2 = (b-a)^2$.
Finally, by changing the variable $x\mapsto x^{-1}$ in \eqref{Thm:E:put}, we have
\begin{align}
\left(\int_{S_0^{-1}}^{K^{-1}}\frac{dx}{x\eta(x^{-1})}-\int_{V_0^{-1}}^{e^{-y}}\frac{\rho dx}{\sqrt{x^3}\sigma(x^{-1})}\right)^{2}
&=\left(-\int_{S_0}^{K}\frac{dx}{x\eta(x)}+\int_{V_0}^{e^{y}}\frac{\rho dx}{\sqrt{x}\sigma(x)}\right)^{2}
\nonumber
\\
&=\left(\int_{S_0}^{K}\frac{dx}{x\eta(x)}-\int_{V_0}^{e^{y}}\frac{\rho dx}{\sqrt{x}\sigma(x)}\right)^{2}.
\end{align}
This completes the proof.
\end{proof}

\begin{proof}[Proof of Theorem~\ref{thm:European:ATM}]
We only provide the proof for ATM European call option.
The case for the ATM European put option can be handled
similarly.

\textbf{Step 1.} 
First, we define a Gaussian approximation for $S_t$ as
\begin{equation}
\hat{S}_{t}=S_{0}+\eta(S_{0})S_{0}\sqrt{V_{0}}\left(\rho Z_{t}+\sqrt{1-\rho^{2}}B_{t}\right),\qquad 0\leq t\leq T,
\end{equation}
where $Z_{t}$ and $B_{t}$ are independent standard Brownian motions
and we will show that $S_t$ can be approximated by $\hat{S}_{t}$ in the $L_{2}$-norm.
We can rewrite $\hat{S}_{t}$ as
\begin{equation}
\hat{S}_{t}=S_{0}+\int_{0}^{t}\eta(S_{0})S_{0}\sqrt{V_{0}}dW_{s},
\end{equation}
where $W_{s}:=\rho Z_{s}+\sqrt{1-\rho^{2}}B_{s}$ is a standard Brownian motion and has correlation $\rho$ with $Z_{s}$.
Recall that
\begin{equation}
S_{t}=S_{0}+\int_{0}^{t}(r-q)S_{s}ds+\int_{0}^{t}\eta(S_{s})S_{s}\sqrt{V_{s}}dW_{s}.
\end{equation}
Therefore, 
\begin{align}
&\mathbb{E}\left|S_{t}-\hat{S}_{t}\right|^{2}
\nonumber
\\
&\leq
2\mathbb{E}\left[\left(\int_{0}^{t}(r-q)S_{s}ds\right)^{2}\right]
+2\mathbb{E}\left[\left(\int_{0}^{t}\left(\eta(S_{s})S_{s}\sqrt{V_{s}}-\eta(S_{0})S_{0}\sqrt{V_{0}}\right)dW_{s}\right)^{2}\right].\label{two:terms:to:bound}
\end{align}

\textbf{Step 2.}
Next, let us provide an upper bound for the first term in \eqref{two:terms:to:bound}.
By Cauchy-Schwarz inequality,
\begin{align}
\mathbb{E}\left[\left(\int_{0}^{t}(r-q)S_{s}ds\right)^{2}\right]
\leq
(r-q)^{2}t\int_{0}^{t}\mathbb{E}[S_{s}^{2}]ds
\leq
C_{1}(r-q)^{2}t^{2},
\end{align}
where $C_{1}:=\max_{0\leq t\leq T}\mathbb{E}[S_{t}^{2}]=O(1)$ as $T\rightarrow 0$ under our assumptions.

\textbf{Step 3.}
Next, let us provide an upper bound for the second term in \eqref{two:terms:to:bound}.
By It\^{o}'s isometry, 
\begin{align}
&\mathbb{E}\left[\left(\int_{0}^{t}\left(\eta(S_{s})S_{s}\sqrt{V_{s}}-\eta(S_{0})S_{0}\sqrt{V_{0}}\right)dW_{s}\right)^{2}\right]
\nonumber
\\
&=\int_{0}^{t}\mathbb{E}\left[\left(\eta(S_{s})S_{s}\sqrt{V_{s}}-\eta(S_{0})S_{0}\sqrt{V_{0}}\right)^{2}\right]ds
\nonumber
\\
&\leq
2\int_{0}^{t}\mathbb{E}\left[\left(\eta(S_{s})S_{s}\sqrt{V_{s}}-\eta(S_{s})S_{s}\sqrt{V_{0}}\right)^{2}\right]ds
\nonumber
\\
&\qquad\qquad\qquad\qquad\qquad
+2\int_{0}^{t}\mathbb{E}\left[\left(\eta(S_{s})S_{s}\sqrt{V_{0}}-\eta(S_{0})S_{0}\sqrt{V_{0}}\right)^{2}\right]ds.\label{two:terms:to:bound:2}
\end{align}

Let us first bound the second term in \eqref{two:terms:to:bound:2}. 
Since $\eta$ is $L$-Lipschitz and $M_{\eta}$-uniformly bounded, we can deduce that for any $t$:
$$
\vert\eta(S_t)S_t - \eta(S_0)S_0\vert = \vert\eta(S_t)(S_t - S_0) + (\eta(S_t) - \eta(S_0))S_0\vert \leq L_{\eta}\vert S_t - S_0\vert,
$$
where $L_{\eta} := M_{\eta}+S_0L$.
Thus, we get
\begin{align}
&2\int_{0}^{t}\mathbb{E}\left[\left(\eta(S_{s})S_{s}\sqrt{V_{0}}-\eta(S_{0})S_{0}\sqrt{V_{0}}\right)^{2}\right]ds
\nonumber
\\
&\leq
2V_{0}L_{\eta}^{2}\int_{0}^{t}\mathbb{E}\left[\left(S_{s}-S_{0}\right)^{2}\right]ds
\nonumber
\\
&\leq
4V_{0}L_{\eta}^{2}\int_{0}^{t}\mathbb{E}\left[\left(S_{s}-\hat{S}_{s}\right)^{2}\right]ds
+4V_{0}L_{\eta}^{2}\int_{0}^{t}\mathbb{E}\left[\left(\hat{S}_{s}-S_{0}\right)^{2}\right]ds
\nonumber
\\
&=4V_{0}L_{\eta}^{2}\int_{0}^{t}\mathbb{E}\left[\left(S_{s}-\hat{S}_{s}\right)^{2}\right]ds
+4V_{0}L_{\eta}^{2}\eta^{2}(S_{0})S_{0}^{2}V_{0}\int_{0}^{t}sds
\nonumber
\\
&=4V_{0}L_{\eta}^{2}\int_{0}^{t}\mathbb{E}\left[\left(S_{s}-\hat{S}_{s}\right)^{2}\right]ds
+2V_{0}L_{\eta}^{2}\eta^{2}(S_{0})S_{0}^{2}V_{0}t^{2}.
\end{align}

Next, let us bound the first term in \eqref{two:terms:to:bound:2}. We can compute that
\begin{align}
&2\int_{0}^{t}\mathbb{E}\left[\left(\eta(S_{s})S_{s}\sqrt{V_{s}}-\eta(S_{s})S_{s}\sqrt{V_{0}}\right)^{2}\right]ds
\nonumber
\\
&\leq
2\int_{0}^{t}\left(\mathbb{E}\left[\left(\eta(S_{s})S_{s}\right)^{4}\right]\right)^{1/2}
\left(\mathbb{E}\left[\left(\sqrt{V_{s}}-\sqrt{V}_{0}\right)^{4}\right]\right)^{1/2}ds
\nonumber
\\
&\leq
2\int_{0}^{t}L_{\eta}^{2}\left(\mathbb{E}\left[(S_{s})^{4}\right]\right)^{1/2}
\left(\mathbb{E}\left[\left(\sqrt{V_{s}}-\sqrt{V}_{0}\right)^{4}\right]\right)^{1/2}ds
\nonumber
\\
&\leq
2L_{\eta}^{2}\sqrt{C_{2}}\int_{0}^{t}
\left(\mathbb{E}\left[V_{s}^{2}-4\sqrt{V_{0}}V_{s}^{3/2}+6V_{0}V_{s}-4V_{0}^{3/2}\sqrt{V_{s}}+V_{0}^{2}\right]\right)^{1/2}ds
\nonumber
\\
&\leq
2L_{\eta}^{2}\sqrt{C_{2}}\sqrt{t}\left(\int_{0}^{t}
\mathbb{E}\left[V_{s}^{2}-4\sqrt{V_{0}}V_{s}^{3/2}+6V_{0}V_{s}-4V_{0}^{3/2}\sqrt{V_{s}}+V_{0}^{2}\right]ds\right)^{1/2},\label{to:be:continued}
\end{align}
where $C_{2}:=\max_{0\leq t\leq T}\mathbb{E}[S_{t}^{4}]=O(1)$ as $T\rightarrow 0$ under our assumptions and we applied Cauchy-Schwarz inequality to obtain the last inequality above.

\textbf{Step 4.} In order to finish the calculations to bound
the second term in \eqref{two:terms:to:bound:2} in \textbf{Step 3}, 
we need to provide lower bounds for $\mathbb{E}[V_{s}^{1/2}]$ and $\mathbb{E}[V_{s}^{3/2}]$
and upper bounds for $\mathbb{E}[V_{s}]$ and $\mathbb{E}[V_{s}^{2}]$ that will be used
to complete the upper bound in \eqref{to:be:continued}. 
Let us recall that
\begin{equation}
V_{s}=V_{0}e^{\int_{0}^{s}(\mu(V_{u})-\frac{1}{2}\sigma^{2}(V_{u}))du+\int_{0}^{s}\sigma(V_{u})dZ_{u}},
\end{equation}
and by Jensen's inequality,
\begin{align}
\mathbb{E}\left[V_{s}^{1/2}\right]
&=V_{0}^{1/2}\mathbb{E}\left[e^{\int_{0}^{s}(\frac{1}{2}\mu(V_{u})-\frac{1}{4}\sigma^{2}(V_{u}))du+\frac{1}{2}\int_{0}^{s}\sigma(V_{u})dZ_{u}}\right]
\nonumber
\\
&\geq
V_{0}^{1/2}
e^{\mathbb{E}[\int_{0}^{s}(\frac{1}{2}\mu(V_{u})-\frac{1}{4}\sigma^{2}(V_{u}))du
+\int_{0}^{s}\frac{1}{2}\sigma(V_{u})dZ_{u}]}
\nonumber
\\
&=
V_{0}^{1/2}
e^{\mathbb{E}[\int_{0}^{s}(\frac{1}{2}\mu(V_{u})-\frac{1}{4}\sigma^{2}(V_{u}))du]}
\geq
V_{0}^{1/2}e^{-\frac{1}{2}sM_{\mu}-\frac{1}{4}sM_{\sigma}^{2}}.\label{V:1:2}
\end{align}
Similarly,
\begin{align}
\mathbb{E}\left[V_{s}^{3/2}\right]
=V_{0}^{3/2}\mathbb{E}\left[e^{\int_{0}^{s}(\frac{3}{2}\mu(V_{u})-\frac{3}{4}\sigma^{2}(V_{u}))du+\frac{3}{2}\int_{0}^{s}\sigma(V_{u})dZ_{u}}\right]
\geq
V_{0}^{3/2}e^{-\frac{3}{2}sM_{\mu}-\frac{3}{4}sM_{\sigma}^{2}}.\label{V:3:2}
\end{align}
On the other hand,
\begin{align}
\mathbb{E}[V_{s}]
&=V_{0}\mathbb{E}\left[e^{\int_{0}^{s}(\mu(V_{u})-\frac{1}{2}\sigma^{2}(V_{u}))du+\int_{0}^{s}\sigma(V_{u})dZ_{u}}\right]
\nonumber
\\
&\leq
V_{0}e^{sM_{\mu}}
\mathbb{E}\left[e^{\int_{0}^{s}(-\frac{1}{2}\sigma^{2}(V_{u}))du+\int_{0}^{s}\sigma(V_{u})dZ_{u}}\right]
=V_{0}e^{sM_{\mu}}.\label{V:s:upper:bound}
\end{align}
Moreover, 
\begin{align}
\mathbb{E}\left[V_{s}^{2}\right]
=V_{0}^{2}\mathbb{E}\left[e^{\int_{0}^{s}(2\mu(V_{u})-\sigma^{2}(V_{u}))du
+2\int_{0}^{s}\sigma(V_{u})dZ_{u}}\right]
\leq
V_{0}^{2}e^{2sM_{\mu}}
\mathbb{E}\left[e^{2\int_{0}^{s}\sigma(V_{u})dZ_{u}}\right],
\end{align}
and by Cauchy-Schwarz inequality, we can further compute that
\begin{align}
\mathbb{E}\left[e^{2\int_{0}^{s}\sigma(V_{u})dZ_{u}}\right]
&=\mathbb{E}\left[e^{\int_{0}^{s}2\sigma(V_{u})dZ_{u}-\int_{0}^{s}4\sigma^{2}(V_{u})du}e^{\int_{0}^{s}4\sigma^{2}(V_{u})du}\right]
\nonumber
\\
&\leq
\left(\mathbb{E}\left[e^{\int_{0}^{s}4\sigma(V_{u})dZ_{u}-\frac{1}{2}\int_{0}^{s}(4\sigma)^{2}(V_{u})du}\right]\right)^{1/2}
\left(\mathbb{E}\left[e^{8\int_{0}^{s}\sigma^{2}(V_{u})du}\right]\right)^{1/2}
\nonumber
\\
&=\left(\mathbb{E}\left[e^{8\int_{0}^{s}\sigma^{2}(V_{u})du}\right]\right)^{1/2}
\leq e^{4sM_{\sigma}^{2}}.
\end{align}
Hence, we have
\begin{equation}\label{V:s:upper:bound:2}
\mathbb{E}[V_{s}^{2}]
\leq
V_{0}^{2}e^{2sM_{\mu}}
e^{4sM_{\sigma}^{2}}.
\end{equation}

Hence, by applying \eqref{V:1:2}, \eqref{V:3:2}, \eqref{V:s:upper:bound} and \eqref{V:s:upper:bound:2}, 
we conclude that in the upper bound in \eqref{to:be:continued}, we have
\begin{align}
&\int_{0}^{t}\mathbb{E}\left[V_{s}^{2}-4\sqrt{V_{0}}V_{s}^{3/2}+6V_{0}V_{s}-4V_{0}^{3/2}\sqrt{V_{s}}+V_{0}^{2}\right]ds
\nonumber
\\
&\leq
V_{0}^{2}\int_{0}^{t}\left(e^{2sM_{\mu}}
e^{4sM_{\sigma}^{2}}
-4e^{-\frac{3}{2}sM_{\mu}-\frac{3}{4}sM_{\sigma}^{2}}
+6e^{sM_{\mu}}
-4e^{-\frac{1}{2}sM_{\mu}-\frac{1}{4}sM_{\sigma}^{2}}
+1\right)ds
\nonumber
\\
&\leq
C_{3}t^{2},
\end{align}
for some universal constant $C_{3}>0$.

\textbf{Step 5.}
Putting everything together, i.e. by combining the estimates in \textbf{Step 2}, \textbf{Step 3} and \textbf{Step 4}, 
we have for any $0\leq t\leq T$,
\begin{align}
\mathbb{E}\left|S_{t}-\hat{S}_{t}\right|^{2}
&\leq
2C_{1}(r-q)^{2}t^{2}
+8V_{0}L_{\eta}^{2}\int_{0}^{t}\mathbb{E}\left[\left(S_{s}-\hat{S}_{s}\right)^{2}\right]ds
\nonumber
\\
&\qquad\qquad
+4V_{0}L_{\eta}^{2}\eta^{2}(S_{0})S_{0}^{2}V_{0}t^{2}
+4L_{\eta}^{2}\sqrt{C_{2}}\sqrt{C_{3}}t^{3/2}.
\end{align}
By applying Gronwall's inequality, we conclude that
\begin{equation}
\mathbb{E}\left|S_{T}-\hat{S}_{T}\right|^{2}
\leq O(T^{3/2}),
\end{equation}
as $T\rightarrow 0$.

Since $x\mapsto x^{+}$ is $1$-Lipschitz, 
\begin{align}
\left|\mathbb{E}\left[\left(S_{T}-S_{0}\right)^{+}\right]-
\mathbb{E}\left[\left(\hat{S}_{T}-S_{0}\right)^{+}\right]\right|
\leq
\mathbb{E}\left|S_{T}-\hat{S}_{T}\right|
\leq O(T^{3/4}),
\end{align}
as $T\rightarrow 0$.

Finally, we can compute that
\begin{equation}
\mathbb{E}\left[\left(\hat{S}_{T}-S_{0}\right)^{+}\right]
=\mathbb{E}\left[\eta(S_{0})S_{0}\sqrt{V_{0}}\left(\rho Z_{T}+\sqrt{1-\rho^{2}}B_{T}\right)^{+}\right]
=\sqrt{T}\frac{\eta(S_{0})S_{0}\sqrt{V_{0}}}{\sqrt{2\pi}}.
\end{equation}
Therefore,
\begin{equation}
\lim_{T\rightarrow 0}\frac{1}{\sqrt{T}}C_{E}(S_{0},T)
=\frac{\eta(S_{0})S_{0}\sqrt{V_{0}}}{\sqrt{2\pi}}.
\end{equation}
This completes the proof.
\end{proof}

\subsection{VIX options}

\begin{proof}[Proof of Proposition~\ref{prop:VIXsmalltau}]
First of all, we have
\begin{align}
\left| \mathrm{VIX}_T^2 - V_T \eta^2(S_T) \right|  & = \left|\frac{1}{\tau} \int_T^{T+\tau} \mathbb{E}[V_s \eta^2(S_s)|\mathcal{F}_{T}] ds 
- V_T \eta^2(S_T) \right|\nonumber \\
& \leq
\frac{1}{\tau} \int_T^{T+\tau} \left| \mathbb{E}[V_s \eta^2(S_s) |\mathcal{F}_{T}]  - V_T \eta^2(S_T) \right| ds\,. 
\end{align}

The integrand is bounded as 
\begin{equation}\label{18}
\left| \mathbb{E}[V_s \eta^2(S_s) |\mathcal{F}_{T}]  - V_T \eta^2(S_T) \right| 
\leq 
\left| \mathbb{E}[( V_s - V_T) \eta^2(S_s) |\mathcal{F}_{T}] \right| +
\left| \mathbb{E}[V_T (\eta^2(S_s) - \eta^2(S_T)) |\mathcal{F}_{T}] \right| \,.
\end{equation}
We bound each term on the right-hand side separately, and we will show that their sum is of $O(\tau^{1/2})$.

\textbf{Step 1.} \textit{First term in (\ref{18}).}
The first term in \eqref{18} can be bounded as 
\begin{align}
\left|\mathbb{E}\left[V_{s}\eta^{2}(S_{s})|\mathcal{F}_{T}\right]
-\mathbb{E}\left[V_{T}\eta^{2}(S_{s})|\mathcal{F}_{T}\right]\right|
&\leq
M_{\eta}^{2}\left[\mathbb{E}|V_{s}-V_{T}||\mathcal{F}_{T}\right]
\nonumber
\\
&\leq
M_{\eta}^{2}\left(\left[\mathbb{E}|V_{s}-V_{T}|^{2}|\mathcal{F}_{T}\right]\right)^{1/2}.
\end{align}
We can further compute that
\begin{equation}
V_{s}=V_{T}e^{\int_{T}^{s}(\mu(V_{u})-\frac{1}{2}\sigma^{2}(V_{u}))du+\int_{T}^{s}\sigma(V_{u})dZ_{u}},
\end{equation}
and by Jensen's inequality and Assumption~\ref{assump:bounded},
\begin{align}
\mathbb{E}[V_{s}|\mathcal{F}_{T}]
&=V_{T}\mathbb{E}\left[e^{\int_{T}^{s}(\mu(V_{u})-\frac{1}{2}\sigma^{2}(V_{u}))du
+\int_{T}^{s}\sigma(V_{u})dZ_{u}}\Big|\mathcal{F}_{T}\right]
\nonumber
\\
&\geq
V_{T}
e^{\mathbb{E}[\int_{T}^{s}(\mu(V_{u})-\frac{1}{2}\sigma^{2}(V_{u}))du
+\int_{T}^{s}\sigma(V_{u})dZ_{u}|\mathcal{F}_{T}]}
\nonumber
\\
&=
V_{T}
e^{\mathbb{E}[\int_{T}^{s}(\mu(V_{u})-\frac{1}{2}\sigma^{2}(V_{u}))du|\mathcal{F}_{T}]}
\geq
V_{T}e^{-(s-T)M_{\mu}-\frac{1}{2}(s-T)M_{\sigma}^{2}}.
\end{align}
On the other hand, by Assumption~\ref{assump:bounded}, 
\begin{align}
\mathbb{E}[V_{s}^{2}|\mathcal{F}_{T}]
&=V_{T}^{2}\mathbb{E}\left[e^{\int_{T}^{s}(2\mu(V_{u})-\sigma^{2}(V_{u}))du
+2\int_{T}^{s}\sigma(V_{u})dZ_{u}}\Big|\mathcal{F}_{T}\right]
\nonumber
\\
&\leq
V_{T}^{2}e^{2(s-T)M_{\mu}}
\mathbb{E}\left[e^{2\int_{T}^{s}\sigma(V_{u})dZ_{u}}\Big|\mathcal{F}_{T}\right],
\end{align}
and by Cauchy-Schwarz inequality, we can further compute that
\begin{align}
&\mathbb{E}\left[e^{2\int_{T}^{s}\sigma(V_{u})dZ_{u}}\Big|\mathcal{F}_{T}\right]
\nonumber
\\
&=\mathbb{E}\left[e^{\int_{T}^{s}2\sigma(V_{u})dZ_{u}-\int_{T}^{s}4\sigma^{2}(V_{u})du}e^{\int_{T}^{s}4\sigma^{2}(V_{u})du}\Big|\mathcal{F}_{T}\right]
\nonumber
\\
&\leq
\left(\mathbb{E}\left[e^{\int_{T}^{s}4\sigma(V_{u})dZ_{u}-\frac{1}{2}\int_{T}^{s}(4\sigma)^{2}(V_{u})du}\Big|\mathcal{F}_{T}\right]\right)^{1/2}
\left(\mathbb{E}\left[e^{8\int_{T}^{s}\sigma^{2}(V_{u})du}\Big|\mathcal{F}_{T}\right]\right)^{1/2}
\nonumber
\\
&=\left(\mathbb{E}\left[e^{8\int_{T}^{s}\sigma^{2}(V_{u})du}\Big|\mathcal{F}_{T}\right]\right)^{1/2}
\leq e^{4(s-T)M_{\sigma}^{2}}.
\end{align}
Hence, 
\begin{equation}
\mathbb{E}[V_{s}^{2}|\mathcal{F}_{T}]
\leq
V_{T}^{2}e^{2(s-T)M_{\mu}}
e^{4(s-T)M_{\sigma}^{2}}.
\end{equation}
Therefore, for any $T\leq s\leq T+\tau$,
\begin{align}
\left[\mathbb{E}|V_{s}-V_{T}|^{2}|\mathcal{F}_{T}\right]
&=\mathbb{E}\left[V_{s}^{2}|\mathcal{F}_{T}\right]
+V_{T}^{2}
-2V_{T}\mathbb{E}[V_{s}|\mathcal{F}_{T}]
\nonumber
\\
&\leq
V_{T}^{2}\left(e^{2(s-T)M_{\mu}}
e^{4(s-T)M_{\sigma}^{2}}+1-2e^{-(s-T)M_{\mu}-\frac{1}{2}(s-T)M_{\sigma}^{2}}\right)
\nonumber
\\
&\leq
V_{T}^{2}\left(e^{2\tau M_{\mu}}
e^{4\tau M_{\sigma}^{2}}+1-2e^{-\tau M_{\mu}-\frac{1}{2}\tau M_{\sigma}^{2}}\right).
\end{align}
Hence, we conclude that, for any $T\leq s\leq T+\tau$,
\begin{align}
\left|\mathbb{E}\left[V_{s}\eta^{2}(S_{s})|\mathcal{F}_{T}\right]
-\mathbb{E}\left[V_{T}\eta^{2}(S_{s})|\mathcal{F}_{T}\right]\right|
\leq
M_{\eta}^{2}V_{T}\left(e^{2\tau M_{\mu}}
e^{4\tau M_{\sigma}^{2}}+1-2e^{-\tau M_{\mu}-\frac{1}{2}\tau M_{\sigma}^{2}}\right)^{1/2}.
\end{align}

\textbf{Step 2.}
\textit{The second term in (\ref{18})}.
The second term in (\ref{18}) is bounded further as
\begin{align}
\left|\mathbb{E}\left[V_{T}\eta^{2}(S_{s})|\mathcal{F}_{T}\right]
-V_{T}\eta^{2}(S_{T})\right|
=V_{T}
\left|\mathbb{E}\left[\eta^{2}(S_{s})|\mathcal{F}_{T}\right]
-\eta^{2}(S_{T})\right|.
\end{align}
By It\^{o}'s formula, 
\begin{equation}
d\eta^{2}(S_{t})
=(\eta^{2})'(S_{t})(r-q)S_{t}dt
+\frac{1}{2}(\eta^{2})''(S_{t})\eta^{2}(S_{t})S_{t}^{2}V_{t}dt
+(\eta^{2})'(S_{t})\eta(S_{t})S_{t}\sqrt{V_{t}}dW_{t}.
\end{equation}
Therefore, 
\begin{align}
\mathbb{E}\left[\eta^{2}(S_{s})|\mathcal{F}_{T}\right]
-\eta^{2}(S_{T})
=\int_{s}^{T}\mathbb{E}\left[(\eta^{2})'(S_{t})(r-q)S_{t}
+\frac{1}{2}(\eta^{2})''(S_{t})\eta^{2}(S_{t})S_{t}^{2}V_{t}\bigg|\mathcal{F}_{T}\right]dt.
\end{align}

By our assumption, $\eta$ is $L$-Lipschitz, 
so that $(\eta^{2})' = 2\eta\eta'$, which implies that $\eta^{2}$ is $2LM_{\eta}$-Lipschitz.
Also we have the bound (\ref{eta2p.assumption}) on the second derivative $(\eta^2)''(s)$.
Therefore, for any $T\leq s\leq T+\tau$, we have
\begin{align}
&\left|\mathbb{E}\left[\eta^{2}(S_{s})|\mathcal{F}_{T}\right]
-\eta^{2}(S_{T})\right|
\nonumber
\\
&\leq
\int_{T}^{s}\mathbb{E}\left[2LM_{\eta}|r-q|S_{t}
+\frac{1}{2}M_{\eta,2}M_{\eta}^{2}V_{t}\bigg|\mathcal{F}_{T}\right]dt
\nonumber
\\
&
=\int_{T}^{s}\left[2LM_{\eta}|r-q|S_{T}e^{(r-q)(t-T)}
+\frac{1}{2}M_{\eta,2}M_{\eta}^{2}\mathbb{E}[V_{t}|\mathcal{F}_{T}]\right]dt
\nonumber
\\
&\leq\int_{T}^{s}\left[2LM_{\eta}|r-q|S_{T}e^{(r-q)(t-T)}
+\frac{1}{2}M_{\eta,2}M_{\eta}^{2}V_{T}e^{\tau(M_{\mu}+M_{\sigma}^{2})}\right]dt
\nonumber
\\
&\leq
\tau
\left[2LM_{\eta}|r-q|S_{T}e^{|r-q|\tau}
+\frac{1}{2}M_{\eta,2}M_{\eta}^{2}V_{T}e^{\tau(M_{\mu}+M_{\sigma}^{2})}\right].
\end{align}

Hence, we conclude that for any $T\leq s\leq T+\tau$,
\begin{align}
&\left|\mathbb{E}\left[V_{s}\eta^{2}(S_{s})|\mathcal{F}_{T}\right]
-V_{T}\eta^{2}(S_{T})\right|
\nonumber
\\
&\leq
M_{\eta}^{2}V_{T}\left(e^{2\tau M_{\mu}}
e^{4\tau M_{\sigma}^{2}}+1-2e^{-\tau M_{\mu}-\frac{1}{2}\tau M_{\sigma}^{2}}\right)^{1/2}
\nonumber
\\
&\qquad\qquad
+\tau\left[2LM_{\eta}|r-q|S_{T}e^{|r-q|\tau}
+\frac{1}{2}M_{\eta,2}M_{\eta}^{2}V_{T}e^{\tau(M_{\mu}+M_{\sigma}^{2})}\right].
\end{align}
By recalling the formula in \eqref{VIX:formula}, we conclude that
\begin{equation}
\left|\mathrm{VIX}_T^2-V_{T}\eta^{2}(S_{T})\right|
\leq C_{1}(\tau)S_{T}+C_{2}(\tau)V_{T},
\end{equation}
where $C_{1}(\tau),C_{2}(\tau)$ are defined in \eqref{C:1:tau}-\eqref{C:2:tau}. 
Finally, we can compute that
\begin{equation}
\mathbb{E}\left|\mathrm{VIX}_T^2-V_{T}\eta^{2}(S_{T})\right|
\leq C_{1}(\tau)\mathbb{E}[S_{T}]+C_{2}(\tau)\mathbb{E}[V_{T}]
\leq
C_{1}(\tau)S_{0}e^{(r-q)T}
+C_{2}(\tau)V_{0}e^{T(M_{\mu}+M_{\sigma}^{2})}.
\end{equation}
This completes the proof.
\end{proof}

\begin{proof}[Proof of Corollary~\ref{cor:VIXsmalltau}]
One can compute that
\begin{align}
\left|\mathrm{VIX}_{T}-\sqrt{V_{T}}\eta(S_{T})
\right|
=\frac{\left|\mathrm{VIX}_{T}^{2}-V_{T}\eta^{2}(S_{T})
\right|}{\mathrm{VIX}_{T}+\sqrt{V_{T}}\eta(S_{T})}
\leq
\frac{\left|\mathrm{VIX}_{T}^{2}-V_{T}\eta^{2}(S_{T})
\right|}{\sqrt{V_{T}}m_{\eta}}.
\end{align}
Therefore, it follows from Proposition~\ref{prop:VIXsmalltau} that
\begin{align}
\left|\mathrm{VIX}_{T}-\sqrt{V_{T}}\eta(S_{T})
\right|
\leq\frac{C_{1}(\tau)}{m_{\eta}}\frac{S_{T}}{\sqrt{V_{T}}}+\frac{C_{2}(\tau)}{m_{\eta}}\sqrt{V_{T}}.
\end{align}
Therefore, we have
\begin{align}
\mathbb{E}\left|\mathrm{VIX}_{T}-\sqrt{V_{T}}\eta(S_{T})
\right|
\leq\frac{C_{1}(\tau)}{m_{\eta}}\left(\mathbb{E}[S_{T}^{2}]\right)^{1/2}
\left(\mathbb{E}[V_{T}^{-1}]\right)^{1/2}+\frac{C_{2}(\tau)}{m_{\eta}}\left(\mathbb{E}[V_{T}]\right)^{1/2}.
\end{align}
Note that under our assumption, $\mathbb{E}[S_{T}^{2}]=O(1)$ as $T\rightarrow 0$.
Moreover, we have shown that $\mathbb{E}[V_{T}]\leq V_{0}e^{T(M_{\mu}+M_{\sigma}^{2})}$ (see the proof of Proposition~\ref{prop:V:moments})
and similarly, 
\begin{align*}
\mathbb{E}[V_{T}^{-1}]
&=V_{0}^{-1}
\mathbb{E}\left[e^{\int_{0}^{T}(-\mu(V_{u})+\frac{1}{2}\sigma^{2}(V_{u}))du-\int_{0}^{T}\sigma(V_{u})dZ_{u}}\right]
\\
&\leq
V_{0}^{-1}e^{M_{\mu}T+M_{\sigma}^{2}T}
\mathbb{E}\left[e^{\int_{0}^{T}(-\frac{1}{2}\sigma^{2}(V_{u}))du-\int_{0}^{T}\sigma(V_{u})dZ_{u}}\right]
=V_{0}^{-1}e^{M_{\mu}T+M_{\sigma}^{2}T}.
\end{align*}
This completes the proof.
\end{proof}


\begin{proof}[Proof of Proposition~\ref{prop:StochVolVIX}]
(i) In this case,
\begin{equation}
dV_{t}=\mu V_{t}dt+\sigma(V_{t})V_{t}dZ_{t}\,,
\end{equation}
so that we can easily compute that for any $s\geq T$,
\begin{equation}
\mathbb{E}[V_{s}|\mathcal{F}_{T}]
=e^{\mu(s-T)}V_{T}\,,
\end{equation}
which gives the result quoted.

(ii) For this case we have 
\begin{equation}
dV_{t}=a(b-V_{t})dt+\sigma(V_{t})V_{t}dZ_{t}\,.
\end{equation}
In this case, one can compute that for any $s\geq T$,
\begin{equation}
\mathbb{E}[V_{s}|\mathcal{F}_{T}]
=V_{T}e^{-a(s-T)}+b\left(1-e^{-a(s-T)}\right)\,,
\end{equation}
which yields the stated result.
\end{proof}


\begin{proof}[Proof of Theorem~\ref{Thm:VIX}]

(i) OTM VIX call option ($K^2 > V_0\eta^2(S_0)$).

First, by \eqref{VIX:formula}
and Jensen's inequality, 
we can compute that for any $p\geq 2$:
\begin{align}
\mathbb{E}\left[\left(\mathrm{VIX}_T\right)^{p}\right]
&=\mathbb{E}\left[\left(\mathrm{VIX}_{T}^{2}\right)^{\frac{p}{2}}\right]
\nonumber
\\
&=\mathbb{E}\left[\left(\frac{1}{\tau} \int_T^{T+\tau} \mathbb{E}\left[\eta^{2}(S_{t})V_t|\mathcal{F}_T\right] dt \right)^{\frac{p}{2}}\right]
\nonumber
\\
&\leq\mathbb{E}\left[\frac{1}{\tau} \int_T^{T+\tau} \left(\mathbb{E}\left[\eta^{2}(S_{t})V_t|\mathcal{F}_T\right]\right)^{\frac{p}{2}} dt \right]
\nonumber
\\
&\leq\mathbb{E}\left[\frac{1}{\tau} \int_T^{T+\tau} \mathbb{E}\left[\left(\eta^{2}(S_{t})V_t\right)^{\frac{p}{2}}\big|\mathcal{F}_T\right] dt \right]
\nonumber
\\
&=\frac{1}{\tau} \int_T^{T+\tau} \mathbb{E}\left[|\eta|^{p}(S_{t})(V_t)^{\frac{p}{2}}\right] dt,
\end{align}
where under our assumption $\sup_{s\geq 0}|\eta(s)|\leq M_{\eta}$. Moreover, we can compute that
\begin{align}
\mathbb{E}\left[V_{t}^{p/2}\right]
&=V_{0}^{p/2}\mathbb{E}\left[e^{\int_{0}^{t}(\frac{p}{2}\mu(V_{u})-\frac{p}{4}\sigma^{2}(V_{u}))du+\frac{p}{2}\int_{0}^{t}\sigma(V_{u})dZ_{u}}\right]
\nonumber
\\
&\leq
V_{0}^{p/2}e^{\frac{p}{2}M_{\mu}t+\frac{p^{2}}{8}M_{\sigma}^{2}t}
\mathbb{E}\left[e^{-\int_{0}^{t}\frac{p^{2}}{8}\sigma^{2}(V_{u})du+\frac{p}{2}\int_{0}^{t}\sigma(V_{u})dZ_{u}}\right],
\end{align}
which implies that
\begin{equation}\label{VIX:moment:condition}
\mathbb{E}\left[\left(\mathrm{VIX}_T\right)^{p}\right]
\leq M_{\eta}^{p}V_{0}^{p/2}e^{\frac{p}{2}M_{\mu}T+\frac{p^{2}}{8}M_{\sigma}^{2}T}.
\end{equation}
Therefore, under the moment condition \eqref{VIX:moment:condition}, 
by a standard argument for short-maturity options (see e.g. \cite{Pham2007}), 
one can show that
\begin{equation}
\lim_{T\rightarrow 0}T\log\mathbb{E}\left[\left(\mathrm{VIX}_{T}-K\right)^{+}\right]
=\lim_{T\rightarrow 0}T\log\mathbb{Q}(\mathrm{VIX}_{T}\geq K).
\end{equation}
For any $\delta$, by Corollary~\ref{cor:VIXsmalltau}
\begin{align}\label{equiv:upper:bound}
\lim_{T\rightarrow 0}T\log\mathbb{Q}\left(\left|\mathrm{VIX}_{T}-\sqrt{V_{T}}\eta(S_{T})\right|\geq\delta\right)
\leq
\lim_{T\rightarrow 0}T\log\mathbb{Q}\left(\frac{C_{1}(\tau)}{m_{\eta}}\frac{S_{T}}{\sqrt{V_{T}}}+\frac{C_{2}(\tau)}{m_{\eta}}\sqrt{V_{T}}\geq\delta\right),
\end{align}
where $C_{1}(\tau),C_{2}(\tau)$ are given in \eqref{C:1:tau}-\eqref{C:2:tau}.

Since under Assumptions~\ref{assump:bounded} and \ref{assump:LDP}, $\mathbb{Q}((\log S_{T},\log V_{T})\in\cdot)$ satisfies a large deviation principle,
by the contraction principle (see e.g. Theorem 4.2.1. in \cite{Dembo1998}, restated in Theorem~\ref{Contraction:Thm}),
$\mathbb{Q}\left(\frac{C_{1}(\tau)}{m_{\eta}}\frac{S_{T}}{\sqrt{V_{T}}}+\frac{C_{2}(\tau)}{m_{\eta}}\sqrt{V_{T}}\in\cdot\right)$ also satisfies a large deviation principle
for any given $\tau>0$.
Since $\tau\rightarrow 0$ as $T\rightarrow 0$,
and $C_{1}(\tau),C_{2}(\tau)\rightarrow 0$ as $\tau\rightarrow 0$,
by \eqref{equiv:upper:bound}, we obtain the following superexponential estimate:
\begin{equation}\label{super:exp:estimate}
\lim_{T\rightarrow 0}T\log\mathbb{Q}\left(\left|\mathrm{VIX}_{T}-\sqrt{V_{T}}\eta(S_{T})\right|\geq\delta\right)
=-\infty.
\end{equation}
The above estimate \eqref{super:exp:estimate} is also known as the exponential equivalence in large deviations theory (see e.g. \cite{Dembo1998}),
which implies that
\begin{equation}
\lim_{T\rightarrow 0}T\log\mathbb{Q}(\mathrm{VIX}_{T}\geq K)
=\lim_{T\rightarrow 0}T\log\mathbb{Q}\left(\sqrt{V_{T}}\eta(S_{T})\geq K\right)
=\lim_{T\rightarrow 0}T\log\mathbb{Q}\left(V_{T}\eta^{2}(S_{T})\geq K^{2}\right).
\end{equation}

Under Assumptions~\ref{assump:bounded} and \ref{assump:LDP}, by the sample-path large deviations for 
small time diffusions (see for example \cite{Varadhan} and \cite{Robertson2010})
and an application of the contraction principle (see for example Theorem 4.2.1. in \cite{Dembo1998}, restated in Theorem~\ref{Contraction:Thm}), similar to the proof of Theorem~\ref{Thm:E}, we have
\begin{align}
&\lim_{T\rightarrow 0}T\log\mathbb{Q}\left(V_{T}\eta^{2}(S_{T})\geq K^{2}\right)
\nonumber
\\
&=-\inf_{\substack{g(0)=\log S_{0},h(0)=\log V_{0}\\
e^{h(1)}\eta^{2}(e^{g(1)})=K^{2}}}
\Bigg\{\frac{1}{2(1-\rho^{2})}\int_{0}^{1}\left(\frac{g'(t)}{\eta(e^{g(t)})\sqrt{e^{h(t)}}}-\frac{\rho h'(t)}{\sigma(e^{h(t)})}\right)^{2}dt
\nonumber
\\
&\qquad\qquad\qquad\qquad\qquad\qquad\qquad\qquad
+\frac{1}{2}\int_{0}^{1}\left(\frac{h'(t)}{\sigma(e^{h(t)})}\right)^{2}dt\Bigg\}.
\end{align}

Given $h$, we can determine the optimal $g$ as follows.
By Cauchy-Schwarz inequality, 
\begin{align}
&\int_{0}^{1}\left(\frac{g'(t)}{\eta(e^{g(t)})\sqrt{e^{h(t)}}}-\frac{\rho h'(t)}{\sigma(e^{h(t)})}\right)^{2}dt
\cdot\int_{0}^{1}\left(\sqrt{e^{h(t)}}\right)^{2}dt
\nonumber
\\
&\geq
\left(\int_{0}^{1}\left(\frac{g'(t)}{\eta(e^{g(t)})}-\frac{\rho h'(t)\sqrt{e^{h(t)}}}{\sigma(e^{h(t)})}\right)dt\right)^{2},
\end{align}
where
\begin{align}
\label{Thm:V:call:1}
\int_{0}^{1}\frac{g'(t)}{\eta(e^{g(t)})}dt
=\int_{0}^{1}\frac{d(e^{g(t)})}{e^{g(t)}\eta(e^{g(t)})}
=\int_{e^{g(0)}}^{e^{g(1)}}\frac{dx}{x\eta(x)},
\end{align}
where $e^{g(0)}=S_{0}$ and $e^{g(1)}=(\eta^{2})^{-1}(K^{2}e^{-h(1)})$, 
and
\begin{equation}
\label{Thm:V:call:2}
\int_{0}^{1}\frac{\rho h'(t)\sqrt{e^{h(t)}}}{\sigma(e^{h(t)})}dt
=\int_{0}^{1}\frac{\rho h'(t)e^{h(t)}\sqrt{e^{h(t)}}}{e^{h(t)}\sigma(e^{h(t)})}dt
=\int_{e^{h(0)}}^{e^{h(1)}}\frac{\rho dx}{\sqrt{x}\sigma(x)},
\end{equation}
where $h(0)=\log V_{0}$.
Therefore, we have
\begin{align}
&\frac{1}{2}\int_{0}^{1}\left(\frac{g'(t)}{\eta(e^{g(t)})\sqrt{e^{h(t)}}}-\frac{\rho h'(t)}{\sigma(e^{h(t)})}\right)^{2}dt
\nonumber
\\
&\geq
\frac{1}{2}\left(\int_{S_{0}}^{(\eta^{2})^{-1}(K^{2}e^{-h(1)})}\frac{dx}{x\eta(x)}-\int_{V_{0}}^{e^{h(1)}}\frac{\rho dx}{\sqrt{x}\sigma(x)}\right)^{2}
\left(\int_{0}^{1}e^{h(t)}dt\right)^{-1},
\end{align}
and by Cauchy-Schwarz inequality, the equality is achieved
when 
\begin{equation}
\frac{g'(t)}{\eta(e^{g(t)})}-\frac{\rho\sqrt{e^{h(t)}}h'(t)}{\sigma(e^{h(t)})}=C_{1}e^{h(t)},
\end{equation}
for some constant $C_{1}$ so that $g(t)$ can be solved
via the equation:
\begin{equation}
\int_{S_{0}}^{e^{g(t)}}\frac{dx}{x\eta(x)}-\int_{V_{0}}^{e^{h(t)}}\frac{\rho dx}{\sqrt{x}\sigma(x)}=C_{1}\int_{0}^{t}e^{h(s)}ds,
\end{equation}
where
\begin{equation}
C_{1}=\frac{\int_{S_{0}}^{(\eta^{2})^{-1}(K^{2}e^{-h(1)})}\frac{dx}{x\eta(x)}-\int_{V_{0}}^{e^{h(1)}}\frac{\rho dx}{\sqrt{x}\sigma(x)}}{\int_{0}^{1}e^{h(s)}ds}.
\end{equation}
Since with fixed $h$, we can solve for the optimal $g$,
by the discussions above, we conclude that
\begin{align}
&\lim_{T\rightarrow 0}T\log\mathbb{Q}\left(V_{T}\eta^{2}(S_{T})\geq K^{2}\right)
\nonumber
\\
&=-\inf_{h(0)=\log V_{0}}
\Bigg\{\frac{1}{2(1-\rho^{2})}\left(\int_{S_{0}}^{(\eta^{2})^{-1}(K^{2}e^{-h(1)})}\frac{dx}{x\eta(x)}-\int_{V_{0}}^{e^{h(1)}}\frac{\rho dx}{\sqrt{x}\sigma(x)}\right)^{2}
\left(\int_{0}^{1}e^{h(t)}dt\right)^{-1}
\nonumber
\\
&\qquad\qquad\qquad\qquad\qquad
+\frac{1}{2}\int_{0}^{1}\left(\frac{h'(t)}{\sigma(e^{h(t)})}\right)^{2}dt\Bigg\}
\nonumber
\\
&=-\inf_{y,z}
\Bigg\{\frac{1}{2(1-\rho^{2})z}\left(\int_{S_{0}}^{(\eta^{2})^{-1}(K^{2}e^{-y})}\frac{dx}{x\eta(x)}-\int_{V_{0}}^{e^{y}}\frac{\rho dx}{\sqrt{x}\sigma(x)}\right)^{2}
\nonumber
\\
&\qquad\qquad\qquad\qquad\qquad\qquad\qquad
+\inf_{\substack{h(0)=\log V_{0},h(1)=y\\
\int_{0}^{1}e^{h(t)}dt=z}}\frac{1}{2}\int_{0}^{1}\left(\frac{h'(t)}{\sigma(e^{h(t)})}\right)^{2}dt\Bigg\}
\nonumber
\\
&=-\inf_{y,z}
\left\{\frac{1}{2(1-\rho^{2})z}\left(\int_{S_{0}}^{(\eta^{2})^{-1}(K^{2}e^{-y})}\frac{dx}{x\eta(x)}-\int_{V_{0}}^{e^{y}}\frac{\rho dx}{\sqrt{x}\sigma(x)}\right)^{2}
+H(y,z)\right\},
\end{align}
where
\begin{equation}
H(y,z):=\inf_{\substack{h(0)=\log V_{0},h(1)=y\\
\int_{0}^{1}e^{h(t)}dt=z}}\frac{1}{2}\int_{0}^{1}\left(\frac{h'(t)}{\sigma(e^{h(t)})}\right)^{2}dt.
\end{equation}

(ii) OTM VIX put option ($K^2 < V_0 \eta^2(S_0)$).
Since $(K-\mathrm{VIX}_{T})^{+}\leq K$ with probability one, 
similar to the OTM VIX call option case, 
we can show that
\begin{align}
\lim_{T\rightarrow 0}T\log\mathbb{E}\left[(K-\mathrm{VIX}_{T})^{+}\right]
&=\lim_{T\rightarrow 0}T\log\mathbb{Q}(K\geq\mathrm{VIX}_{T})
\nonumber
\\
&=\lim_{T\rightarrow 0}\mathbb{Q}\left(K\geq\sqrt{V_{T}}\eta(S_{T})\right)
=T \log \mathbb{Q}\left(K^{2} \geq V_{T}\eta^{2}(S_{T})\right).
\end{align}
Similar to the proof for the rate function for OTM European put option in Theorem~\ref{Thm:E}, we can also get a similar result for the rate function for OTM VIX put option. Similar to~\eqref{put:variation:1} and~\eqref{put:variation:2}, for put option, we can change the variables in~\eqref{Thm:V:call:1} and~\eqref{Thm:V:call:2} and obtain:
\begin{equation}
\int_{0}^{1}\frac{g'(t)}{\eta(e^{g(t)})}dt = \int_{e^{-g(1)}}^{e^{-g(0)}}\frac{dx}{x\eta(x^{-1})},
\end{equation}
where $e^{g(0)}=S_{0}$ and $e^{g(1)}=(\eta^{2})^{-1}(K^{2}e^{-h(1)})$, 
\begin{equation}
\int_{0}^{1}\frac{\rho h'(t)\sqrt{e^{h(t)}}}{\sigma(e^{h(t)})}dt
=\int_{e^{-h(1)}}^{e^{-h(0)}}\frac{\rho dx}{\sqrt{x^3}\sigma(x^{-1})},
\end{equation}
where $h(0)=\log V_{0}$. We can follow the step for~\eqref{Thm:E:put} to compute that
\begin{align}
& \lim_{T\to 0}
T \log \mathbb{Q}\left(K^{2} \geq V_{T}\eta^{2}(S_{T}))\right)\nonumber \\
&=-\inf_{y,z}
\left\{\frac{1}{2(1-\rho^{2})z}\left(\int_{\frac{1}{(\eta^{2})^{-1}(K^{2}e^{-y})}}^{S_0^{-1}}\frac{dx}{x\eta(x^{-1})}-\int_{e^{-y}}^{V_0^{-1}}\frac{\rho dx}{\sqrt{x^3}\sigma(x^{-1})}\right)^{2}
\right.\nonumber \\
& \left.\qquad\qquad\qquad\qquad\qquad\qquad\qquad\qquad\qquad\qquad +\inf_{\substack{h(0)=\log V_{0},h(1)=y\\
\int_{0}^{1}e^{h(t)}dt=z}}\frac{1}{2}\int_{0}^{1}\left(\frac{h'(t)}{\sigma(e^{h(t)})}\right)^{2}dt\right\}
\nonumber
\\
&=-\inf_{y,z}
\left\{\frac{1}{2(1-\rho^{2})z}\left(\int_{\frac{1}{(\eta^{2})^{-1}(K^{2}e^{-y})}}^{S_0^{-1}}\frac{dx}{x\eta(x^{-1})}-\int_{V_0^{-1}}^{e^{-y}}\frac{\rho dx}{\sqrt{x^3}\sigma(x^{-1})}\right)^{2}
+H(y,z)\right\},\label{change:variable:in:1}
\end{align}
where $(\eta^2)^{-1}(\cdot)$ denotes the inverse function of $\eta^{2}$. 
Finally, by changing the variables $x\mapsto x^{-1}$ in \eqref{change:variable:in:1}, we have
\begin{align}
\left(\int_{\frac{1}{(\eta^{2})^{-1}(K^{2}e^{-y})}}^{S_0^{-1}}\frac{dx}{x\eta(x^{-1})}-\int_{V_0^{-1}}^{e^{-y}}\frac{\rho dx}{\sqrt{x^3}\sigma(x^{-1})}\right)^{2}
&=\left(-\int_{(\eta^{2})^{-1}(K^{2}e^{-y})}^{S_0}\frac{dx}{x\eta(x)}+\int_{V_0}^{e^{y}}\frac{\rho dx}{\sqrt{x}\sigma(x)}\right)^{2}
\nonumber
\\
&=\left(\int_{(\eta^{2})^{-1}(K^{2}e^{-y})}^{S_0}\frac{dx}{x\eta(x)}-\int_{V_0}^{e^{y}}\frac{\rho dx}{\sqrt{x}\sigma(x)}\right)^{2}.
\end{align}
This completes the proof.
\end{proof}

\begin{proof}[Proof of Proposition~\ref{prop:VIX}]
Recall that the VIX option prices under the time homogeneous stochastic volatility model are given in (\ref{CVtime-h}).
Consider the case of the OTM VIX call option.
Proceeding as in the proof of Theorem \ref{Thm:E} for OTM European options, we get, by upper and lower bounds, the relation
\begin{equation}
\lim_{T\to 0} T \log C_V(K,T) = \lim_{T\to 0} T \log \mathbb{Q}(\sqrt{\mathcal{F}(V_T)} > K) = 
\lim_{T\to 0} T \log \mathbb{Q}(V_T > \mathcal{F}^{-1}(K^2))\,.
\end{equation}
Thus, the problem has been reduced to the short maturity asymptotics for OTM European options in the local volatility model for $V_t$, 
see for example \cite{BBF,Pham2007},
which is evaluated with the stated result. 

The OTM VIX put option can be handled in a similar way. This completes the proof. 
\end{proof}
\begin{proof}[Proof of Theorem~\ref{thm:VIX:ATM}]

We only provide the proof for the ATM VIX call option with $K=\sqrt{V_{0}}\eta(S_{0})$. 
The case for the ATM VIX put option can be handled
similarly.


\textbf{Step 1.}
First, by using the estimates in Corollary~\ref{cor:VIXsmalltau},
we can easily show that 
\begin{equation}\label{108}
\left|C_{V}(K,T)-\mathbb{E}\left[\left(\sqrt{V_{T}}\eta(S_{T})-K\right)^{+}\right]\right|
=O\left(T^{\frac12 + \delta}\right),
\end{equation}
for some $\delta > 0$, 
as $T\rightarrow 0$. 
Indeed, since $x\mapsto x^{+}$ is $1$-Lipschitz, we get
\begin{equation}
\left|C_{V}(K,T)-\mathbb{E}\left[\left(\sqrt{V_{T}}\eta(S_{T})-K\right)^{+}\right]\right| \leq 
\mathbb{E}\left| \mathrm{VIX}_T - \sqrt{V_{T}}\eta(S_{T})\right|
\leq O(\sqrt{\tau}) \,,
\end{equation}
where the last inequality follows from Corollary~\ref{cor:VIXsmalltau}.
Then \eqref{108} follows from the assumption $\tau = O(T^{1+\epsilon})$.

\textbf{Step 2.}
Next, we define
\begin{align}
&\hat{S}_{T}=S_{0}+\eta(S_{0})S_{0}\sqrt{V_{0}}\left(\sqrt{1-\rho^{2}}B_{T}+\rho Z_{T}\right)\,,
\\
&\hat{V}_{T}=V_{0}+\sigma(V_{0})V_{0}Z_{T}\,,
\end{align}
where $B_{t},Z_{t}$ are independent standard Brownian motions.

In the proof of Theorem~\ref{thm:European:ATM} for European options, we showed
that
\begin{equation}
\mathbb{E}\left[\left|S_{T}-\hat{S}_{T}\right|^{2}\right]
=O(T^{3/2}),
\end{equation}
as $T\rightarrow 0$.

Next, let us recall that
\begin{equation}
V_{t}=V_{0}+\int_{0}^{t}\mu(V_{s})V_{s}ds
+\int_{0}^{t}\sigma(V_{s})V_{s}dZ_{s},
\end{equation}
and we can also re-write $\hat{V}_{t}$ as
\begin{equation}
\hat{V}_{t}=V_{0}+\int_{0}^{t}\sigma(V_{0})V_{0}dZ_{s}.
\end{equation}

Under our assumptions, $\sigma$ is $L'$-Lipschitz and $M_{\sigma}$-uniformly bounded. 
Therefore, we can compute that for any $t$:
$$
\vert\sigma(V_t)V_t - \sigma(V_0)V_0\vert = \vert\sigma(V_t)(V_t - V_0) + (\sigma(V_t) - \sigma(V_0))V_0\vert \leq L_{\sigma}\vert V_t - V_0\vert,
$$
where $L_{\sigma} := M_{\sigma}+V_0L'$.
Then we can compute that
\begin{align}
\mathbb{E}\left[|V_{t}-\hat{V}_{t}|^{2}\right]
&\leq
2\mathbb{E}\left[\left(\int_{0}^{t}\mu(V_{s})V_{s}ds\right)^{2}\right]
+2\mathbb{E}\left[\left(\int_{0}^{t}(\sigma(V_{s})V_{s}-\sigma(V_{0})V_{0})dZ_{s}\right)^{2}\right]
\nonumber
\\
&\leq
2t\int_{0}^{t}\mathbb{E}\left[\left(\mu(V_{s})V_{s}\right)^{2}\right]ds
+2\int_{0}^{t}\mathbb{E}\left[(\sigma(V_{s})V_{s}-\sigma(V_{0})V_{0})^{2}\right]ds
\nonumber
\\
&\leq
2M_{\mu}^{2}t\int_{0}^{t}\mathbb{E}\left[V_{s}^{2}\right]ds
+2L_{\sigma}^{2}\int_{0}^{t}\mathbb{E}\left[(V_{s}-V_{0})^{2}\right]ds
\end{align}
where we used Cauchy-Schwarz inequality as well as It\^{o}'s isometry.
Moreover, we can compute that
\begin{align}
2L_{\sigma}^{2}\int_{0}^{t}\mathbb{E}\left[(V_{s}-V_{0})^{2}\right]ds
&\leq
4L_{\sigma}^{2}\int_{0}^{t}\mathbb{E}\left[(V_{s}-\hat{V}_{s})^{2}\right]ds
+4L_{\sigma}^{2}\int_{0}^{t}\mathbb{E}\left[(\hat{V}_{s}-V_{0})^{2}\right]ds
\nonumber
\\
&=4L_{\sigma}^{2}\int_{0}^{t}\mathbb{E}\left[(V_{s}-\hat{V}_{s})^{2}\right]ds
+4L_{\sigma}^{2}\int_{0}^{t}\sigma^{2}(V_{0})V_{0}^{2}sds
\nonumber
\\
&=4L_{\sigma}^{2}\int_{0}^{t}\mathbb{E}\left[(V_{s}-\hat{V}_{s})^{2}\right]ds
+2L_{\sigma}^{2}\sigma^{2}(V_{0})V_{0}^{2}t^{2}.
\end{align}
Hence, we conclude that
\begin{align}
\mathbb{E}\left[|V_{t}-\hat{V}_{t}|^{2}\right]
\leq
2M_{\mu}^{2}t^{2}C_{4}
+4L_{\sigma}^{2}\int_{0}^{t}\mathbb{E}\left[(V_{s}-\hat{V}_{s})^{2}\right]ds
+2L_{\sigma}^{2}\sigma^{2}(V_{0})V_{0}^{2}t^{2},
\end{align}
where $C_{4}=\max_{0\leq t\leq T}\mathbb{E}[V_{t}^{2}]=O(1)$ as $T\rightarrow 0$ under our assumptions.
By Gronwall's inequality, we conclude that
\begin{align}
\mathbb{E}\left[|V_{T}-\hat{V}_{T}|^{2}\right]
\leq O(T^{2}),
\end{align}
as $T\rightarrow 0$
and hence $\mathbb{E}|V_{T}-\hat{V}_{T}|
\leq O(T)$ as $T\rightarrow 0$.

\textbf{Step 3.}
Next, we can compute that
\begin{align}
&\left|\mathbb{E}\left[\left(\sqrt{\hat{V}_{T}}\eta(\hat{S}_{T})-K\right)^{+}1_{\hat{V}_{T}\geq\frac{V_{0}}{2}}\right]-\mathbb{E}\left[\left(\sqrt{V_{T}}\eta(S_{T})-K\right)^{+}\right]\right|
\nonumber
\\
&\leq
\left|\mathbb{E}\left[\left(\sqrt{\hat{V}_{T}}\eta(\hat{S}_{T})-K\right)^{+}1_{\hat{V}_{T}\geq\frac{V_{0}}{2}}\right]-\mathbb{E}\left[\left(\sqrt{V_{T}}\eta(\hat{S}_{T})-K\right)^{+}\right]\right|
\nonumber
\\
&\qquad\qquad
+\left|\mathbb{E}\left[\left(\sqrt{V_{T}}\eta(\hat{S}_{T})-K\right)^{+}\right]-\mathbb{E}\left[\left(\sqrt{V_{T}}\eta(S_{T})-K\right)^{+}\right]\right|.\label{two:terms:to:bound:3}
\end{align}

Note that the function $x\mapsto x^{+}$ is $1$-Lipschitz
and $\eta(S)$ is $L$-Lipschitz.

Therefore,
\begin{align}
&\left|\mathbb{E}\left[\left(\sqrt{V_{T}}\eta(\hat{S}_{T})-K\right)^{+}\right]-\mathbb{E}\left[\left(\sqrt{V_{T}}\eta(S_{T})-K\right)^{+}\right]\right|
\nonumber
\\
&\leq
\mathbb{E}\left|\sqrt{V_{T}}\left(\eta(\hat{S}_{T})-\eta(S_{T})\right)\right|
\nonumber
\\
&\leq
\left(\mathbb{E}[V_{T}]\right)^{1/2}
\left(\mathbb{E}\left[\left|\eta(\hat{S}_{T})-\eta(S_{T})\right|^{2}\right]\right)^{1/2}
\nonumber
\\
&\leq
\sqrt{C_{4}}L\left(\mathbb{E}\left|\hat{S}_{T}-S_{T}\right|^{2}\right)^{1/2}
=O\left(T^{3/4}\right),
\end{align}
as $T\rightarrow 0$. This upper bounds the second term in \eqref{two:terms:to:bound:3}.

Furthermore, the first term in \eqref{two:terms:to:bound:3} can be bounded as:
\begin{align}
&\left|\mathbb{E}\left[\left(\sqrt{\hat{V}_{T}}\eta(\hat{S}_{T})-K\right)^{+}1_{\hat{V}_{T}\geq\frac{V_{0}}{2}}\right]-\mathbb{E}\left[\left(\sqrt{V_{T}}\eta(\eta(\hat{S}_T))
-K\right)^{+}\right]\right|
\nonumber
\\
&\leq
\left|\mathbb{E}\left[\left(\sqrt{\hat{V}_{T}}\eta(\hat{S}_{T})-K\right)^{+}1_{\hat{V}_{T}\geq\frac{V_{0}}{2}}\right]-\mathbb{E}\left[\left(\sqrt{V_{T}}\eta(\hat{S}_{T})-K\right)^{+}1_{\hat{V}_{T}\geq\frac{V_{0}}{2}}\right]\right|
\nonumber
\\
&\qquad\qquad\qquad
+\mathbb{E}\left[\left(\sqrt{V_{T}}\eta(\hat{S}_{T})-K\right)^{+}1_{\hat{V}_{T}\leq\frac{V_{0}}{2}}\right].
\label{two:terms:to:bound:4}
\end{align}

\textbf{Step 4.}
To complete the upper bound on the first term in \eqref{two:terms:to:bound:3} in 
\textbf{Step 3}, we need to provide upper bounds
on the two terms in \eqref{two:terms:to:bound:4}.
In particular, we will use large deviations theory
to bound the second term in \eqref{two:terms:to:bound:4}
since $\left\{\hat{V}_{T}\leq\frac{V_{0}}{2}\right\}$ is a rare event
and we will use the $\frac{1}{2\sqrt{\kappa}}$-Lipschitz-continuity of $v\mapsto\sqrt{v}$ for any $v\geq\kappa>0$
to bound the first term in \eqref{two:terms:to:bound:4}.

Let us first bound the second term in \eqref{two:terms:to:bound:4}.
By Cauchy-Schwarz inequality, we can compute that
\begin{align}
\mathbb{E}\left[\left(\sqrt{V_{T}}\eta(\hat{S}_{T})
-K\right)^{+}1_{\hat{V}_{T}\leq\frac{V_{0}}{2}}\right]
\leq
\left(\mathbb{E}\left[\left(\sqrt{V_{T}}\eta(\hat{S}_T)
-K\right)^{2}\right]\right)^{1/2}
\mathbb{Q}\left(\hat{V}_{T}\leq\frac{V_{0}}{2}\right).
\end{align}
Note that by \eqref{V:s:upper:bound}
\begin{equation}
\mathbb{E}\left[\left(\sqrt{V_{T}}\eta(\hat{S}_{T})-K\right)^{2}\right]
\leq
2\mathbb{E}\left[V_{T}\eta^{2}(\hat{S}_{T})\right]
+2K^{2}
\leq
2M_{\eta}^{2}\mathbb{E}[V_{T}]
+2K^{2}
\leq
2M_{\eta}^{2}V_{0}e^{M_{\mu}T}
+2K^{2},
\end{equation}
and $\mathbb{Q}(\hat{V}_{T}\leq\frac{V_{0}}{2})=e^{-O(\frac{1}{T})}$
by the large deviations theory under Assumptions~\ref{assump:bounded} and \ref{assump:LDP} (see the proof of Theorem~\ref{Thm:E}).

Next, let us bound the first term in \eqref{two:terms:to:bound:4}.
Since $x\mapsto x^{+}$ is $1$-Lipschitz, we have
\begin{align}
&\left|\mathbb{E}\left[\left(\sqrt{\hat{V}_{T}}\eta(\hat{S}_{T})-K\right)^{+}1_{\hat{V}_{T}\geq\frac{V_{0}}{2}}\right]-\mathbb{E}\left[\left(\sqrt{V_{T}}\eta(\hat{S}_{T})-K\right)^{+}1_{\hat{V}_{T}\geq\frac{V_{0}}{2}}\right]\right|
\nonumber
\\
&\leq
\mathbb{E}\left[\left|\sqrt{\hat{V}_{T}}\eta(\hat{S}_{T})-\sqrt{V_{T}}\eta(\hat{S}_{T})\right|1_{\hat{V}_{T}\geq\frac{V_{0}}{2}}\right]
\nonumber
\\
&\leq
M_{\eta}\mathbb{E}\left[\left|\sqrt{\hat{V}_{T}}-\sqrt{V_{T}}\right|1_{\hat{V}_{T}\geq\frac{V_{0}}{2}}\right]
\nonumber
\\
&=M_{\eta}\mathbb{E}\left[\left|\sqrt{\hat{V}_{T}}-\sqrt{V_{T}}\right|1_{\hat{V}_{T}\geq\frac{V_{0}}{2}}1_{V_{T}\geq\frac{V_{0}}{2}}\right]
+M_{\eta}\mathbb{E}\left[\left|\sqrt{\hat{V}_{T}}-\sqrt{V_{T}}\right|1_{\hat{V}_{T}\geq\frac{V_{0}}{2}}1_{V_{T}<\frac{V_{0}}{2}}\right].
\end{align}
By using the similar argument before by applying Cauchy-Schwarz inequality and large deviations theory,
one can show that
\begin{equation}
M_{\eta}\mathbb{E}\left[\left|\sqrt{\hat{V}_{T}}-\sqrt{V_{T}}\right|1_{\hat{V}_{T}\geq\frac{V_{0}}{2}}1_{V_{T}<\frac{V_{0}}{2}}\right]=e^{-O(\frac{1}{T})},
\end{equation}
as $T\rightarrow 0$.
Moreover, since $x\mapsto\sqrt{x}$ is $\frac{1}{2\sqrt{\kappa}}$-Lipschitz
for any $x\geq\kappa$, we have
\begin{align}
M_{\eta}\mathbb{E}\left[\left|\sqrt{\hat{V}_{T}}-\sqrt{V_{T}}\right|1_{\hat{V}_{T}\geq\frac{V_{0}}{2}}1_{V_{T}\geq\frac{V_{0}}{2}}\right]
\leq
M_{\eta}
\frac{1}{2\sqrt{V_{0}/2}}\mathbb{E}\left|\hat{V}_{T}-V_{T}\right|=O(T),
\end{align}
as $T\rightarrow 0$.

\textbf{Step 5.} 
By combining the estimates in \textbf{Step 2}, \textbf{Step 3} and \textbf{Step 4}, we showed
that $\mathbb{E}\left[\left(\sqrt{V_{T}\eta(S_{T})}-K\right)^{+}\right]$
can be approximated by 
\begin{equation}\label{to:be:further:computed}
\mathbb{E}\left[\left(\sqrt{\hat{V}_{T}}\eta(\hat{S}_{T})-\sqrt{V_{0}}\eta(S_{0})\right)^{+}1_{\hat{V}_{T}\geq\frac{V_{0}}{2}}\right].
\end{equation}
Next, we focus on the computation on \eqref{to:be:further:computed}.
We will show that the term in \eqref{to:be:further:computed} 
can be approximated by
\begin{equation}
\mathbb{E}\left[\left(\eta(S_{0})\frac{1}{2\sqrt{V_{0}}}\sigma(V_{0})V_{0}Z_{T}
+\sqrt{V_{0}}\eta'(S_{0})\eta(S_{0})S_{0}\sqrt{V_{0}}\left(\sqrt{1-\rho^{2}}B_{T}+\rho Z_{T}\right)\right)^{+}\right],
\end{equation}
and we will make this rigorous via a few steps.

\textbf{Step 5(a).}
First, we will show that 
\begin{equation*}
\mathbb{E}\left[\left(\sqrt{\hat{V}_{T}}\eta(\hat{S}_{T})-\sqrt{V_{0}}\eta(S_{0})\right)^{+}1_{\hat{V}_{T}\geq\frac{V_{0}}{2}}\right] 
\end{equation*}
can be approximated by
\begin{align}
&\mathbb{E}\Bigg[\Bigg(\sqrt{V_{0}+\sigma(V_{0})V_{0}Z_{T}}\cdot\left(\eta(S_{0})+\eta'(S_{0})\eta(S_{0})S_{0}\sqrt{V_{0}}\left(\sqrt{1-\rho^{2}}B_{T}+\rho Z_{T}\right)\right)
-\sqrt{V_{0}}\eta(S_{0})\Bigg)^{+}
\nonumber
\\
&\qquad\qquad\qquad\qquad\qquad\qquad\cdot
1_{V_{0}+\sigma(V_{0})V_{0}Z_{T}\geq\frac{V_{0}}{2}}\Bigg].\label{term:approximated:by:1}
\end{align}
First, we can compute that
\begin{align}
&\mathbb{E}\left[\left(\sqrt{\hat{V}_{T}}\eta(\hat{S}_{T})-\sqrt{V_{0}}\eta(S_{0})\right)^{+}1_{\hat{V}_{T}\geq\frac{V_{0}}{2}}\right]
\nonumber
\\
&=\mathbb{E}\Bigg[\left(\sqrt{V_{0}+\sigma(V_{0})V_{0}Z_{T}}\cdot\eta\left(S_{0}+\eta(S_{0})S_{0}\sqrt{V_{0}}\left(\sqrt{1-\rho^{2}}B_{T}+\rho Z_{T}\right)\right)-\sqrt{V_{0}}\eta(S_{0})\right)^{+}\nonumber
\\
&\qquad\qquad\qquad\qquad\qquad\qquad\qquad\qquad\cdot
1_{V_{0}+\sigma(V_{0})V_{0}Z_{T}\geq\frac{V_{0}}{2}}\Bigg].
\end{align}
We recall the assumption that $\sup_{x\in\mathbb{R}}|\eta''(x)|<\infty$,
Moreover, $x\mapsto x^{+}$ is $1$-Lipschitz.
Therefore, there exists some $C>0$, such that
\begin{align}
&\Bigg|\mathbb{E}\Bigg[\left(\sqrt{V_{0}+\sigma(V_{0})V_{0}Z_{T}}\cdot\eta\left(S_{0}+\eta(S_{0})S_{0}\sqrt{V_{0}}\left(\sqrt{1-\rho^{2}}B_{T}+\rho Z_{T}\right)\right)-\sqrt{V_{0}}\eta(S_{0})\right)^{+}
\nonumber
\\
&\qquad\qquad\qquad\qquad\qquad\qquad\qquad\cdot 
1_{V_{0}+\sigma(V_{0})V_{0}Z_{T}\geq\frac{V_{0}}{2}}\Bigg]
\nonumber
\\
&\qquad
-\mathbb{E}\Bigg[\Bigg(\sqrt{V_{0}+\sigma(V_{0})V_{0}Z_{T}}\cdot\left(\eta(S_{0})+\eta'(S_{0})\eta(S_{0})S_{0}\sqrt{V_{0}}\left(\sqrt{1-\rho^{2}}B_{T}+\rho Z_{T}\right)\right)
\nonumber
\\
&\qquad\qquad\qquad-\sqrt{V_{0}}\eta(S_{0})\Bigg)^{+}1_{V_{0}+\sigma(V_{0})V_{0}Z_{T}\geq\frac{V_{0}}{2}}\Bigg]
\Bigg|
\nonumber
\\
&\leq
C\mathbb{E}\left[\sqrt{V_{0}+\sigma(V_{0})V_{0}Z_{T}}\left(\sqrt{1-\rho^{2}}B_{T}+\rho Z_{T}\right)^{2}1_{V_{0}+\sigma(V_{0})V_{0}Z_{T}\geq\frac{V_{0}}{2}}\right]
\nonumber
\\
&\leq
C\left(\mathbb{E}\left[(V_{0}+\sigma(V_{0})V_{0}Z_{T})1_{V_{0}+\sigma(V_{0})V_{0}Z_{T}\geq\frac{V_{0}}{2}}\right]\right)^{1/2}
\left(\mathbb{E}\left[\left(\sqrt{1-\rho^{2}}B_{T}+\rho Z_{T}\right)^{4}\right]\right)^{1/2}
\nonumber
\\
&\leq
C\left(\mathbb{E}\left[(V_{0}+\sigma(V_{0})V_{0}Z_{T})^{2}\right]\right)^{1/4}
\left(3T^{2}\right)^{1/2}
\nonumber
\\
&=C\left(V_{0}^{2}+\sigma^{2}(V_{0})V_{0}^{2}T\right)^{1/4}
\left(3T^{2}\right)^{1/2}=O(T),
\end{align}
as $T\rightarrow 0$.

\textbf{Step 5(b).}
Next, let us show that the term in \eqref{term:approximated:by:1}
can be approximated by
\begin{align}
&\mathbb{E}\Bigg[\Bigg(\left(\sqrt{V_{0}}+\frac{1}{2\sqrt{V_{0}}}\sigma(V_{0})V_{0}Z_{T}\right)\cdot\left(\eta(S_{0})+\eta'(S_{0})\eta(S_{0})S_{0}\sqrt{V_{0}}\left(\sqrt{1-\rho^{2}}B_{T}+\rho Z_{T}\right)\right)
\nonumber
\\
&\qquad\qquad\qquad-\sqrt{V_{0}}\eta(S_{0})\Bigg)^{+}1_{V_{0}+\sigma(V_{0})V_{0}Z_{T}\geq\frac{V_{0}}{2}}\Bigg].  
\label{term:approximated:by:2}
\end{align}

We notice that 
\begin{equation}
|(\sqrt{V})''|=\left|\frac{1}{4V^{3/2}}\right|\leq\frac{1}{4(V_{0}/2)^{3/2}},
\end{equation}
uniformly in $V\geq\frac{V_{0}}{2}$.
Moreover, $x\mapsto x^{+}$ is $1$-Lipschitz.
Therefore, there exists some $C'>0$ such that
\begin{align}
&\Bigg|\mathbb{E}\Bigg[\Bigg(\sqrt{V_{0}+\sigma(V_{0})V_{0}Z_{T}}\cdot\left(\eta(S_{0})+\eta'(S_{0})\eta(S_{0})S_{0}\sqrt{V_{0}}\left(\sqrt{1-\rho^{2}}B_{T}+\rho Z_{T}\right)\right)
\nonumber
\\
&\qquad\qquad\qquad-\sqrt{V_{0}}\eta(S_{0})\Bigg)^{+}1_{V_{0}+\sigma(V_{0})V_{0}Z_{T}\geq\frac{V_{0}}{2}}\Bigg]
\nonumber
\\
&\quad
-\mathbb{E}\Bigg[\Bigg(\left(\sqrt{V_{0}}+\frac{1}{2\sqrt{V_{0}}}\sigma(V_{0})V_{0}Z_{T}\right)\cdot\left(\eta(S_{0})+\eta'(S_{0})\eta(S_{0})S_{0}\sqrt{V_{0}}\left(\sqrt{1-\rho^{2}}B_{T}+\rho Z_{T}\right)\right)
\nonumber
\\
&\qquad\qquad\qquad-\sqrt{V_{0}}\eta(S_{0})\Bigg)^{+}1_{V_{0}+\sigma(V_{0})V_{0}Z_{T}\geq\frac{V_{0}}{2}}\Bigg]\Bigg|
\nonumber
\\
&\leq
C'\mathbb{E}\left[Z_{T}^{2}\left|\eta(S_{0})+\eta'(S_{0})\eta(S_{0})S_{0}\sqrt{V_{0}}\left(\sqrt{1-\rho^{2}}B_{T}+\rho Z_{T}\right)\right|
\right]
\nonumber
\\
&\leq
C'\left(\mathbb{E}\left[Z_{T}^{4}\right]\right)^{1/2}
\left(\mathbb{E}\left[\left|\eta(S_{0})+\eta'(S_{0})\eta(S_{0})S_{0}\sqrt{V_{0}}\left(\sqrt{1-\rho^{2}}B_{T}+\rho Z_{T}\right)\right|^{2}
\right]\right)^{1/2}
\nonumber
\\
&=C'(3T^{2})^{1/2}
\left(\eta^{2}(S_{0})+\left(\eta'(S_{0})\eta(S_{0})S_{0}\sqrt{V_{0}}\right)^{2}T\right)^{1/2}
\leq O(T),
\end{align}
as $T\rightarrow 0$.

\textbf{Step 5(c).}
Next, let us show that the term in \eqref{term:approximated:by:2}
can be approximated by
\begin{align}
&\mathbb{E}\Bigg[\left(\eta(S_{0})\frac{1}{2\sqrt{V_{0}}}\sigma(V_{0})V_{0}Z_{T}
+\sqrt{V_{0}}\eta'(S_{0})\eta(S_{0})S_{0}\sqrt{V_{0}}\left(\sqrt{1-\rho^{2}}B_{T}+\rho Z_{T}\right)\right)^{+}
\nonumber
\\
&\qquad\qquad\qquad\qquad\qquad\qquad\qquad\qquad\cdot
1_{V_{0}+\sigma(V_{0})V_{0}Z_{T}\geq\frac{V_{0}}{2}}\Bigg]. \label{term:approximated:by:3}   
\end{align}

By using the fact that $x\mapsto x^{+}$ is $1$-Lipschitz and the Cauchy-Schwarz inequality, we can compute that
\begin{align}
&\Bigg|\mathbb{E}\Bigg[\Bigg(\left(\sqrt{V_{0}}+\frac{1}{2\sqrt{V_{0}}}\sigma(V_{0})V_{0}Z_{T}\right)\cdot\left(\eta(S_{0})+\eta'(S_{0})\eta(S_{0})S_{0}\sqrt{V_{0}}\left(\sqrt{1-\rho^{2}}B_{T}+\rho Z_{T}\right)\right)
\nonumber
\\
&\qquad\qquad\qquad-\sqrt{V_{0}}\eta(S_{0})\Bigg)^{+}1_{V_{0}+\sigma(V_{0})V_{0}Z_{T}\geq\frac{V_{0}}{2}}\Bigg]
\nonumber
\\
&\qquad
-\mathbb{E}\Bigg[\left(\eta(S_{0})\frac{1}{2\sqrt{V_{0}}}\sigma(V_{0})V_{0}Z_{T}
+\sqrt{V_{0}}\eta'(S_{0})\eta(S_{0})S_{0}\sqrt{V_{0}}\left(\sqrt{1-\rho^{2}}B_{T}+\rho Z_{T}\right)\right)^{+}
\nonumber
\\
&\qquad\qquad\qquad\qquad\qquad\qquad\qquad\qquad\cdot
1_{V_{0}+\sigma(V_{0})V_{0}Z_{T}\geq\frac{V_{0}}{2}}\Bigg]\Bigg|
\nonumber
\\
&\leq
\mathbb{E}
\left[\left|\frac{1}{2\sqrt{V_{0}}}\sigma(V_{0})V_{0}Z_{T}\eta'(S_{0})\eta(S_{0})S_{0}\sqrt{V_{0}}\left(\sqrt{1-\rho^{2}}B_{T}+\rho Z_{T}\right)\right|\right]
\nonumber
\\
&\leq
\frac{1}{2\sqrt{V_{0}}}\sigma(V_{0})V_{0}\eta'(S_{0})\eta(S_{0})S_{0}\sqrt{V_{0}}
\left(\mathbb{E}[Z_{T}^{2}]\right)^{1/2}
\left(\mathbb{E}\left[\left(\sqrt{1-\rho^{2}}B_{T}+\rho Z_{T}\right)^{2}\right]\right)^{1/2}
\nonumber
\\
&=\frac{1}{2\sqrt{V_{0}}}\sigma(V_{0})V_{0}\eta'(S_{0})\eta(S_{0})S_{0}\sqrt{V_{0}}T=O(T),
\end{align}
as $T\rightarrow 0$.

\textbf{Step 5(d).}
Next, let us show that the term in \eqref{term:approximated:by:3}
can be approximated by
\begin{equation}
\mathbb{E}\left[\left(\eta(S_{0})\frac{1}{2\sqrt{V_{0}}}\sigma(V_{0})V_{0}Z_{T}
+\sqrt{V_{0}}\eta'(S_{0})\eta(S_{0})S_{0}\sqrt{V_{0}}\left(\sqrt{1-\rho^{2}}B_{T}+\rho Z_{T}\right)\right)^{+}\right].
\end{equation}
We can compute that
\begin{align}
&\Bigg|\mathbb{E}\Bigg[\left(\eta(S_{0})\frac{1}{2\sqrt{V_{0}}}\sigma(V_{0})V_{0}Z_{T}
+\sqrt{V_{0}}\eta'(S_{0})\eta(S_{0})S_{0}\sqrt{V_{0}}\left(\sqrt{1-\rho^{2}}B_{T}+\rho Z_{T}\right)\right)^{+}
\nonumber
\\
&\qquad\qquad\qquad\qquad\qquad\qquad\cdot
1_{V_{0}+\sigma(V_{0})V_{0}Z_{T}\geq\frac{V_{0}}{2}}\Bigg]
\nonumber
\\
&\qquad
-\mathbb{E}\left[\left(\eta(S_{0})\frac{1}{2\sqrt{V_{0}}}\sigma(V_{0})V_{0}Z_{T}
+\sqrt{V_{0}}\eta'(S_{0})\eta(S_{0})S_{0}\sqrt{V_{0}}\left(\sqrt{1-\rho^{2}}B_{T}+\rho Z_{T}\right)\right)^{+}\right]
\Bigg|
\nonumber
\\
&=\mathbb{E}\Bigg[\left(\eta(S_{0})\frac{1}{2\sqrt{V_{0}}}\sigma(V_{0})V_{0}Z_{T}
+\sqrt{V_{0}}\eta'(S_{0})\eta(S_{0})S_{0}\sqrt{V_{0}}\left(\sqrt{1-\rho^{2}}B_{T}+\rho Z_{T}\right)\right)^{+}
\nonumber
\\
&\qquad\qquad\qquad\qquad\qquad\qquad\cdot
1_{V_{0}+\sigma(V_{0})V_{0}Z_{T}<\frac{V_{0}}{2}}\Bigg]
\nonumber
\\
&\leq
\Bigg(\mathbb{E}\left[\left(\eta(S_{0})\frac{1}{2\sqrt{V_{0}}}\sigma(V_{0})V_{0}Z_{T}
+\sqrt{V_{0}}\eta'(S_{0})\eta(S_{0})S_{0}\sqrt{V_{0}}\left(\sqrt{1-\rho^{2}}B_{T}+\rho Z_{T}\right)\right)^{2}\right]\Bigg)^{1/2}
\nonumber
\\
&\qquad\qquad\qquad\qquad\qquad\cdot
\left(\mathbb{Q}\left(V_{0}+\sigma(V_{0})V_{0}Z_{T}<\frac{V_{0}}{2}\right)\right)^{1/2}
\nonumber
\\
&\leq
\Bigg(\left(\eta(S_{0})\frac{1}{2\sqrt{V_{0}}}\sigma(V_{0})V_{0}+\sqrt{V_{0}}\eta'(S_{0})\eta(S_{0})S_{0}\sqrt{V_{0}}\rho\right)^{2}T
\nonumber
\\
&\qquad\qquad
+\left(\sqrt{V_{0}}\eta'(S_{0})\eta(S_{0})S_{0}\sqrt{V_{0}}\sqrt{1-\rho^{2}}\right)^{2}T\Bigg)^{1/2}
\left(\mathbb{Q}\left(V_{0}+\sigma(V_{0})V_{0}Z_{T}<\frac{V_{0}}{2}\right)\right)^{1/2}.
\nonumber
\end{align}

By the large deviations theory, 
\begin{equation}
\mathbb{Q}\left(V_{0}+\sigma(V_{0})V_{0}Z_{T}<\frac{V_{0}}{2}\right)=e^{-O(\frac{1}{T})},
\end{equation}
as $T\rightarrow 0$.

Therefore, we conclude that
\begin{align}
&\Bigg|\mathbb{E}\Bigg[\left(\eta(S_{0})\frac{1}{2\sqrt{V_{0}}}\sigma(V_{0})V_{0}Z_{T}
+\sqrt{V_{0}}\eta'(S_{0})\eta(S_{0})S_{0}\sqrt{V_{0}}\left(\sqrt{1-\rho^{2}}B_{T}+\rho Z_{T}\right)\right)^{+}
\nonumber
\\
\nonumber
&\qquad\qquad\qquad\qquad\qquad\qquad\cdot
1_{V_{0}+\sigma(V_{0})V_{0}Z_{T}\geq\frac{V_{0}}{2}}\Bigg]
\nonumber
\\
&\qquad
-\mathbb{E}\left[\left(\eta(S_{0})\frac{1}{2\sqrt{V_{0}}}\sigma(V_{0})V_{0}Z_{T}
+\sqrt{V_{0}}\eta'(S_{0})\eta(S_{0})S_{0}\sqrt{V_{0}}\left(\sqrt{1-\rho^{2}}B_{T}+\rho Z_{T}\right)\right)^{+}\right]
\Bigg|
\nonumber
\\
&=e^{-O(1/T)},
\end{align}
as $T\rightarrow 0$.

\textbf{Step 6.}
From the previous steps, we conclude 
that $C_{V}(K,T)$ can be approximated by:
\begin{equation}
\mathbb{E}\left[\left(\eta(S_{0})\frac{1}{2\sqrt{V_{0}}}\sigma(V_{0})V_{0}Z_{T}
+\sqrt{V_{0}}\eta'(S_{0})\eta(S_{0})S_{0}\sqrt{V_{0}}\left(\sqrt{1-\rho^{2}}B_{T}+\rho Z_{T}\right)\right)^{+}\right].    
\end{equation}
Finally, we can compute that
\begin{align}
&\mathbb{E}\left[\left(\eta(S_{0})\frac{1}{2\sqrt{V_{0}}}\sigma(V_{0})V_{0}Z_{T}
+\sqrt{V_{0}}\eta'(S_{0})\eta(S_{0})S_{0}\sqrt{V_{0}}\left(\sqrt{1-\rho^{2}}B_{T}+\rho Z_{T}\right)\right)^{+}\right]
\nonumber
\\
&=\sqrt{\left((\eta(S_{0})\frac{\sigma(V_{0})V_{0}}{2\sqrt{V_{0}}}+\sqrt{V_{0}}\eta'(S_{0})\eta(S_{0})S_{0}\sqrt{V_{0}}\rho\right)^{2}+\left(\sqrt{V_{0}}\eta'(S_{0})\eta(S_{0})S_{0}\sqrt{V_{0}}\sqrt{1-\rho^{2}}\right)^{2}}
\nonumber
\\
&\qquad\qquad\qquad\qquad\cdot\sqrt{T}\mathbb{E}\left[X^{+}\right]
\nonumber
\\
&=\sqrt{\left((\eta(S_{0})\frac{1}{2}\sigma(V_{0})\sqrt{V_{0}}+\eta'(S_{0})\eta(S_{0})S_{0}V_{0}\rho\right)^{2}+\left(\eta'(S_{0})\eta(S_{0})S_{0}V_{0}\sqrt{1-\rho^{2}}\right)^{2}}
\sqrt{T}\frac{1}{\sqrt{2\pi}},
\end{align}
where $X\sim\mathcal{N}(0,1)$.

Hence, we conclude that for ATM VIX call options, with $K=\sqrt{V_{0}}\eta(S_{0})$,
\begin{align}
&\lim_{T\rightarrow 0}\frac{1}{\sqrt{T}}C_{V}(K,T)
\nonumber
\\
&=\frac{1}{\sqrt{2\pi}}\sqrt{\left((\eta(S_{0})\frac{1}{2}\sigma(V_{0})\sqrt{V_{0}}+\eta'(S_{0})\eta(S_{0})S_{0}V_{0}\rho\right)^{2}+\left(\eta'(S_{0})\eta(S_{0})S_{0}V_{0}\sqrt{1-\rho^{2}}\right)^{2}}.
\end{align}
This completes the proof.
\end{proof}


\section{Proofs for Section~\ref{sec:H}}\label{proofs:sec:H}

\begin{proof}[Proof of Proposition~\ref{prop:Hlognorm}]
The starting point is an alternative expression for the function $H(y,z)$.
An application of the contraction principle (see for example Theorem 4.2.1. in \cite{Dembo1998}, restated in Theorem~\ref{Contraction:Thm}) from large deviations theory shows that (\ref{Hgen}) is the rate function for the large deviation principle for $\mathbb{Q}\left(\left(\frac{1}{T}\int_0^T V_t dt, V_T\right)\in\cdot\right)$ so that
\begin{equation}\label{Hlim}
H(y,z) =- \lim_{\delta\rightarrow 0}\lim_{T\to 0} T \log \mathbb{Q} \left( \frac{1}{T} \int_0^T V_t dt \in (z-\delta,z+\delta), \log V_T \in(y-\delta,y+\delta)\right)\,.
\end{equation}
For the purpose of computing $H(y,z)$, it is sufficient to 
take $\mu(v)\equiv 0$, since $H(y,z)$ is independent of the drift term in the underlying SDE for $V_{t}$ process.
For this case we have $V_t = V_0 e^{\sigma Z_t - \frac12 \sigma^2 t}$, and the probability in (\ref{Hlim}) reduces to the joint distribution of the time average of the geometric Brownian motion and its terminal value.

A closed form expression for this joint distribution was given by Yor \cite{Yor} in terms of the Hartman-Watson distribution. Define $A_t^{(\mu)} = \int_0^t e^{2(B_s+\mu s)} ds$ where $B_s$ is a standard Brownian motion. Then we have \cite{Yor}, see also Theorem~4.1 in \cite{Matsumoto}
\begin{equation}\label{ProbYor}
\mathbb{Q}\left( \frac{1}{t} A_t^{(\mu)} \in da , B_t+\mu t \in dx
\right) = e^{\mu v - \frac12 \mu^2 t} e^{-\frac{1+e^{2x}}{2 a t}}
\theta_{\frac{e^x}{at}}(t)
\frac{da dx}{a}\,,
\end{equation}
where $\theta_r(t)$ is the Hartman-Watson integral defined by
\begin{equation}
\theta_r(t) = \frac{r}{\sqrt{2\pi^3 t}} e^{\frac{\pi^2}{2t}}
\int_0^\infty e^{-\frac{\xi^2}{2t}} e^{-r\cosh \xi} \sinh \xi 
\sin \frac{\pi\xi}{t} d\xi\,.
\end{equation}

The relation (\ref{ProbYor}) can be expressed alternatively as
\begin{equation}\label{ProbYor2}
\mathbb{Q}\left( \frac{1}{t} A_t^{(\mu)} \in da , e^{B_t+\mu t} \in dv
\right) = v^\mu e^{- \frac12 \mu^2 t} e^{-\frac{1+v^2}{2 a t}}
\theta_{\frac{v}{at}}(t)
\frac{da dv}{a v} := \Psi(a,v;t) da dv\,.
\end{equation}

Next we express the probability in (\ref{Hlim}) in terms of the function $\Psi(a,v;t)$ defined in (\ref{ProbYor2}). Using the scaling property of the standard Brownian motion we have
\begin{equation}\label{scaling}
\frac{1}{T} \int_0^T V_t dt = V_0 \frac{1}{\tau} A_\tau^{(-1)}\,,\quad
V_T = V_0 e^{2(B_\tau-\tau)}\,,\qquad \tau := \frac14 \sigma^2 T\,.
\end{equation}
We get that the probability in (\ref{Hlim}) is
\begin{equation}\label{PPsi}
\mathbb{Q} \left( \frac{1}{T} \int_0^T V_t dt \in dz, \log V_T \in dy\right) = \Psi\left( \frac{z}{V_0}, \sqrt{\frac{y}{V_0}} ;\tau\right) \frac{1}{2\sqrt{y/V_0}} dz dy\,.
\end{equation}

We use next the leading $t\to 0$ asymptotics of the Hartman-Watson integral $\theta_{\rho/t}(t)$ from Proposition 1 in \cite{Pirjol2020}
\begin{equation}
\theta_{\rho/t}(t) = \frac{1}{2\pi t} G(\rho) e^{-\frac{1}{t}(F(\rho) - \frac{\pi^2}{2})} ( 1 + O(t))\,,
\end{equation}
with $F(\rho),G(\rho)$ being known functions, and $F(\rho)$ given above in (\ref{Fdef}).
Substituting into (\ref{ProbYor2}) gives
\begin{equation}
\Psi(a,v;t) = \frac{1}{2\pi t} \psi(a,v;t) e^{-\frac{1}{t} (\frac{1+v^2}{2a} + F(v/a) - \frac{\pi^2}{2})} (1 + O(t))\,,
\end{equation}
with $\psi(a,v;t) = v^{\mu-1} a^{-1} e^{-\frac12\mu^2 t} G(v/a)$.
Thus we have
\begin{equation}\label{sub:into}
- \lim_{t\to 0} t \log \Psi(a,v;t) = \left(\frac{1+v^2}{2a} + F(v/a) - \frac{\pi^2}{2}\right) :=  \frac18 I(a,v)\,.
\end{equation}
Substituting \eqref{sub:into} into (\ref{PPsi}) we get
\begin{align}
H(y,z) &=  -\lim_{\delta\rightarrow 0}\lim_{T\to 0} T \log \mathbb{Q} \left( \frac{1}{T} \int_0^T V_t dt \in(z-\delta,z+\delta), \log V_T \in (y-\delta,y+\delta)\right)\nonumber
\\
&= - \frac{4}{\sigma^2} \lim_{\tau\to 0} \tau \log\Psi\left(z/V_0, \sqrt{y/V_0};\tau\right)\nonumber 
\\
&= \frac{1}{2\sigma^2} I\left(z/V_0,\sqrt{y/V_0}\right)\,,
\end{align}
which completes the proof of (\ref{Iuv}).
\end{proof}

\begin{proof}[Proof of Proposition~\ref{prop:Heston:2}]
This proposition follows directly from the G\"{a}rtner-Ellis theorem from large deviations theory; 
see e.g. \cite{Dembo1998}. 
In order to apply the G\"{a}rtner-Ellis theorem, we will show that the limit \eqref{cumulant:limit} exists
and compute it out explicitly as follows (so that it can be seen easily that the essential smoothness condition 
for the G\"{a}rtner-Ellis theorem is satisfied).

The expectation $M(T;\theta,\phi) := \mathbb{E}\left[e^{\theta \int_0^T V_t dt + \phi V_T}\right]$ can be computed exactly for the case of constant $\mu(v) \equiv r$.
Since the rate function $I_H(x,y)$ is independent of $\mu(v)$ (provided that $\mu(v)$ satisfies the technical assumptions required for the existence of the large deviations property), we will use a constant drift function $\mu(v)\equiv r$ to compute the cumulant function $\Lambda(\theta,\phi)$. 
This has the form
\begin{equation}
M(T;\theta,\phi) = e^{V_0 A(T;\theta, \phi)}\,,
\end{equation}
where the function $A(T;\theta,\phi)$ can be found in closed-form,
which can be extracted from the proof of Theorem~14 in \cite{PZAsianCEV}.

Using this result we get the following expression for the cumulant function 
\begin{align}\label{LambdaSol}
\Lambda_H(\theta,\phi) &= \lim_{T\to 0} T \log A\left(T;\frac{\theta}{T^2},\frac{\phi}{T}\right) 
\nonumber
\\
&=
\begin{cases}
\frac{\sqrt{2\theta}}{\sigma} \frac{\sqrt{2\theta} \tan(\frac{\sigma}{2}\sqrt{2\theta}) +
\sigma\phi}
{\sqrt{2\theta} - \sigma\phi \tan(\frac{\sigma}{2}\sqrt{2\theta}) }\,,
& 0 \leq \theta < \theta_c(\phi), \\
\frac{\sqrt{-2\theta}}{\sigma} \frac{\sigma\phi -\sqrt{-2\theta} \tanh(\frac{\sigma}{2}\sqrt{-2\theta}) }
{\sqrt{-2\theta} - \sigma\phi \tanh(\frac{\sigma}{2}\sqrt{-2\theta}) }\,,
& \theta < 0\,, 0 \leq \phi < \phi_c(\theta),
\end{cases}
\end{align}
where $\theta_c(\phi)$ and $\phi_c(\theta)$ are the boundary curves given by the solutions of the equation:
\begin{equation}
\frac{\sigma}{2} \sqrt{2\theta} T + \tan^{-1} \left( \frac{\sigma\phi}{\sqrt{2\theta}} \right)  =
\frac{\pi}{2}\,,
\end{equation}
or equivalently
\begin{equation}
\frac{\sigma \phi}{\sqrt{2\theta}} \tan\left( \frac{\sigma}{2} \sqrt{2\theta} T \right)= 1\,.
\end{equation}
This completes the proof.
\end{proof}

\begin{proof}[Proof of Proposition~\ref{prop:HHeston}]
The rate function is given by the double Legendre transform
\begin{equation}\label{Isol}
I_H(x,y) = \sup_{\theta,\phi} \left[\theta x + \phi y - \Lambda_H(\theta, \phi)\right]\,,
\end{equation}
where the cumulant function $\Lambda_H(\theta, \phi)$ is given in explicit form in (\ref{LambdaSol}).

Denote the minimizers of this problem as $\theta_*,\phi_*$. They are given by the solutions of the equations
\begin{equation}\label{eqstar}
\partial_\theta \Lambda(\theta,\phi) = x\,,\quad
\partial_\theta \Lambda(\theta,\phi) = y \,.
\end{equation}

One can expand the minimizers $\theta_*,\phi_*$ in powers of $\epsilon_{x,y}$ as
\begin{eqnarray}
&& \theta_* = a_{1,1} \epsilon_x + a_{1,2} \epsilon_y + O(\epsilon^2)\,, \\
&& \phi_* = b_{1,1} \epsilon_x + b_{1,2} \epsilon_y + O(\epsilon^2)\,.
\end{eqnarray}
The expansion of the cumulant function in powers of $(\theta,\phi)$ has the form
\begin{equation}
\Lambda(\theta,\phi) = \theta + \phi + \sigma^2 \left( \frac16 \theta^2 + \frac12 \theta
\phi + \frac12 \phi^2 \right) + O\left(\theta^3, \theta^2 \phi , \theta \phi^2, \phi^3\right)\,.
\end{equation}

The coefficients $a_{j,k}, b_{j,k}$ are determined by substituting their expansion into the equations (\ref{eqstar}), and expanding in powers of $\epsilon_{x,y}$. The first few terms are
\begin{eqnarray}
&& \sigma^2 \theta_* = 6(2\epsilon_x - \epsilon_y) - \frac{24}{5} \epsilon_x^2  + \frac{24}{5}
\epsilon_x \epsilon_y - \frac{11}{5} \epsilon_y^2 + O(\epsilon^3)\,, \\
&& \sigma^2 \phi_* = -2(3\epsilon_x - 2\epsilon_y) - \frac{3}{5} \epsilon_x^2  + \frac{8}{5}
\epsilon_x \epsilon_y - \frac{2}{5} \epsilon_y^2 + O(\epsilon^3) \,.
\end{eqnarray}

Substituting into the expression for the rate function (\ref{Isol}) gives an expansion 
in $(\epsilon_x,\epsilon_y)$. In order to get the expansion of the rate function up to and including terms of order $\epsilon^n$ with $n\geq 3$, one has to compute the expansion of $(\theta_*,\phi_*)$ in $\epsilon$ to the same order. In particular, obtaining the expansion to $O(\epsilon^4)$ in (\ref{IHexp}) requires the expansions of 
$(\theta_*,\phi_*)$ including the $O(\epsilon^4)$ terms.
\end{proof}

\section{Proofs for Section~\ref{sec:applications}}\label{proofs:sec:applications}

\begin{proof}[Proof of Proposition~\ref{prop:Elognorm}]

We would like to compute the expansion of the European rate function in powers of log-strike
\begin{equation}\label{JEexp}
J_E(K) = j_1^E k^2 + j_2^E k^3 + \cdots\,.
\end{equation}
The implied volatility option has the corresponding expansion
\begin{equation}\label{sigSPXexp}
\sigma_{\mathrm{BS}}(K) = \frac{|k|}{\sqrt{2J_E(K)}} = \frac{1}{\sqrt{2j_1^E}} - \frac{j_2^E}{2\sqrt2 (j_1^{E})^{3/2}} k + O(k^2)\,,
\end{equation}
where the first term is the ATM implied volatility,  the second term is the ATM skew, and so on,
and $k = \log\left(K/S_0\right)$ is the log-strike.

The problem was reduced to that of computing the rate function $J_E(K)$ for OTM European options.
This rate function is given by Theorem~\ref{Thm:E}. 
Using the explicit result for $H(y,z)$ for $\sigma(v) = \sigma$ in equation~\eqref{HLN}, the rate function has the form
\begin{equation}
J_E(K) = \inf_{y,z} \left\{
\frac{1}{2\rho_\perp^2 z} \left( \int_{S_0}^{K}
\frac{dx}{x\eta(x)} - \rho \int_{V_0}^{e^y} \frac{dx}{\sqrt{x} \sigma} \right)^2
+ \frac{1}{2\sigma^2} I\left( \frac{z}{V_0} , \frac{e^{y/2}}{\sqrt{V_0}} \right) \right\}\,.
\end{equation}

Let us introduce new notation
\begin{equation}\label{uvdef}
u := \frac{z}{V_0}\,,\qquad 
v := \frac{e^{y/2}}{\sqrt{V_0}}\,.
\end{equation}

The rate function becomes
\begin{equation}\label{JE2dim}
J_E(K) = \inf_{u,v} \left\{
\frac{1}{2\rho_\perp^2 V_0 u} \left( I_S(K/S_0) - \rho \frac{2\sqrt{V_0}}{\sigma}
(v-1) \right)^2
+ \frac{1}{2\sigma^2}I(u,v) \right\}\,,
\end{equation}
where we defined
\begin{equation}\label{ISdef}
I_S(z) := \int_{S_0}^{S_0 z}
\frac{dx}{x\eta(x)}\,.
\end{equation}


For an ATM European option we have $k=0$, and the infimum in (\ref{JE2dim}) is realized at $u_*=1, v_*=1$. This gives $J_V\left(K=\eta_0\sqrt{V_0}\right) = 0$.

The idea of the proof is to expand the minimizers in the extremal problem 
(\ref{JE2dim}) in powers of log-strike
\begin{eqnarray}\label{uvexp}
&& \log u_* = a_1^E k + a_2^E k^2 + \cdots\,, \\
&& \log v_* = b_1^E k + b_2^E k^2 + \cdots \,,\nonumber
\end{eqnarray}
and solve explicitly the coefficients $a_i^E, b_i^E$ at each order in $k$. We give the details only for the leading coefficient 
$j_1^E$ in the rate function $J_E(K)$, the higher order coefficients are obtained in a similar way.

We will parameterize the local volatility function $\eta(x)$ as an expansion in $\log(x/S_0)$
\begin{equation}\label{etaexp}
\eta(S) = \eta(S_{0}) + \eta_1 \log\left(\frac{S}{S_0}\right) + \eta_2 \log^2\left(\frac{S}{S_0}\right) + \cdots \,.
\end{equation}

We start by expanding the integral $I_S(z)$ defined in (\ref{ISdef}), as
\begin{equation}\label{ISexp}
I_S(z) = \frac{1}{\eta_0} \log z - \frac{\eta_1}{2\eta_0^2} \log^2 z + \frac13 
\left(\frac{\eta_1^2 }{\eta_0^3} - \frac{\eta_2}{\eta_0^2} \right) \log^3 z + \cdots\,,
\end{equation}
where $\eta_{0}:=\eta(S_{0})$.
This is obtained by expanding the integrand of $I_S(z)$ using (\ref{etaexp}) and integrating term-by-term.

We expand the argument of the extremal problem (\ref{JE2dim}) in powers of log-strike $k = \log\frac{K}{S_0}$. The leading order term in this expansion is
\begin{equation}
\Lambda_E(u,v) := \frac{1}{2\rho_\perp^2 V_0 u} \left(\frac{1}{\eta_0 }
k + O(k^2)  
 - \rho \frac{2\sqrt{V_0}}{\sigma}
(v-1) \right)^2
+ \frac{1}{2\sigma^2} I(u,v)\,. 
\end{equation}

We find the solutions of $\partial_u \Lambda_E(u,v)=0, \partial_v\Lambda_E(u,v)=0$ by
substituting here the expansions (\ref{uvexp}) and keeping only terms of the same power in $k$. At leading order, we get
\begin{equation}
\partial_u \Lambda_E(u,v)= \frac{12}{\sigma^2}\left(a_1^E - b_1^E\right) k + O(k^2) = 0\,.
\end{equation}
Requiring that the coefficient of the $O(k)$ term vanishes gives $a_1^E=b_1^E$.

Analogously,
\begin{equation}
\partial_v \Lambda_E(u,v)= \left( -\frac{4}{\sigma^2}\left(3a_1^E - 4 b_1^E\right) - \frac{2\rho}{\rho_\perp^2 \sigma\sqrt{V_0}}
\left( \frac{1}{\eta_0} - \frac{2b_1^E \rho \sqrt{V_0}}{\sigma} \right) \right) k + O(k^2) = 0\,,
\end{equation}
which gives a second equation for $a_1^E,b_1^E$.  

The solution of these equations for $\left(a_1^E, b_1^E\right)$ is
\begin{equation}\label{a1Esol}
a_1^E = b_1^E = \frac{\rho\sigma}{2\eta_0 \sqrt{V_0}}\,.
\end{equation}

The expansion of (\ref{JE2dim}) in powers of $k$ reads 
\begin{equation}
\Lambda_E(u_*(k),v_*(k)) = 
\left\{ 
\frac{1}{2\rho_\perp^2 V_0}
\left[ \frac{1}{\eta_0}  - \frac{2\rho\sqrt{V_0}}{\sigma} b_1^E \right]^2 + 
\frac{6 (a_1^{E})^2 - 12 a_1^E b_1^E  + 8 (b_1^{E})^2}{\sigma^2}
\right\} k^2
+ O(k^3)\,.
\end{equation}
Substituting here the solution (\ref{a1Esol}) for $(a_1^E,b_1^E)$ gives the leading order coefficient in the expansion of the rate function $J_E(K)$ 
\begin{equation}
j_1^E = \frac{1}{2 \eta_0^2 V_0}\,.
\end{equation}
This yields the stated result (\ref{SPXvolATM}) for the ATM European implied volatility.

This approach can be extended to higher orders in log-strike $k$ to compute the terms $j_i^E$ with $i=2,3,\ldots$. We get
\begin{equation}
j_2^E = - \frac{\rho \sigma + 2\eta_1 \sqrt{V_0}}{4\eta_0^3 V_0^{3/2}}\,,
\end{equation}
and
\begin{equation}
j_3^E = \frac{(15\rho^2-4) \sigma^2 + 36\eta_1 \rho \sigma \sqrt{V_0} + 4(11\eta_1^2 - 8\eta_0\eta_2) V_0}{96\eta_0^4 V_0^2} \,.
\end{equation}

Substituting into (\ref{sigSPXexp})
gives the higher order derivatives of the implied volatility at the ATM point (skew and convexity) quoted above in (\ref{SPXskewATM}) and (\ref{SPXcvxATM}).
This completes the proof.
\end{proof}

\begin{proof}[Proof of Proposition~\ref{prop:Vlognorm}]
Using the same approach as that used above for the European options, we 
compute the expansion of the VIX rate function around the ATM point
\begin{equation}\label{JVexp}
J_V(K) = j_1^V x^2 + j_2^V x^3 + O(x^4)\,,
\end{equation}
where $x = \log\left(K/(\eta(S_{0})\sqrt{V_0})\right)$.
The implied volatility of the VIX option has the expansion
\begin{equation}\label{sigVIXexp:2}
\sigma_{\mathrm{VIX}}(K) = \frac{|x|}{\sqrt{2J_V(K)}} = \frac{1}{\sqrt{2j_1^V}} - \frac{j_2^V}{(2j_1^V)^{3/2}} x + O(x^2)\,.
\end{equation}
The first term is the ATM VIX implied volatility and the second term is the ATM VIX skew. 

The rate function for VIX options $J_V(K)$ is given by Theorem~\ref{Thm:VIX}. 
Using the explicit result for $H(y,z)$ for $\sigma(v)\equiv\sigma$ in equation~\eqref{HLN}, the rate function has the form
\begin{equation}
J_V(K) = \inf_{y,z} \left\{
\frac{1}{2\rho_\perp^2 z} \left( \int_{S_0}^{(\eta^2)^{-1}(e^{-y} K^2)}
\frac{dx}{x\eta(x)} - \rho \int_{V_0}^{e^y} \frac{dx}{\sqrt{x} \sigma} \right)^2
+ \frac{1}{2\sigma^2} I\left( \frac{z}{V_0} , \frac{e^{y/2}}{\sqrt{V_0}} \right) \right\}\,.
\end{equation}

Changing variables to $(u,v)$ defined as in (\ref{uvdef}), the rate function becomes
\begin{equation}\label{JV2dim}
J_V(K) = \inf_{u,v} \left\{
\frac{1}{2\rho_\perp^2 V_0 u} \left( I_S(\zeta(K,v)) - \rho \frac{2\sqrt{V_0}}{\sigma}
(v-1) \right)^2
+ \frac{1}{2\sigma^2} I(u , v) \right\}\,,
\end{equation}
where $I_S(z)$ was defined above in (\ref{ISdef}),
and $\zeta(K,v)$ is the solution of the equation
\begin{equation}\label{zetaeq}
\frac{K^2}{V_0} \frac{1}{v^2} =\eta^2(S_0 \zeta(K,v))\,.
\end{equation}

The proof parallels closely the proof of Proposition~\ref{prop:Elognorm} so we give only the proof outline. 

For an ATM VIX option $x=0$, the infimum in (\ref{JV2dim}) is realized at $u_*=1, v_*=1$. This gives $J_V(K=\eta_0\sqrt{V_0}) = 0$.
We expand the minimizers in the extremal problem 
(\ref{JV2dim}) in powers of log-strike
\begin{align}
& \log u_* = a_1^V x + a_2^V x^2 + \cdots\,, \nonumber\\
& \log v_* = b_1^V x + b_2^V x^2 + \cdots \,,\label{uvexpV}
\end{align}
and solve for the coefficients $a_i^V, b_i^V$ at each order in $x$.

The solution of the equation (\ref{zetaeq}) gives an expansion for $\zeta(K,v)$ of the form
\begin{equation}
\log\zeta(K,v) = \frac{\eta_0}{\eta_1} \left( \frac{e^x}{v} - 1\right) - \frac{\eta_2\eta_0^2}{\eta_1^3}\left( \frac{e^x}{v} - 1\right)^2 + \cdots\,.
\end{equation}
This can be substituted into the expansion of $I_S(z)$ in (\ref{ISexp}) to get an expansion of $I_S(\zeta(K,v))$ in 
powers of $( \frac{e^x}{v} - 1)$.

Keeping only the leading order term in this expansion, the argument of (\ref{JV2dim}) becomes
\begin{equation}
\Lambda_V(u,v) := \frac{1}{2\rho_\perp^2 V_0 u} \left(\frac{\eta_0}{\eta_1 }
\left( \frac{e^x}{v} - 1\right) + \cdots 
 - \rho \frac{2\sqrt{V_0}}{\sigma}
(v-1) \right)^2
+ \frac{1}{2\sigma^2} I( u , v) \,.
\end{equation}

We find the solutions of $\partial_u \Lambda_V(u,v)=0, \partial_v\Lambda_V(u,v)=0$ by
substituting here the expansions (\ref{uvexpV}) and keeping only terms of the same power in $x$. 
We have
\begin{equation}\label{a1solV}
a_1^V = b_1^V = \sigma \frac{\sigma + 2\rho \eta_1 \sqrt{V_0}}{(\sigma + 2\rho \eta_1 \sqrt{V_0})^2 + 2\rho_\perp^2 \eta_1^2 V_0}\,.
\end{equation}

The expansion of (\ref{JV2dim}) in powers of $x$ reads 
\begin{equation}
\Lambda_V(u_*(x),v_*(x)) = \left\{ 
\frac{1}{2\rho_\perp^2 V_0}
\left[ \frac{1}{\eta_1} \left(1-a_1^V\right) - \frac{2\rho\sqrt{V_0}}{\sigma} a_1^V \right]^2 + 
\frac{4 (a_1^{V})^2}{\sigma^2}
\right\} x^2 + O(x^3)\,.
\end{equation}
Substituting here the solution (\ref{a1solV}) for $a_1^V$ gives 
the leading order coefficient in the VIX rate function 
\begin{equation}
j_1^V = \frac{2}{(\sigma + 2\rho \eta_1 \sqrt{V_0})^2 + 4  \rho_\perp^2 \eta_1^2 V_0}\,.
\end{equation}
Substituting into (\ref{sigVIXexp:2})  gives the stated result for the ATM VIX implied volatility. 

The ATM VIX skew requires the coefficient $j_2^V$ which is found by expanding to order $O(x^2)$ and solving for $a_2^V,b_2^V$. The result is
\begin{equation}
j_2^V =-\frac{4V_0 (2\eta_1 \sqrt{V_0} + \rho \sigma)}{(\sigma + 2\rho \eta_1 \sqrt{V_0})^2 + 4  \rho_\perp^2 \eta_1^2 V_0}
\left( \eta_1(8\eta_0\eta_2 V_0 + \sigma^2) + 2 (2\eta_0 \eta_2 +\eta_1^2) \rho \sigma \sqrt{V_0}\right)\,.
\end{equation}
Substituting into the second term of (\ref{sigVIXexp:2}) gives the stated result for the ATM VIX skew. The convexity $\kappa_{VIX}$ requires also the coefficient $j_3^V$ which we do not give in complete form due to the lengthy expression.
This completes the proof.
\end{proof}


\begin{proof}[Proof of Proposition~\ref{prop:S:moments}]
For any $p>1$, by the Cauchy-Schwarz inequality, we can compute that
\begin{align}
\mathbb{E}[S_{t}^{p}]
&=\mathbb{E}\left[S_{0}^{p}e^{\int_{0}^{t}(p(r-q)-\frac{p}{2}\eta^{2}(S_{u})V_{u})du+\int_{0}^{t}p\eta(S_{u})\sqrt{V_{u}}dW_{u}}\right]
\nonumber
\\
&\leq
S_{0}^{p}
\left(\mathbb{E}\left[e^{\int_{0}^{t}(2p(r-q)+(-p+2p^{2})\eta^{2}(S_{u})V_{u})du}\right]\right)^{1/2}
\nonumber
\\
&\qquad\qquad\qquad\cdot
\left(\mathbb{E}\left[e^{\int_{0}^{t}\frac{-(2p)^{2}}{2}\eta^{2}(S_{u})V_{u}du+\int_{0}^{t}2p\eta(S_{u})\sqrt{V_{u}}dW_{u}}\right]\right)^{1/2}.
\end{align}
Notice that $e^{\int_{0}^{t}\frac{-(2p)^{2}}{2}\eta^{2}(S_{u})V_{u}du+\int_{0}^{t}2p\eta(S_{u})\sqrt{V_{u}}dW_{u}}$ is a non-negative
local martingale. Since any local martingale that is bounded from below is a supermartingale, we conclude that
for any $p>1$:
\begin{align}
\mathbb{E}[S_{t}^{p}]
&\leq
S_{0}^{p}
\left(\mathbb{E}\left[e^{\int_{0}^{t}(2p(r-q)+(-p+2p^{2})\eta^{2}(S_{u})V_{u})du}\right]\right)^{1/2}
\nonumber
\\
&\leq
S_{0}^{p}e^{p|r-q|t}
\left(\mathbb{E}\left[e^{\int_{0}^{t}(-p+2p^{2})M_{\eta}^{2}V_{u}du}\right]\right)^{1/2} < \infty \,,
\end{align}
which implies that
\begin{equation}
\max_{0\leq t\leq T}\mathbb{E}[S_{t}^{p}]
\leq
S_{0}^{p}e^{p|r-q|T}
\left(\mathbb{E}\left[e^{\int_{0}^{T}(-p+2p^{2})M_{\eta}^{2}V_{u}du}\right]\right)^{1/2}
=O(1),
\end{equation}
as $T\rightarrow 0$ where we applied Assumption~\ref{assump:V:integral}.
This concludes the proof.
\end{proof}


\begin{proof}[Proof of Proposition~\ref{prop:EHeston}]
The proof is similar to that of Proposition~\ref{prop:Elognorm} apart from the use of the function $H(y,z)$ for the Heston-type model.
\end{proof}

\begin{proof}[Proof of Proposition~\ref{prop:VHeston}]
The proof is similar to that of Proposition~\ref{prop:Vlognorm} apart from the use of the function $H(y,z)$ for the Heston-type model.
\end{proof}

\section{At-the-Money Convexity for the VIX Options}
\label{app:VIXcvx}

We give in this Appendix the full result for the ATM VIX implied volatility convexity in the local-stochastic volatility model with SABR-type volatility quoted in Proposition~\ref{prop:Vlognorm}. 
This is defined as the coefficient of the
quadratic term in expansion of the VIX implied volatility in log-strike $x$
\begin{equation}
\sigma_{\mathrm{VIX}}(x) = \sigma_{\mathrm{VIX},\mathrm{ATM}} + s_{\mathrm{VIX}} x + \kappa_{\mathrm{VIX}} x^2 + O(x^3)\,,
\end{equation}
and has the explicit result
\begin{equation}\label{VIXcvx}
\kappa_{\mathrm{VIX}} = \frac{1}{6} \frac{\sqrt{V_0}}{(\sigma^2 + 4\eta_1 \rho \sigma \sqrt{V_0} + 4\eta_1^2 V_0)^{7/2}} K_{\mathrm{VIX}}\,,
\end{equation}
where $K_{\mathrm{VIX}}$ is given by
\begin{equation}
K_{\mathrm{VIX}} := \sum_{i=0}^7 k_i \sigma^i\,,
\end{equation}
where the coefficients $k_j$ are
\begin{align}
&k_0 := 256 
\eta_0 \eta_1^4 V_0^{7/2}
(\eta_1^2 \eta_2 - 3\eta_0 \eta_2^2 + 3\eta_0 \eta_1 \eta_3)\,, 
\\
&k_1 := 128 \eta_0 \eta_1^3 \rho V_0^{3}
(15 \eta_0 \eta_1 \eta_3 -  12\eta_0 \eta_2^2 + 5 \eta_1^2 \eta_2)\,,  
\\
&k_2 := 16 \eta_1^2 V_0^{5/2} \Big(12 \eta_0^2 \eta_1 \eta_3 (9\rho^2 + 1)    
+ 24 \eta_0^2\eta_2^2 (1-4\rho^2) \nonumber
\\
&\qquad\qquad\qquad\qquad\qquad\qquad
+ 4\eta_0 \eta_1^2 \eta_2 (15\rho^2-2) + \eta_1^4 (2 - 3\rho^2) \Big)\,,
\\
&k_3 := 16 \eta_1 \rho V_0^{2} \left(6 \eta_0^2 \eta_1 \eta_3 (7 \rho^2 + 3)
+ 6 \eta_0^2 \eta_2^2 (4 -8\rho^2) 
+ 4 \eta_0 \eta_1^2 \eta_2 (8\rho^2 + 3) - \eta_1^4 \rho^2 \right)\,, 
\\
&k_4 := 4 V_0^{3/2} \Big(12 \eta_0^2 \eta_1 \eta_3 \rho^2 (2\rho^2 + 3) 
 + 12 \eta_0^2 \eta_2^2 \rho^2 (2 - 3\rho^2) 
 \nonumber
 \\
 &\qquad\qquad\qquad\qquad\qquad\qquad
 + 4\eta_0 \eta_1^2 \eta_2 (5\rho^4 + 12 \rho^2  + 6) 
  - \eta_1^4 ( \rho^4 - 6\rho^2 + 3) \Big)\,,
\\
&k_5 := 4\rho V_0 \left( 6 \eta_0^2 \eta_3 \rho^2 + 2 \eta_0 \eta_1 \eta_2 (4\rho^2 + 9) + \eta_1^3 (\rho^2+3) \right) \,,
\\
&k_6 := \sqrt{V_0} \left( 12 \eta_0 \eta_2 \rho^2+ \eta_1^2 (3\rho^2+4)  \right)\,, 
\end{align}
and $k_7 := \eta_1 \rho$.


\bibliographystyle{plain}
\bibliography{ShortMaturityVIX}


\end{document}